\documentclass[journal]{IEEEtran}
\usepackage{cite}
\usepackage{enumerate}
\usepackage{amsmath}
\usepackage{amsfonts}
\usepackage{mathrsfs}
\usepackage{amssymb}
\usepackage{cases}
\usepackage{stmaryrd}
\usepackage{mathrsfs}
\usepackage{bbm}
\usepackage{algorithm}
\usepackage{algorithmic} 
\usepackage{multirow}
\usepackage{graphicx}
\usepackage{float}
\usepackage{subfig}
\usepackage{bm}
\usepackage{color}
\usepackage{setspace} 
\usepackage[acronym]{glossaries}
\usepackage{booktabs}
\usepackage[colorlinks]{hyperref}
\usepackage{threeparttable}
\usepackage{multirow}
\usepackage{setspace}
\usepackage{makecell}
\usepackage[table]{xcolor}
\usepackage[font=footnotesize]{caption} 
\newcommand{\INPUT}{\item[\textbf{Input:}]}
\newcommand{\OUTPUT}{\item[\textbf{Output:}]}
\usepackage{pdflscape} 
\newtheorem{lemma}{Lemma}
\newtheorem{theorem}{Theorem}

\newacronym{BS}{BS}{base station}
\newacronym{MIMO}{MIMO}{multiple-input multiple-output}
\newacronym{AI}{AI}{artificial intelligence}
\newacronym{UPA}{UPA}{uniform planar array}
\newacronym{ULA}{ULA}{uniform linear array}
\newacronym{CSCG}{CSCG}{circularly symmetric complex Gaussian}
\newacronym{THz}{THz}{terahertz}
\newacronym{CSI}{CSI}{channel state information}
\newacronym{OFDM}{OFDM}{orthogonal frequency-division multiplexing}
\newacronym{ISAC}{ISAC}{integrated sensing and communication}
\newacronym{RF}{RF}{radio frequency}
\newacronym{LOS}{LoS}{ line-of-sight}
\newacronym{NLoS}{NLoS}{non-line-of-sight}
\newacronym{AWGN}{AWGN}{additive white Gaussian noise}
\newacronym{PS}{PS}{phase shift}
\newacronym{TTD}{TTD}{true time delayer}
\newacronym{RCS}{RCS}{radar cross-section}
\newacronym{FIM}{FIM}{Fisher information matrix}
\newacronym{GNN}{GNN}{graph neural network}
\newacronym{ELAA}{ELAA}{extremely large-scale antenna array}
\newacronym{DoF}{DoF}{degrees-of-freedom}
\newacronym{CRB}{CRB}{Cram\'{e}r-Rao bound}
\newacronym{PCRB}{PCRB}{posterior Cram\'er--Rao bound}
\definecolor{kwan}{rgb}{0,0,1}

\newcommand{\tabincell}[2]{\begin{tabular}{@{}#1@{}}#2\end{tabular}}

\begin{document}
	\title{Secure Cooperative THz ISAC via Mamba Empowered Graph Neural Network Precoding
		\author{Chao Wang,~\IEEEmembership{Senior Member,~IEEE,}
				Zan Li,~\IEEEmembership{Fellow,~IEEE,}
				Xiangnan Zhou,
				Haibin Zhang,
				Hao Xu,
				Liang Jin, \\
			and	Derrick Wing Kwan Ng,~\IEEEmembership{Fellow, IEEE}
			
			\vspace{-.36in}
			\thanks{Part of this paper was presented at the 2025 IEEE Global Communications Conference (GLOBECOM'25) \cite{GLOBECOM25}}
			\thanks{The work is supported by National Natural Science Foundation of China under Grant 62371357, Funds for International Cooperation and Exchange of the National Natural Science Foundation of China under Grant 12411530120, Key
				Research and Development Program of Shaanxi(ProgramNo.2025GHYBXM048), National Science Fund for Distinguished Young Scholars of China under
				Grant 62121001, Major National Science and Technology Project of China
				(Grant No. 20252D1303100), National Natural Science Foundation of China
				under Grant U22A2001. (\textit{Corresponding authors: Chao Wang and Zan Li})}
	\thanks{C. Wang and Z. Li are  with the Integrated Service Networks Lab, Xidian University, Xi'an 710071, China (e-mail: drchaowang@126.com).}
\thanks{X. N. Zhou and H. B. Zhang are with the School of Cyber Engineering,
	Xidian University, Xi'an 710071, China (e-mail: liguo@stu.xidian.edu.cn)}
\thanks{H. Xu is with the National Mobile Communications Research Laboratory,
Southeast University, Nanjing 210096, China (e-mail: hao.xu@seu.edu.cn).}
\thanks{L. Jin is with Information Engineering University, Zhengzhou, 450001,
	and also with Peng Cheng Laboratory, Shenzhen 518066, China (e-mail:
	liangjin@263.net).}
\thanks{D. W. K. Ng is with the School of Electrical Engineering and Telecommunications, University of New South Wales, Sydney, 2052, Australia (w.k.ng@unsw.edu.au).}

		}

	}
	
	\maketitle
	
	\begin{abstract}
The terahertz (THz) band offers abundant  spectrum resources for high-throughput communication and ultra high-precision localization. This paper investigates secure communication in cooperative THz orthogonal frequency-division multiplexing (OFDM) bistatic integrated sensing and communications (ISAC) systems, where multiple base stations (BSs) equipped with extremely large-scale antenna arrays (ELAAs) collaboratively serve downlink users while concurrently locating multiple targets. Malicious targets are assumed to act as potential eavesdroppers attempting to intercept confidential information intended for legitimate users.
To mitigate these threats, we formulate a joint optimization problem for analog beamforming, digital precoding, true-time delayers (TTDs), and sensing signal covariance matrix design. The objective is to maximize the minimum secrecy rate subject to Cramér-Rao bound (CRB) constraints that ensure localization accuracy. This problem is highly challenging due to the non-convex CRB constraint, strongly coupled variables, high computational complexity from ELAA, and near-field channel modeling.
To address these challenges, we propose a novel data-driven framework that integrates \glspl{GNN} with the Mamba architecture. Our proposed framework first encodes the interactions among users, targets, and BSs into a heterogeneous graph and then employs message passing to optimize vertex features.  The Mamba blocks further enhance this process through their selection mechanism and state space modeling capabilities, enabling dynamic and context-aware optimization of beamforming, TTD configurations, and sensing parameters. {Numerical simulations validate that the proposed method outperforms both conventional and learning-based baselines, while offering high computational efficiency and strong generalization across different network conditions.}
	\end{abstract}
	
	\begin{IEEEkeywords}
		Secure communication, orthogonal frequency-division multiplexing, integrated sensing and communications,  graph neural network, extremely large-scale antenna arrays.
	\end{IEEEkeywords}
	
	\IEEEpeerreviewmaketitle
	
	\vspace{-2mm}
	\section{Introduction}

Anticipated to emerge around 2030, six-generation (6G) wireless networks will build upon fifth-generation (5G) technology to deliver significant performance enhancements \cite{6G_Vision}. These include peak data rates exceeding 100 Gb/s (an order of magnitude faster than 5G), a tripling of spectral efficiency, and support for connection densities of 10 million devices per square kilometer. To realize these capabilities, 6G must utilize the higher-frequency \gls{THz} spectrum to overcome the inherent spectral scarcity of sub-6 GHz bands, though this introduces significant propagation challenges.

\gls{THz} communications encounter significant challenges, including severe propagation losses due to atmospheric absorption and free-space path loss, which significantly restricts transmission distances. Additionally, the ultra-broad bandwidth of \gls{THz} signals suffer from pronounced multipath fading and strong frequency selectivity. To overcome these issues, \gls{ELAA} and \gls{OFDM} have emerged as key enabling technologies. Indeed, recent research has focused on THz signal processing techniques, including \gls{ELAA} channel modeling \cite{ChannelModeling}, beamforming optimization \cite{ELAA_Beamforming}, mitigation of beam split across \gls{OFDM} subcarriers \cite{Delay_Phase_TTD}, and \gls{ISAC} implementations \cite{THZ_ISAC_Survey,NearField_ISAC,10477314}.
Notably, as antenna dimensions scale to \gls{ELAA} configurations, the increased Fraunhofer distance places both downlink users and radar targets in the radiating near-field (Fresnel) region, where spherical-wave propagation models  replace conventional plane-wave approximations \cite{ChannelModeling}.
Furthermore, the wide bandwidth of THz  leads to significant beam-split effects, where different frequency components of a signal are steered in divergent directions. To mitigate this, \cite{Delay_Phase_TTD} proposed delay-phase precoding, which employs both \gls{TTD} and phase shifters to effectively compensate for beam split.

Among these enabling techniques, THz \gls{ISAC} stands out as a transformative paradigm that seamlessly merges high-speed communication with ultra-precise sensing, thus driving innovations for embracing 6G \cite{THZ_ISAC_Survey,b6}. 
Such dual functionality unlocks critical applications ranging from low-altitude economy operations to intelligent transportation systems \cite{ISAC_Survey}. 
Research on \gls{ISAC} has grown exponentially recently, with most studies focusing on \gls{MIMO} techniques aimed at optimizing spatial signal patterns for enhanced \gls{ISAC} performance across diverse scenarios \cite{THz-ISAC-PHY}.

Despite various efforts have been devoted, the practical deployment of \gls{ISAC} faces significant challenges, particularly inter-cell interference and the severe two-hop path loss in target sensing, which limit system performance~\cite{Network_ISAC2}. 
To effectively address these issues, the authors of~\cite{Network_ISAC2} proposed novel network-\gls{ISAC} architectures that rely on multi-\gls{BS} cooperation to enhance the performance, and investigated the allocation of antennas among \glspl{BS} to optimize cooperative \gls{ISAC} performance. 
{
However, beyond the performance benefits brought by cooperative multi-\gls{BS} architectures, \gls{ISAC} also raises important secrecy concerns. These concerns fundamentally stem from the reuse of communication signals for sensing~\cite{10443616,ISAC_security,Secrecy_ISAC,Covert_ISAC}, through which the communication component may be received outside the intended users and thus cause information leakage~\cite{ISAC_security}. In cooperative multi-\gls{BS} settings, this risk can become even more severe due to the wider signal coverage. More importantly, sensing targets are often located in strongly illuminated regions and may, if malicious, act as potential eavesdroppers, thereby giving rise to a distinctive security challenge in \gls{ISAC} and necessitating a tradeoff between sensing performance and secure communication.}
In response to these security challenges, a growing body of work has investigated secure \gls{ISAC} design from different perspectives \cite{11322745}. For example, \cite{ISAC_security} proposed a symbol-level precoding scheme enabled by simultaneously transmitting and reflecting reconfigurable intelligent surfaces (STAR-RIS) to secure information transmission and target sensing. \cite{Covert_ISAC} studied covert \gls{ISAC} design for protecting information transmission against malicious detection. \cite{NearFieldISACSecletter} proposed a near-field secure \gls{ISAC} system architecture that performs target detection and tracking while achieving secure communications. In addition, \cite{PLS_ISAC} reviewed recent research on \gls{ISAC} system design for protecting confidential information. 
{It is worth noting that \gls{THz} communication's extreme frequency enables massive antenna arrays to produce ultra-narrow, pencil-like beams, while its high atmospheric absorption and susceptibility to blockage by walls or people naturally confine signals to short-range, line-of-sight paths, which physically prevent remote eavesdropping. These inherent physical-layer characteristics make \gls{THz} uniquely suited for secure, high-speed wireless links \cite{10227529}. Accordingly, a growing number of studies have explored the potential of \gls{THz} in securing \gls{ISAC} \cite{GLOBECOM25,10443616}. First, \gls{THz} frequencies enhance the sensing capability of ISAC systems by enabling ultra-high resolution, fine Doppler sensitivity, and precise target localization, thanks to their extremely short wavelength and large available bandwidth. Second, \gls{THz} inherently protects information from wiretapping through its physical properties.} 
{Despite extensive research on \gls{ISAC} security, the authors find that a comprehensive framework jointly encompassing the distinctive features of \gls{THz}-\gls{ISAC}, such as near-field effects, beam-splitting, information security, and cooperative design, remains largely unexplored (as summarized in Table I). Furthermore, traditional optimization approaches for wireless systems are ill-suited for \gls{THz}-\gls{ISAC}. For example, the work of \cite{10443616} proposed optimization-based approaches for \gls{THz}-\gls{ISAC}, whose computational complexity becomes prohibitive for massive antenna arrays. More seriously, the computational burden becomes unbearable for cooperative \gls{THz}-\gls{ISAC}.}
Consequently, \gls{ISAC} transforms wireless networks into versatile multi-taskers capable of handling complex scenarios that lie beyond the reach of conventional network designs and demand more intelligent solutions. 
	\begin{figure}
	\vspace{-4mm}
	\centering
	\includegraphics[width=3.6in]{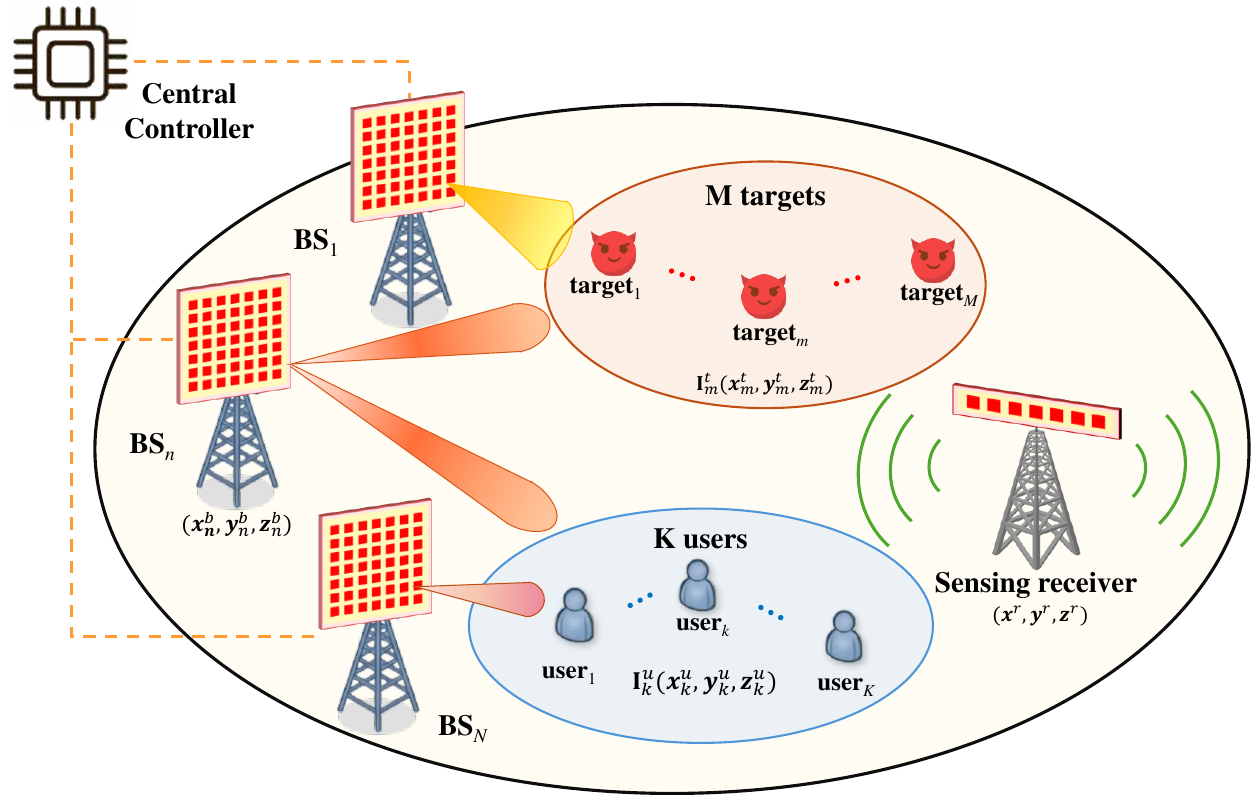}
	\caption{The proposed ISAC system model, with $N$ BSs, a sensing receiver, $M$ targets, and $K$ users.}\label{system}
\end{figure}


	{ In contrast to 5G, where \gls{AI} served as an auxiliary enhancement \cite{10599127,11214530}, 6G is envisioned to possess native intelligence as a core functionality and defining attribute~\cite{AI-native-6G}. 
	Accordingly, recent research has increasingly focused on integrating \gls{AI} with \gls{ISAC} systems~\cite{Learning-ISAC1,Learning-ISAC2,Learning-ISAC3}. 
	Notably, \cite{Learning-ISAC1} developed two deep learning algorithms to address symbol-level precoding challenges in \gls{ISAC} waveform design, while \cite{Learning-ISAC2} proposed a low-complexity deep unfolding approach for \gls{ISAC} transceiver design in cluttered environments. 
	By replacing iterative online optimization with a single forward inference process, these  methods can substantially reduce the computational burden compared with conventional optimization approaches. 
	However, existing research works relied on deep learning to optimize \gls{ISAC} precoders, which often suffer from limited generalization in dynamic network topologies due to their fixed input/output dimensions and topology sensitivity. 
	In this context, \glspl{GNN} emerge as a promising solution, having demonstrated unique effectiveness in graph-related applications while offering the adaptability needed for heterogeneous network architectures \cite{GNN_Wireless}.  
	Specifically, through structured graph representations, \glspl{GNN} explicitly capture the underlying topological relationships within wireless networks and leverage iterative message passing to learn expressive representations that reflect both local interference patterns and network-level spatial structures~\cite{permutation2}. Furthermore, the inherent permutation invariance and inductive bias of \glspl{GNN} endow the models with superior scalability, allowing them to generalize well across varying spatial distributions ~\cite{permutation1,attention}.
	Building upon these inherent advantages, recent works have  applied \glspl{GNN} to the \gls{ISAC}.
	For example, \cite{GNN_ISAC1} employed a \gls{GNN} with an implicit projection framework to optimize radar transmit beampatterns under communication constraints, thereby improving scalability. 
	}

\begin{table}[!t]\label{Table}
	\caption{{Overview of existing literature on \gls{ISAC}}}
	\small
	\centering
	\begin{tabular}{c|c|c|c|c|}
		\hline\rowcolor{lightgray}
	\textbf{Ref} & \makecell{Near-Field \\ Propagation} & \makecell{Information \\ Security}   &\makecell{Cooperative \\ Design} & \makecell{Beam \\ Splitting} \\ 
			\hline
	\cite{NearField_ISAC} &$\checkmark$ && &  \\ 
		\hline
	\cite{b6}& $\checkmark$&& &  \\ 
				\hline
	\cite{THz-ISAC-PHY}&$\checkmark$ && $\checkmark$&  \\ 
	\hline
		\cite{Network_ISAC2}& && $\checkmark$&  \\ 	
		\hline
				\cite{10443616}& &$\checkmark$& $\checkmark$ &  \\ 
		\hline 
	  	  \cite{ISAC_security}& &$\checkmark$& $\checkmark$&  \\ 
	   \hline
\cite{Secrecy_ISAC}& &$\checkmark$& $\checkmark$&  \\ 	
\hline 
\cite{Covert_ISAC}& &$\checkmark$& &  \\ 
		\hline 
		\cite{NearFieldISACSecletter}&$\checkmark$ &$\checkmark$& &  \\ 
		\hline 
		\cite{Learning-ISAC3}& &$\checkmark$& &  \\ 
          \hline
         \textbf{Prop.}& $\checkmark$&$\checkmark$&$\checkmark$ &  $\checkmark$\\ 
         \hline
	\end{tabular}
\end{table}

	{Against this background, this work investigates a cooperative \gls{THz} \gls{ISAC} scenario in which multiple \glspl{BS}, equipped with \gls{ELAA}, collaboratively serve several downlink users while simultaneously locating multiple targets. Some of these targets may act as malicious eavesdroppers that attempt to intercept confidential information intended for legitimate users.
	Although secure \gls{ISAC} is an increasingly studied topic, few works have considered its cooperative variant due to the higher complexity that cooperation entails. These challenges are further compounded by the use of \gls{ELAA} and the ultra-wide \gls{THz} band, which introduce additional complexity into the cooperative secure design. Consequently, the high computational complexity of traditional optimization-based approaches necessitates more intelligent and lower complexity methods. However, few studies have proposed learning-aided optimization approaches with low-complexity for this problem.
	Therefore, to address this gap, we propose a Mamba-empowered \gls{GNN} framework for scalable joint optimization in bistatic \gls{ISAC} systems. In contrast to the conference version \cite{GLOBECOM25}, this work employs \gls{TTD} to mitigate the beam split effect inherent in wideband systems. Moreover, specifically to enhance generalization ability, we replace the conventional deep neural network framework used in the conference version with the proposed Mamba-empowered \gls{GNN} architecture.} The main contributions of this work are summarized as follows:

	1) We develop an efficient cooperative \gls{THz} \gls{ISAC} framework that addresses three fundamental challenges: (i) beam split effects caused by the ultra-wide bandwidth of the \gls{THz} band, (ii) near-field characteristic due to \gls{ELAA} configurations, and (iii) scalability requirements in different wireless networks. Exploiting \glspl{GNN}, we propose a data-driven learning framework that jointly optimizes analog beamforming, digital precoding, radar covariance matrix, and \gls{TTD} networks, for maximizing the minimum secrecy rate while satisfying a tolerable \gls{CRB} constraint.
	
	2) Our learning architecture effectively leverages \glspl{GNN} for their ability to capture complex relationships and adapt to dynamic network conditions. The framework processes graph-structured data, rendering it ideal for modeling interactions between  targets, communication users, and \glspl{BS}, as well as handling topology variations. To further reduce computational complexity, we incorporate a Mamba module to exploit features obtained through message passing for designing analog beamforming and \gls{TTD} networks.
	
	3) {{Simulation results show that the proposed Mamba-empowered \gls{GNN} approach outperforms the conventional Alt-Min scheme~\cite{Alt-Min} and the considered learning-based benchmarks~\cite{murshed2023cnn}. Compared with the Alt-Min scheme without sensing constraints, the proposed method achieves higher secrecy rates with lower computational cost while satisfying the sensing requirements. Moreover, it consistently surpasses the deep learning baselines and shows robust generalization across different network topologies.}}

	\begin{figure}[t]
	\centering
	\includegraphics[width=3.3in]{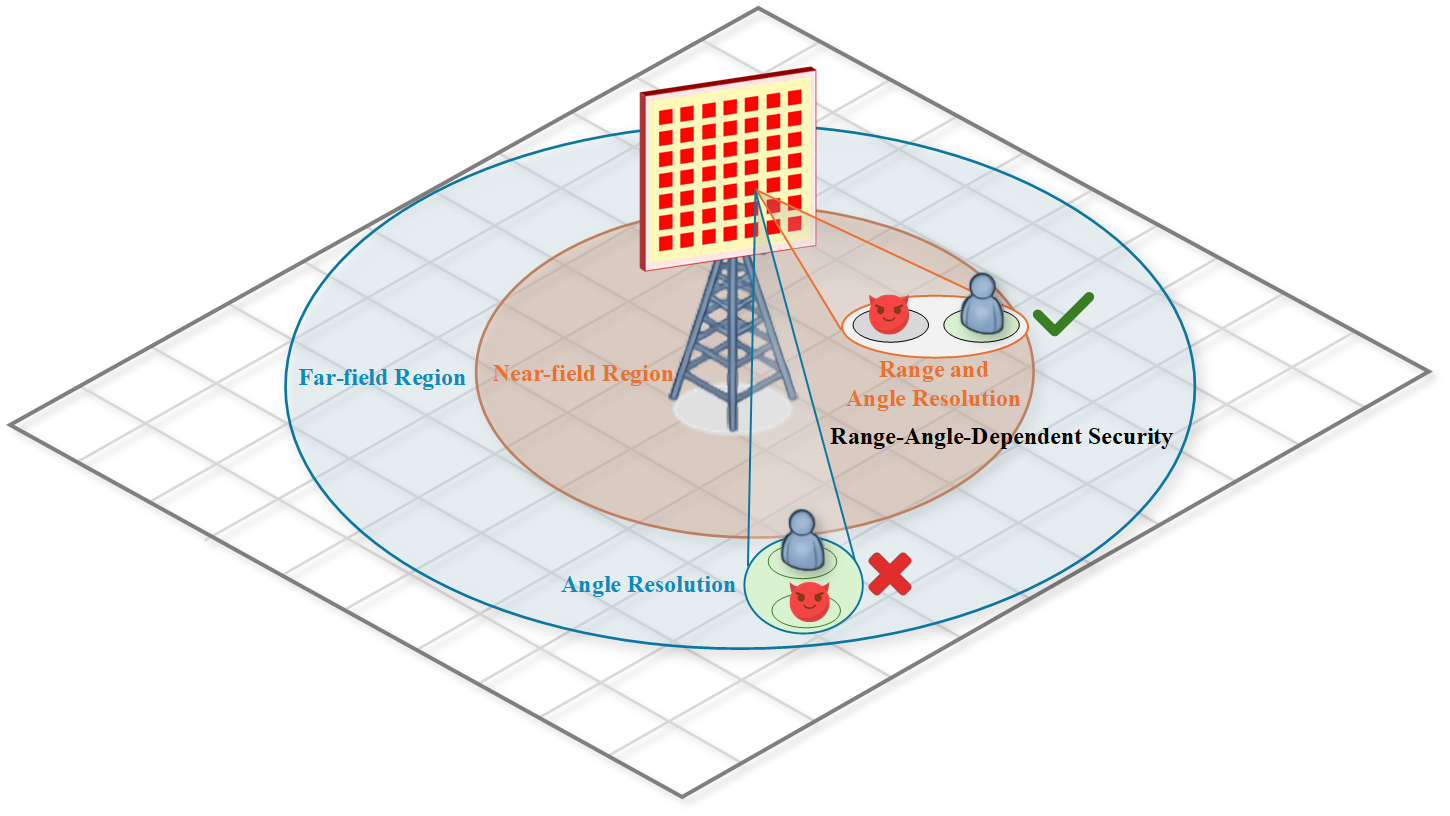}
	\caption{Conceptual illustration of the near-field effect: Near-field propagation enables joint angle-and-range resolution, whereas far-field propagation only provides angular resolution.}
	\label{fig:nearfield_effect}
	\end{figure}

	\textit{Notations:} Scalars, vectors, and matrices are indicated as lowercase letters, lowercase boldfaced letters, and uppercase boldfaced letters, respectively. $(\cdot)^\mathrm{T}$, $(\cdot)^\mathrm{H}$, and $(\cdot)^*$ denote the transpose, transpose conjugate, and conjugate operation, respectively. $\text{vec}(\mathbf{A})$ and $\text{tr}(\mathbf{A})$ denote the vectorization and trace of matirx $\mathbf{A}$, respectively. $\mathbb{C}^{N\times M}$ and $\mathbb{R}^{N\times M}$ denote the space of $N\times M$ complex matrices and real matrices. $\left|\cdot \right|$, $\Re(\cdot)$ and $\Im(\cdot)$ denote the absolute value, the real part, and the imaginary part of a complex entry, respectively. $\mathbf{I}_N$ indicates the $N \times N$ identity matrix. $\otimes$ and $\odot$ indicate Kronecker product and Hadamard product, respectively. $\mathbb{E}(\cdot)$ denotes the statistical expectation. $\|\cdot\|$is the Euclidean norm of a vector, and $\|\cdot\|_F$ denotes
	Frobenius-norm of the matrix. $\text{blkdiag}(\cdot)$ denotes the block-diagonal matrix 
	and $\text{diag}(\mathbf{a})$ denotes a diagonal matrix with diagonal elements $\mathbf{a}$. \( \mathcal{CN}(\mathbf{x}, \mathbf{Y}) \) denotes a circularly symmetric complex Gaussian (CSCG) random vector with mean vector $\mathbf{x}$ and covariance matrix $\mathbf{Y}$.
	
	\section{System Model and Problem Formulation}  
	As shown in Fig.~\ref{system}, we consider a bistatic \gls{THz} \gls{ISAC} system where $N$ \gls{BS}s, each equipped with an $N_\mathrm{t}$-antenna \gls{UPA}, cooperatively transmit information to $K$ single-antenna users while locating $M$ targets. Due to the \gls{ELAA} configuration ($N_\mathrm{t} \gg 1$), multiple users and targets lie in the near-field region of the \gls{BS}s~\cite{Alt-Min}. A receiving \gls{BS}, equipped with an $N_\mathrm{r}$-element \gls{ULA}, forms a bistatic pair with the $N$ transmitting \gls{BS}s. 
	{In the considered system, the $M$ sensing targets are modeled as users that require sensing services but are not scheduled for communication service in the current transmission interval.
	As these targets may intercept confidential information signals intended for the scheduled communication users, they are treated as potential eavesdroppers.
	Following common assumptions in secure ISAC studies, the channel information of both scheduled communication users and sensing targets is assumed to be available at the BSs for transmit design~\cite{channel1,channel2,channel3}.}
	
	{As illustrated in Fig.~\ref{fig:nearfield_effect}, near-field propagation in the considered system provides joint angle-and-range resolution. This property extends conventional angular-domain spatial multiplexing to angle--range-domain spatial multiplexing, enabling spatial isolation in both the angular and range domains and thereby helping mitigate potential eavesdropping threats from targets located in similar angular directions as scheduled users. Meanwhile, the angle--range focusing capability of near-field propagation reduces energy spreading around the target and improves localization accuracy.}

	The center of the $n$-th \gls{BS}'s \gls{UPA} is located at $\mathbf{l}_n^{\mathrm{b}} \triangleq (x_n^{\mathrm{b}}, y_n^{\mathrm{b}}, z_n^{\mathrm{b}})$, with its $n_t$-th antenna element at $\mathbf{l}_{n_t}^{\mathrm{b},n} \triangleq \left(x_{n_t}^{\mathrm{b},n}, y_{n_t}^{\mathrm{b},n}, z_{n_t}^{\mathrm{b},n}\right)$. Similarly, the receiving BS's \gls{ULA} has its center at $\mathbf{l}^{\mathrm{r}} \triangleq \left(x^{\mathrm{r}}, y^{\mathrm{r}}, z^{\mathrm{r}}\right)$, with its $n_r$-th antenna element at $\mathbf{l}_{n_r}^{\mathrm{r}} \triangleq \left(x_{n_r}^{\mathrm{r}}, y_{n_r}^{\mathrm{r}}, z_{n_r}^{\mathrm{r}}\right)$. The $m$-th target and the $k$-th user are located at $\mathbf{l}_{m}^{\mathrm{t}} \triangleq \left(x_{m}^{\mathrm{t}}, y_{m}^{\mathrm{t}}, z_{m}^{\mathrm{t}}\right)$ and $\mathbf{l}_{k}^{\mathrm{u}} \triangleq \left(x_{k}^{\mathrm{u}}, y_{k}^{\mathrm{u}}, z_{k}^{\mathrm{u}}\right)$, respectively.
	
	\begin{figure}
		\vspace{-4mm}
		\centering
		\includegraphics[width=3in]{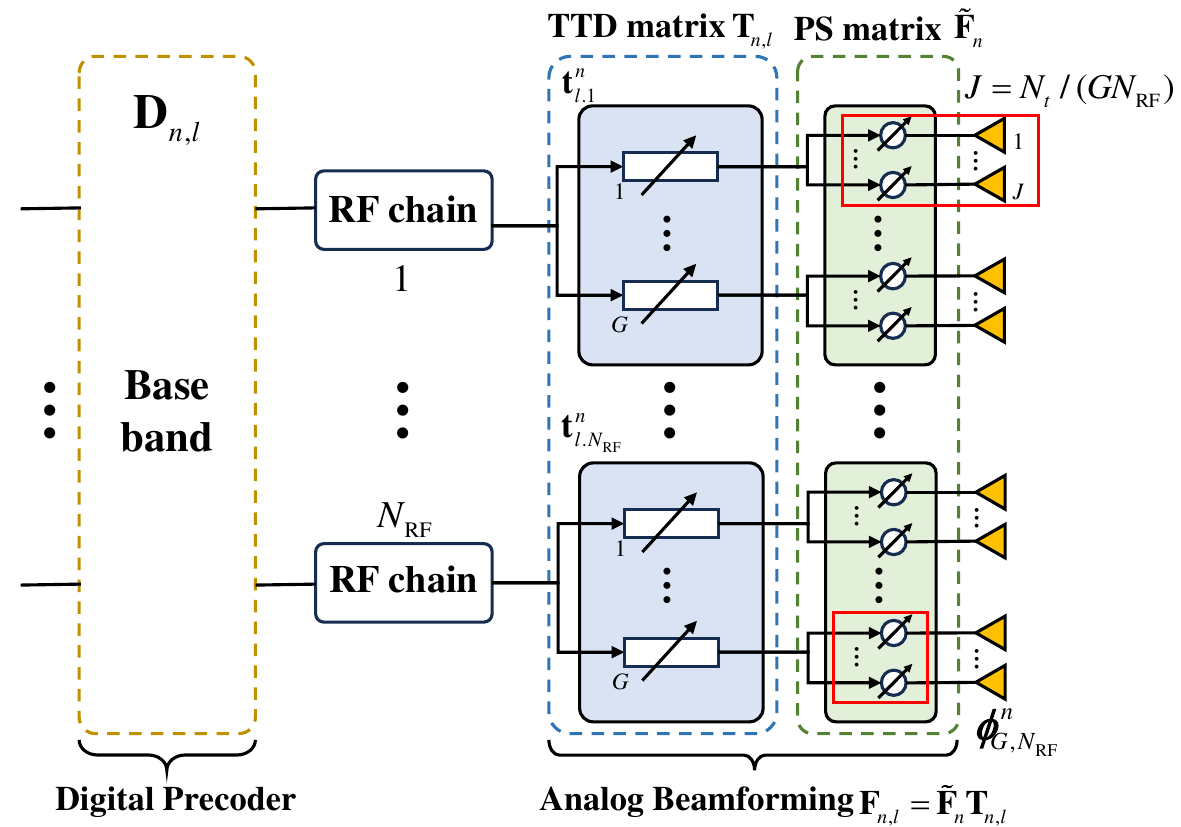}
		\caption{Hybrid architecture transmitter.}\label{Hybrid_Structure}
	\end{figure}		
	
	We adopt \gls{OFDM} with $L$ subcarriers and a partially connected hybrid precoding architecture, as shown in Fig.~\ref{Hybrid_Structure}. {In the considered system, all cooperative BSs share the same set of subcarriers, and multiple users are simultaneously served on each subcarrier through linear precoding.} The transmitted signal from the $n$-th \gls{BS} at the $l$-th subcarrier and time instant $t$ is given by:
	\begin{align}
		\bm{\chi}_{n,l}(t) = \mathbf{F}_{n,l} \left( \sum_{k=1}^{K} \mathbf{d}_{nk,l} c_{k,l}^{\mathrm{I}}(t) + \mathbf{c}_{n,l}^{\mathrm{S}}(t) \right),
	\end{align}
	where \( c_{k,l}^{\mathrm{I}}(t) \sim\mathcal{CN}(0,1) \) is the information-bearing signal for the $k$-th user and \(\mathbf{c}_{n,l}^{\mathrm{S}}(t) \in \mathbb{C}^{N_{\text{RF}} \times 1}\) is the radar waveform. The analog beamforming matrix is defined as \(\mathbf{F}_{n,l}= \widetilde{\mathbf{F}}_n \mathbf{T}_{n,l}\), where \(\widetilde{\mathbf{F}}_n \in \mathbb{C}^{N_t \times G N_{\mathrm{RF}}}\) is a frequency-independent phase-shifting (PS) matrix given by $
	\widetilde{\mathbf{F}}_n \triangleq \mathrm{blkdiag}(\boldsymbol{\phi}_{1,1}^n, \ldots, \boldsymbol{\phi}_{g,n_{\mathrm{RF}}}^n, \ldots, \boldsymbol{\phi}_{G, N_{\mathrm{RF}}}^n).$, \(\mathbf{T}_{n,l} \in \mathbb{C}^{G N_{\mathrm{RF}} \times N_{\mathrm{RF}}}\) represents the frequency-dependent phase shifts introduced by \gls{TTD} units, structured as $
	\mathbf{T}_{n,l} \triangleq \mathrm{blkdiag}(\mathbf{t}_{l,1}^n, \mathbf{t}_{l,2}^n, \ldots, \mathbf{t}_{l,N_{\mathrm{RF}}}^n)$, with $G$ denoting the number of TTDs per RF chain.
	
	The subvector \(\boldsymbol{\phi}_{g,n_{\mathrm{RF}}}^n \in \mathbb{C}^{J \times 1}\), where \(J = N_t / (G N_{\mathrm{RF}})\), contains the phase shifts applied to the \(J\) antennas connected to the \(g\)-th TTD of the \(n_{\mathrm{RF}}\)-th RF chain. Similarly, \(\mathbf{t}_{l,n_{\mathrm{RF}}}^n \in \mathbb{C}^{G \times 1}\) represents the frequency-dependent phase shifts of the \(G\) TTDs associated with the \(n_{\mathrm{RF}}\)-th RF chain, i.e., $\mathbf{t}_{l,n_{\mathrm{RF}}}^n = \exp\left( -j 2\pi f_l \boldsymbol{\tau}_{n_{\mathrm{RF}}}^n \right)$,
	where $\boldsymbol{\tau}_{n_{\mathrm{RF}}}^n = [\tau_{1,n_{\mathrm{RF}}}^n, \tau_{2,n_{\mathrm{RF}}}^n, \ldots, \tau_{G,n_{\mathrm{RF}}}^n]^{\mathrm{T}} \in \mathbb{R}^{G \times 1}$ represents the time delay vector and $f_l$ denotes the center frequency of the $l$th subcarrier. The frequency-dependent digital precoder \(\mathbf{d}_{nk,l} \in \mathbb{C}^{N_{\text{RF}} \times 1}\) at the \(n\)-th BS is designed to serve the \(k\)-th user on the \(l\)-th subcarrier. Meanwhile, the radar signal \(\mathbf{c}_l^{\mathrm{S}}(t) \in \mathbb{C}^{N N_{\text{RF}} \times 1}\), aggregated from all BSs, is defined as $
	\mathbf{c}_l^{\mathrm{S}}(t) = \big[ \mathbf{c}_{1,l}^{\mathrm{S}}(t)^\mathrm{T}, \ldots, \mathbf{c}_{N,l}^{\mathrm{S}}(t)^\mathrm{T} \big]^\mathrm{T} \sim \mathcal{CN}(\mathbf{0},\mathbf{S}_l)$, where $\mathbf{S}_l$ is the covariance matrix.

	\subsection{Near-field Secure Communication Model}
	{Following the  Rician channel model in~\cite{rician_ref,wideband_RIS_channel_estimation}, the channel response on the $l$-th subcarrier from the $n$-th BS to the $k$-th user is given by: }
	\begin{align}\label{channel_user}
		\mathbf{h}_{nk,l} = \sqrt{G_t} \mathbf{g}_l^n(\mathbf{l}_k^{\text{u}}) \odot \left(\psi_1 \mathbf{h}_{nk,l}^{\text{LoS}} + \psi_2 \mathbf{h}_{nk,l}^{\text{NLoS}}\right),
	\end{align}
	where $G_t > 0$ represents the antenna gain and the parameters $\psi_1 = \sqrt{\frac{k_{\mathrm{H}}}{1 + k_{\mathrm{H}}}}$ and $\psi_2 = \sqrt{\frac{1}{1 + k_{\mathrm{H}}}}$, with $ k_{\mathrm{H}} > 0 $ being the Rician factor. 
	The vector $\mathbf{g}_l^n(\mathbf{l}_k^{\text{u}}) = \big[ g_{1,l}^n(\mathbf{l}_k^{\text{u}}), \ldots, g_{N_\text{t},l}^n(\mathbf{l}_k^{\text{u}}) \big]^{\mathsf{T}}$ models the path loss between the $n$-th BS and the $k$-th user on the $l$-th subcarrier, where 
	$
	g_{n_\text{t},l}^n(\mathbf{l}_k^{\text{u}}) = \frac{\lambda_l}{4\pi \|\mathbf{l}_{n_{\mathrm{t}}}^{\text{b},n} - \mathbf{l}_k^{\text{u}} \|} e^{-\frac{1}{2}k_\text{abs}(f_l)\|\mathbf{l}_{n_{\mathrm{t}}}^{\text{b},n} - \mathbf{l}_k^{\text{u}} \|},
	$
	with $\lambda_l = c/f_l$ denoting the wavelength at subcarrier frequency $f_l$, and $c$ being the speed of light. Besides, \( k_\text{abs}(f_l) \) is the frequency-dependent molecule absorption coefficient \cite{b10}.
	{
	The LoS component \(\mathbf{h}_{nk,l}^{\text{LoS}}\) is modeled as 
	$ \mathbf{h}_{nk,l}^{\text{LoS}} = \left[ e^{\frac{-2 j \pi f_l}{c} \|\mathbf{l}_{1}^{\text{b},n} - \mathbf{l}_k^{\text{u}} \|}, \dots, e^{\frac{-2 j \pi f_l}{c} \|\mathbf{l}_{N_{\mathrm{t}}}^{\text{b},n} - \mathbf{l}_k^{\text{u}} \|} \right]^{\mathrm{T}}, $ 
	where \(f_l\) denotes the frequency of the \(l\)-th subcarrier, rendering the LoS component frequency-dependent. 
	The NLoS component \(\mathbf{h}_{nk,l}^{\text{NLoS}} \sim \mathcal{CN}(\mathbf{0}, \mathbf{I})\) is modeled as statistically independent across subcarriers for modeling simplicity~\cite{rician_ref,wideband_RIS_channel_estimation}.
	}

	The signal \( y_{k,l}^{\mathrm{u}}(t) \) received by the \(k\)-th user on the \(l\)-th subcarrier is given by
	\begin{align}
		y_{k,l}^{\mathrm{u}}(t) = \sum_{n=1}^{N} \mathbf{h}_{nk,l}^\mathrm{H} \bm{\chi}_{n,l}(t) + z_{k,l}(t),
	\end{align}
	where \( z_{k,l}(t) \sim \mathcal{C}\mathcal{N}(0, \sigma_{k}^2) \) represents the \gls{AWGN} at the \(k\)-th user.
	The achievable rate of the \( k \)-th user on the \(l\)-th subcarrier is given by
	\begin{align}
		r_{k,l}^\mathrm{u} \!\!=\!\! \log_2 \left( 1 \!+\! \frac{ \left| \mathbf{h}_{k,l}^{\mathrm{H}} \mathbf{F}_l \mathbf{d}_{k,l} \right|^2 }{ \sum_{j \neq k}^{K} \left| \mathbf{h}_{k,l}^{\mathrm{H}} \mathbf{F}_l \mathbf{d}_{j,l} \right|^2 \!\!+\!\!  \mathbf{h}_{k,l}^{\mathrm{H}} \mathbf{F}_l \mathbf{S}_l \mathbf{F}_l^{\mathrm{H}} \mathbf{h}_{k,l} \! \!+\!\! \sigma_{k}^2 }\right),
	\end{align}
	where 
	\( \mathbf{h}_{k,l} = [ \mathbf{h}_{1k,l}^{\mathrm{T}}, \ldots, \mathbf{h}_{Nk,l}^{\mathrm{T}} ]^{\mathrm{T}} \in \mathbb{C}^{N N_\mathrm{t} \times 1} \), \( \mathbf{d}_{k,l} = [ \mathbf{d}_{1k,l}^\mathrm{T}, \ldots, \mathbf{d}_{Nk,l}^\mathrm{T} ]^\mathrm{T} \in \mathbb{C}^{N N_\mathrm{RF} \times 1} \), and \( \mathbf{F}_l = \mathrm{blkdiag}(\mathbf{F}_{1,l}, \ldots, \mathbf{F}_{N,l}) \in \mathbb{C}^{N N_\mathrm{t} \times N N_\mathrm{RF}} \). 
	
	On the other hand, the signal received at the \( m \)-th target on the \( l \)-th subcarrier is given by
	\begin{equation}
		y_{m,l}^{\mathrm{t}}(t) = \sum_{n=1}^{N} \mathbf{a}_{nm,l}^\mathrm{H} \bm{\chi}_{n,l}(t) + z_{m,l}(t),
	\end{equation}
	where \( z_{m,l}(t) \sim \mathcal{CN}(0, \sigma_{m}^2) \) represents the \gls{AWGN} at the \( m \)-th target and 
	\( \mathbf{a}_{nm,l} \in \mathbb{C}^{N_\mathrm{t} \times 1} \) is the channel response between the \( n \)-th BS and the \( m \)-th target on the \( l \)-th subcarrier. 
	The channel response is modeled as:
	\begin{equation}
		\mathbf{a}_{nm,l} = \sqrt{G_t} \, \boldsymbol{\alpha}_l^n(\mathbf{l}_m^{\mathrm{t}}) \odot \left( \psi_1 \mathbf{a}_{nm,l}^{\text{LoS}} + \psi_2 \mathbf{a}_{nm,l}^{\text{NLoS}} \right),
	\end{equation}
	where \( \boldsymbol{\alpha}_l^n(\mathbf{l}_m^{\mathrm{t}}) = \left[ \alpha_{1,l}^n(\mathbf{l}_m^{\mathrm{t}}), \dots, \alpha_{N_\mathrm{t},l}^n(\mathbf{l}_m^{\mathrm{t}}) \right]^\mathrm{T} \) is the path loss vector between the \( n \)-th BS and the \( m \)-th target. 
	The terms \( \mathbf{a}_{nm,l}^{\text{LoS}} \) and \( \mathbf{a}_{nm,l}^{\text{NLoS}} \) correspond to the \gls{LOS} and \gls{NLoS} components, respectively, and are modeled analogously to (\ref{channel_user}).

	
	The $m$-th target may act as a potential eavesdropper attempting to intercept the confidential information intended for the $k$-th user. Accordingly, its achievable eavesdropping rate on the $l$-th subcarrier is given by
	\begin{equation}
		r_{mk,l}^\mathrm{t} = \log_2 \left( 1 + \mathrm{SINR}_{mk,l}^\mathrm{t} \right),
	\end{equation}
	where the corresponding SINR is expressed as
	\begin{equation}
		\mathrm{SINR}_{mk,l}^\mathrm{t} =
		\frac{
			\left| \mathbf{a}_{m,l}^\mathrm{H} \mathbf{F}_l \mathbf{d}_{k,l} \right|^2
		}{
			\sum\limits_{j \neq k}^{K} \left| \mathbf{a}_{m,l}^\mathrm{H} \mathbf{F}_l \mathbf{d}_{j,l} \right|^2
			+ \mathbf{a}_{m,l}^\mathrm{H} \mathbf{F}_l \mathbf{S}_l \mathbf{F}_l^\mathrm{H} \mathbf{a}_{m,l}
			+ \sigma_{m}^2
		},
	\end{equation}
	with $\mathbf{a}_{m,l}=\left[ \mathbf{a}_{1m,l}^{\mathrm{T}}, \ldots, \mathbf{a}_{Nm,l}^{\mathrm{T}} \right]^\mathrm{T} \in \mathbb{C}^{NN_\mathrm{t} \times 1}$ denoting the aggregated steer vector from all \glspl{BS} to the $m$-th target on the $l$-th subcarrier.
	
	\subsection{Near-field Sensing Model}
	The steering vector $\mathbf{b}_{m,l} \in \mathbb{C}^{N_\mathrm{r} \times 1}$ between the receiving \gls{BS} and the $m$-th target on the \( l \)-th subcarrier is modeled as:
	\begin{align}
		\mathbf{b}_{m,l} = \sqrt{G_r} \, \boldsymbol{\beta}_l(\mathbf{l}_m^{\mathrm{t}}) \odot \left( \psi_1 \mathbf{b}_{m,l}^{\text{LoS}} + \psi_2 \mathbf{b}_{m,l}^{\text{NLoS}} \right),
	\end{align}
	where $G_r > 0$ denotes the receiver antenna gain. The terms $\boldsymbol{\beta}_l(\mathbf{l}_m^{\mathrm{t}})$ (path loss), $\mathbf{b}_{m,l}^{\text{LoS}}$ (LoS component), and $\mathbf{b}_{m,l}^{\text{NLoS}}$ (NLoS component) follow the same modeling approach as in (\ref{channel_user}).

	
	For notational simplicity, we define the following matrices: \( \mathbf{B}_l = [ \mathbf{b}_{1,l},  \ldots, \mathbf{b}_{M,l}] \in \mathbb{C}^{N_\mathrm{r} \times M}\), \( \mathbf{A}_l = [ \mathbf{a}_{1,l}, \ldots, \mathbf{a}_{M,l} ]^\mathrm{T} \in \mathbb{C}^{M \times N N_t}\), 
	and $\mathbf{D}_l = [ \mathbf{d}_{1,l}, \ldots, \mathbf{d}_{K,l}] \in \mathbb{C}^{N N_\mathrm{RF} \times K}$.
	The received signal at the receiving \gls{BS} is given by
	\begin{align}
		\mathbf{y}_{l}^\mathrm{r}(t) = \mathbf{B}_l \mathbf{Q} \mathbf{A}_l \mathbf{F}_l \left[\mathbf{D}_l,\mathbf{I}_{NN_{\mathrm{RF}}}\right]  \mathbf{c}_l(t)  + \mathbf{z}_{l}^\mathrm{r}(t),
	\end{align}
	where $\mathbf{Q} = \mathrm{diag}(q_1, q_2, \ldots, q_M) \in \mathbb{C}^{M \times M}$ denotes the \gls{RCS} matrix for all $M$ targets, 
	$\mathbf{c}_l^\mathrm{I}(t) = [c_{1,l}^\mathrm{I}(t), \ldots, c_{K,l}^\mathrm{I}(t)]^\mathrm{T}\sim\mathcal{CN}\left(\mathbf{0},\mathbf{I}_K\right)$, $\mathbf{c}_l(t)\triangleq\left[ \mathbf{c}_l^\mathrm{I}(t);\mathbf{c}_l^\mathrm{S}(t)\right]$,
	and $\mathbf{z}_{l}^{\mathrm{r}}(t) \sim \mathcal{CN}(\mathbf{0}, \sigma_\mathrm{r}^2 \mathbf{I})$ is the AWGN at the receiving \gls{BS}.
	
	To locate multiple targets, we apply matched filtering across $P$ received OFDM symbols, obtaining an $N_\mathrm{r} \times (K + N N_\mathrm{RF})$-dimensional data matrix:
	\begin{align}\label{Y_r^l}
		\mathbf{Y}_l^\mathrm{r} = \mathbf{B}_l \mathbf{Q} \mathbf{A}_l \mathbf{F}_l [ \mathbf{D}_l, \mathbf{I}_{NN_{\mathrm{RF}}} ]	\frac{1}{P}\sum_{p=1}^P\mathbf{c}_l(p)\mathbf{c}_l(p)^\mathrm{H} + \mathbf{N}_\mathrm{s}^l,
	\end{align}
	where 
	$
	\mathbf{N}_\mathrm{s}^l = \frac{1}{P} \sum_{p=1}^{P} \mathbf{z}_{l}^{\mathrm{r}}[p]
	\mathbf{c}_l^\mathrm{H}[p].$ Besides, the probability distribution of $\mathbf{Y}_l^\mathrm{r}$ is given in the following lemma. 
	\begin{lemma}\label{noise}
		When $P$ is sufficiently large, 
		$\mathrm{vec}\left(\mathbf{Y}_l^\mathrm{r}\right)\sim \mathcal{C}\mathcal{N} \left( \boldsymbol{\mu}_l, \boldsymbol{\Xi}_l \right)$, where
		$\boldsymbol{\Xi}_l = \frac{\sigma_\mathrm{r}^2}{P} \, \text{blkdiag}\left( \mathbf{I}_K, \mathbf{S}_l^* \right) \otimes \mathbf{I}_{N_\mathrm{r}}$,
	\end{lemma}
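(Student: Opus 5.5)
We need the distribution of $\mathrm{vec}(\mathbf{Y}_l^\mathrm{r})$ in the large-$P$ regime. The plan is to split $\mathbf{Y}_l^\mathrm{r}$ in (\ref{Y_r^l}) into a signal term and the noise term $\mathbf{N}_\mathrm{s}^l$, and to treat them separately.

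\textbf{Signal term.} Let $\mathbf{R}_l \triangleq \mathbb{E}[\mathbf{c}_l\mathbf{c}_l^\mathrm{H}]$. Since $\mathbf{c}_l^\mathrm{I}\sim\mathcal{CN}(\mathbf{0},\mathbf{I}_K)$ is independent of $\mathbf{c}_l^\mathrm{S}\sim\mathcal{CN}(\mathbf{0},\mathbf{S}_l)$, we have $\mathbf{R}_l=\mathrm{blkdiag}(\mathbf{I}_K,\mathbf{S}_l)$. By the law of large numbers,
\[
\frac{1}{P}\sum_{p=1}^P\mathbf{c}_l(p)\mathbf{c}_l(p)^\mathrm{H}\to \mathbf{R}_l .
\]
Its fluctuation is $O(1/\sqrt{P})$ and multiplies a deterministic matrix, so it is neglected. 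The signal term therefore becomes deterministic, and its vectorization gives the mean
\[
\boldsymbol{\mu}_l=\mathrm{vec}\!\left(\mathbf{B}_l\mathbf{Q}\mathbf{A}_l\mathbf{F}_l[\mathbf{D}_l,\mathbf{I}]\mathbf{R}_l\right).
\]

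\textbf{Noise term.} Use $\mathrm{vec}(\mathbf{z}\mathbf{c}^\mathrm{H})=\mathbf{c}^*\otimes\mathbf{z}$ to write
\[
\mathrm{vec}(\mathbf{N}_\mathrm{s}^l)=\frac1P\sum_p \mathbf{c}_l^*(p)\otimes\mathbf{z}_l^\mathrm{r}(p).
\]
The summands are i.i.d. across $p$ and zero-mean, because $\mathbf{z}$ is zero-mean and independent of $\mathbf{c}$. They are circularly symmetric, since $\mathbb{E}[(\mathbf{c}^*\otimes\mathbf{z})(\mathbf{c}^*\otimes\mathbf{z})^\mathrm{T}]$ contains the factor $\mathbb{E}[\mathbf{z}\mathbf{z}^\mathrm{T}]=\mathbf{0}$. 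Their covariance is
\[
\mathbb{E}\big[(\mathbf{c}^*\mathbf{c}^\mathrm{T})\otimes(\mathbf{z}\mathbf{z}^\mathrm{H})\big]=\mathbf{R}_l^*\otimes\sigma_\mathrm{r}^2\mathbf{I}_{N_\mathrm{r}},
\]
where the expectation factorizes by independence. Dividing by $P$ for the average of $P$ i.i.d. terms gives
\[
\boldsymbol{\Xi}_l=\frac{\sigma_\mathrm{r}^2}{P}\,\mathrm{blkdiag}(\mathbf{I}_K,\mathbf{S}_l^*)\otimes\mathbf{I}_{N_\mathrm{r}} .
\]
The multivariate central limit theorem, applied to the real and imaginary parts, then shows the noise term is asymptotically $\mathcal{CN}(\mathbf{0},\boldsymbol{\Xi}_l)$. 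Adding the deterministic mean completes the argument.

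\textbf{Main obstacle.} The delicate step is combining the two limits consistently. The noise term is not Gaussian for finite $P$: it is a sum of products of Gaussians, so Gaussianity holds only asymptotically. I would handle this by conditioning on $\{\mathbf{c}_l(p)\}$. Conditionally, $\mathrm{vec}(\mathbf{N}_\mathrm{s}^l)$ is exactly Gaussian with covariance $\frac{\sigma_\mathrm{r}^2}{P}\big(\frac1P\sum_p\mathbf{c}^*\mathbf{c}^\mathrm{T}\big)\otimes\mathbf{I}$. This sample covariance converges to $\mathbf{R}_l^*\otimes\mathbf{I}$ by the law of large numbers, which yields the stated covariance. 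The same replacement of the sample covariance by $\mathbf{R}_l$ in the signal term gives the mean, and the approximation is justified at the level of the $O(1/P)$ covariance scale used in the CRB.
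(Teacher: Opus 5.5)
Your proposal is correct and follows essentially the same route as the paper's Appendix~A. The law of large numbers replaces $\frac{1}{P}\sum_{p}\mathbf{c}_l(p)\mathbf{c}_l(p)^\mathrm{H}$ by $\mathrm{blkdiag}(\mathbf{I}_K,\mathbf{S}_l)$, the noise is vectorized as $\frac{1}{P}\sum_{p}(\mathbf{c}_l^*(p)\otimes\mathbf{I}_{N_\mathrm{r}})\mathbf{z}_l^\mathrm{r}(p)$, and independence across $p$ gives $\boldsymbol{\Xi}_l$. The only difference is that the paper computes the first two moments and takes Gaussianity for granted, whereas you justify it.

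Of your two justifications, prefer conditioning on $\{\mathbf{c}_l(p)\}$, which the matched-filter receiver knows anyway. Phrase the final step as continuity of the $P^{-1}$-scaled FIM in the sample covariance, not as discarding the $O(1/\sqrt{P})$ fluctuation of the signal term: that fluctuation is of the same order as the noise, so it cannot be dropped on order-of-$P$ grounds alone.
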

	\begin{IEEEproof}
		The proof is given in Appendix A. 
	\end{IEEEproof}
	According to Lemma~\ref{noise}, $\mathrm{vec}\!\left(\mathbf{Y}_l^\mathrm{r}\right)$ follows a complex Gaussian distribution \cite{ELSAA_ISAC} with mean vector
	$
	\boldsymbol{\mu}_l
	=
	\left[
	\bm{\varpi}_{1,l};
	\ldots;
	\bm{\varpi}_{K,l};
	\bm{\varsigma}_{1,l};
	\ldots;
	\bm{\varsigma}_{NN_{\mathrm{RF}},l}
	\right],
	$
	where
	$
	\bm{\varpi}_{k,l}\triangleq\mathbf{B}_l \mathbf{Q} \mathbf{A}_l \mathbf{F}_l \mathbf{d}_{k,l},
	$
	and
	$
	\bm{\varsigma}_{k,l}\triangleq\mathbf{B}_l \mathbf{Q} \mathbf{A}_l \mathbf{F}_l \mathbf{S}_{l}(:,k).
	$
	By stacking the received signals across all subcarriers \cite{b6}, we further obtain
	\begin{align}
		\left[\mathrm{vec}\left(\mathbf{Y}_1^\mathrm{r}\right);\ldots;\mathrm{vec}\left(\mathbf{Y}_L^\mathrm{r}\right)\right]
		\sim
		\mathcal{CN}\left(\boldsymbol{\mu}, \boldsymbol{\Xi}\right),
	\end{align}
	where
	$
	\boldsymbol{\mu}=
	[\boldsymbol{\mu}_1^\mathrm{T},\dots,\boldsymbol{\mu}_L^\mathrm{T}]^\mathrm{T}
	$
	and
	$
	\boldsymbol{\Xi}=
	\mathrm{blkdiag}\left(\boldsymbol{\Xi}_1,\dots,\boldsymbol{\Xi}_L\right)
	$.
	We adopt the \gls{CRB} to evaluate the locating performance. The target-related parameter vector is defined as
	$
	\boldsymbol{\theta} = \left[\mathbf{x}^\mathrm{t},\, \mathbf{y}^\mathrm{t},\, \mathbf{z}^\mathrm{t},\, \mathbf{q}_\mathrm{R},\, \mathbf{q}_\mathrm{I} \right]^\mathrm{T} \in \mathbb{R}^{5M \times 1},
	$
	where $\mathbf{x}^\mathrm{t} = [x_1^\mathrm{t}, \ldots, x_M^\mathrm{t}]$, $\mathbf{y}^\mathrm{t} = [y_1^\mathrm{t}, \ldots, y_M^\mathrm{t}]$, and $\mathbf{z}^\mathrm{t} = [z_1^\mathrm{t}, \ldots, z_M^\mathrm{t}]$ represent the 3D Cartesian coordinates of the $M$ targets. In addition, $\mathbf{q}_\mathrm{R} = [\Re(q_1), \ldots, \Re(q_M)]$ and $\mathbf{q}_\mathrm{I} = [\Im(q_1), \ldots, \Im(q_M)]$ denote the real and imaginary parts of the \gls{RCS} of the targets, respectively.
	
	The analytical result of the \gls{FIM} associated with the parameter vector $\boldsymbol{\theta}$ is given by the following theorem.

	\begin{theorem}\label{cauchy}
		The FIM $\mathcal{I}(\boldsymbol{\theta},\boldsymbol{\theta}) \in \mathbb{R}^{5M \times 5M}$ is given by
		\renewcommand{\arraystretch}{1.3}
		\begin{align}
			&{\mathcal{I}(\boldsymbol{\theta},\boldsymbol{\theta})}=  \notag\\
			&2\begin{bmatrix}
				\Re(\mathcal{I}_{\mathbf{x}^\text{t}\mathbf{x}^\text{t}}) & \Re(\mathcal{I}_{\mathbf{x}^\text{t}\mathbf{y}^\text{t}}) & \Re(\mathcal{I}_{\mathbf{x}^\text{t}\mathbf{z}^\text{t}}) & \Re(\mathcal{I}_{\mathbf{x}^\text{t}\mathbf{q}}) & -\Im(\mathcal{I}_{\mathbf{x}^\text{t}\mathbf{q}})  \\
				\Re(\mathcal{I}_{\mathbf{x}^\text{t}\mathbf{y}^\text{t}}^\mathrm{T}) & \Re(\mathcal{I}_{\mathbf{y}^\text{t}\mathbf{y}^\text{t}}) & \Re(\mathcal{I}_{\mathbf{y}^\text{t}\mathbf{z}^\text{t}}) & \Re(\mathcal{I}_{\mathbf{y}^\text{t}\mathbf{q}}) & -\Im(\mathcal{I}_{\mathbf{y}^\text{t}\mathbf{q}})  \\
				\Re(\mathcal{I}_{\mathbf{x}^\text{t}\mathbf{z}^\text{t}}^\mathrm{T}) & \Re(\mathcal{I}_{\mathbf{y}^\text{t}\mathbf{z}^\text{t}}^\mathrm{T}) & \Re(\mathcal{I}_{\mathbf{z}^\text{t}\mathbf{z}^\text{t}}) & \Re(\mathcal{I}_{\mathbf{z}^\text{t}\mathbf{q}}) & -\Im(\mathcal{I}_{\mathbf{z}^\text{t}\mathbf{q}}) \\
				\Re(\mathcal{I}_{\mathbf{x}^\text{t}\mathbf{q}}^\mathrm{T}) & \Re(\mathcal{I}_{\mathbf{y}^\text{t}\mathbf{q}}^\mathrm{T}) & \Re(\mathcal{I}_{\mathbf{z}^\text{t}\mathbf{q}}^\mathrm{T}) & \Re(\mathcal{I}_{\mathbf{q}\mathbf{q}}) & -\Im(\mathcal{I}_{\mathbf{q}\mathbf{q}}) \\
				-\Im(\mathcal{I}_{\mathbf{x}^\text{t}\mathbf{q}}^\mathrm{T}) & -\Im(\mathcal{I}_{\mathbf{y}^\text{t}\mathbf{q}}^\mathrm{T}) & -\Im(\mathcal{I}_{\mathbf{z}^\text{t}\mathbf{q}}^\mathrm{T}) & -\Im(\mathcal{I}_{\mathbf{q}\mathbf{q}}^\mathrm{T}) & \Re(\mathcal{I}_{\mathbf{q}\mathbf{q}})
			\end{bmatrix},\nonumber
		\end{align}
	\end{theorem}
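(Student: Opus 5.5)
The approach is the Slepian--Bangs formula for a complex Gaussian observation whose covariance does not depend on the unknowns. By Lemma~\ref{noise}, the stacked observation is $\mathcal{CN}(\boldsymbol{\mu},\boldsymbol{\Xi})$. Here $\boldsymbol{\Xi}=\mathrm{blkdiag}(\boldsymbol{\Xi}_1,\dots,\boldsymbol{\Xi}_L)$ depends only on $\sigma_\mathrm{r}^2$, $P$ and $\mathbf{S}_l$, and not on $\boldsymbol{\theta}$. The general FIM formula therefore reduces to
\begin{align}
[\mathcal{I}(\boldsymbol{\theta},\boldsymbol{\theta})]_{ij} = 2\Re\left\{\frac{\partial \boldsymbol{\mu}^\mathrm{H}}{\partial \theta_i}\boldsymbol{\Xi}^{-1}\frac{\partial \boldsymbol{\mu}}{\partial \theta_j}\right\}
= 2\Re\left\{\sum_{l=1}^L \frac{\partial \boldsymbol{\mu}_l^\mathrm{H}}{\partial \theta_i}\boldsymbol{\Xi}_l^{-1}\frac{\partial \boldsymbol{\mu}_l}{\partial \theta_j}\right\}.\notag
\end{align}
Because $\boldsymbol{\Xi}$ is block diagonal, the sum splits over subcarriers. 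I will assume $\mathbf{S}_l\succ 0$ so that $\boldsymbol{\Xi}_l^{-1}=\frac{P}{\sigma_\mathrm{r}^2}\mathrm{blkdiag}(\mathbf{I}_K,(\mathbf{S}_l^*)^{-1})\otimes\mathbf{I}_{N_\mathrm{r}}$ exists; otherwise a pseudo-inverse restricted to the range of $\mathbf{S}_l$ can be used.

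The key step is to define complex ``building-block'' matrices. For position coordinates $\mathbf{u},\mathbf{v}\in\{\mathbf{x}^\mathrm{t},\mathbf{y}^\mathrm{t},\mathbf{z}^\mathrm{t}\}$, set $\mathcal{I}_{\mathbf{u}\mathbf{v}}\triangleq\sum_l (\partial\boldsymbol{\mu}_l/\partial\mathbf{u})^\mathrm{H}\boldsymbol{\Xi}_l^{-1}(\partial\boldsymbol{\mu}_l/\partial\mathbf{v})$, and similarly $\mathcal{I}_{\mathbf{u}\mathbf{q}}$ and $\mathcal{I}_{\mathbf{q}\mathbf{q}}$, where $\mathbf{q}=\mathbf{q}_\mathrm{R}+j\mathbf{q}_\mathrm{I}$.

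For the positions, $\boldsymbol{\mu}_l$ depends on $x_m^\mathrm{t}$ only through the $m$-th column $\mathbf{b}_{m,l}$ of $\mathbf{B}_l$ and the $m$-th row $\mathbf{a}_{m,l}^\mathrm{T}$ of $\mathbf{A}_l$. I will write $\mathbf{B}_l\mathbf{Q}\mathbf{A}_l=\sum_m q_m\mathbf{b}_{m,l}\mathbf{a}_{m,l}^\mathrm{T}$ and differentiate by the product rule. This gives
\begin{align}
\frac{\partial(\mathbf{B}_l\mathbf{Q}\mathbf{A}_l)}{\partial x_m^\mathrm{t}} = q_m\left(\dot{\mathbf{b}}_{m,l}^{x}\mathbf{a}_{m,l}^\mathrm{T}+\mathbf{b}_{m,l}\dot{\mathbf{a}}_{m,l}^{x\,\mathrm{T}}\right),\notag
\end{align}
where the derivatives of the path-loss and spherical-phase entries follow from $\partial\|\mathbf{l}-\mathbf{l}_m^\mathrm{t}\|/\partial x_m^\mathrm{t}=(x_m^\mathrm{t}-x)/\|\mathbf{l}-\mathbf{l}_m^\mathrm{t}\|$. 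The NLoS parts are treated as fixed realizations, so their derivatives vanish.

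For the RCS, $\boldsymbol{\mu}_l$ is linear in $q_m$, so it is holomorphic in $\mathbf{q}$. Hence $\partial\boldsymbol{\mu}_l/\partial\Re(q_m)=\partial\boldsymbol{\mu}_l/\partial q_m$ and $\partial\boldsymbol{\mu}_l/\partial\Im(q_m)=j\,\partial\boldsymbol{\mu}_l/\partial q_m$. Substituting these relations into the Slepian--Bangs expression gives the block structure directly. The $(\mathbf{u},\mathbf{q}_\mathrm{R})$ block is $2\Re(\mathcal{I}_{\mathbf{u}\mathbf{q}})$, and the $(\mathbf{u},\mathbf{q}_\mathrm{I})$ block is $2\Re(j\mathcal{I}_{\mathbf{u}\mathbf{q}})=-2\Im(\mathcal{I}_{\mathbf{u}\mathbf{q}})$. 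The $(\mathbf{q}_\mathrm{I},\mathbf{q}_\mathrm{I})$ block is $2\Re(\bar{j}j\mathcal{I}_{\mathbf{q}\mathbf{q}})=2\Re(\mathcal{I}_{\mathbf{q}\mathbf{q}})$, and the $(\mathbf{q}_\mathrm{R},\mathbf{q}_\mathrm{I})$ block is $-2\Im(\mathcal{I}_{\mathbf{q}\mathbf{q}})$. The lower blocks follow from symmetry of the real FIM, which uses $\Re(\mathbf{X}^\mathrm{H})=\Re(\mathbf{X})^\mathrm{T}$ and $\Im(\mathbf{X}^\mathrm{H})=-\Im(\mathbf{X})^\mathrm{T}$. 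Together these reproduce the matrix in Theorem~\ref{cauchy}.

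The remaining work is to express each $\mathcal{I}$ block in closed form using the block structure of $\boldsymbol{\mu}_l$, which consists of the $K$ communication blocks $\bm{\varpi}_{k,l}$ and the sensing blocks $\bm{\varsigma}_{i,l}$. With $\mathbf{X}_l\triangleq\mathbf{F}_l[\mathbf{D}_l,\mathbf{S}_l]$ and the weight $\mathbf{W}_l=\mathrm{blkdiag}(\mathbf{I}_K,(\mathbf{S}_l^*)^{-1})$, each inner product collapses via $\mathrm{vec}(\mathbf{M}_1)^\mathrm{H}(\mathbf{W}\otimes\mathbf{I})\mathrm{vec}(\mathbf{M}_2)=\mathrm{tr}(\mathbf{M}_1^\mathrm{H}\mathbf{M}_2\mathbf{W}^\mathrm{T})$. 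This yields $\mathrm{tr}(\cdot)$ expressions involving $\mathbf{R}_l=\mathbf{F}_l(\mathbf{D}_l\mathbf{D}_l^\mathrm{H}+\mathbf{S}_l)\mathbf{F}_l^\mathrm{H}$, because $\mathbf{S}_l(\mathbf{S}_l^{*})^{-\mathrm{T}}\mathbf{S}_l^\mathrm{H}=\mathbf{S}_l$ when $\mathbf{S}_l$ is Hermitian.

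I expect the main obstacle to be this bookkeeping rather than any conceptual step. One difficulty is getting the conjugation and transposition in $\mathbf{S}_l^*$ right so the quadratic form correctly collapses to $\mathbf{R}_l$. The other is handling the per-target derivative structure so that $\mathcal{I}_{\mathbf{x}^\mathrm{t}\mathbf{x}^\mathrm{t}}$ comes out as an $M\times M$ matrix combining Hadamard products of Gram matrices of $\dot{\mathbf{B}}_l$ and $\mathbf{B}_l$ with quadratic forms in $\mathbf{R}_l^*$. The $2\Re$ and $-\Im$ pattern itself follows directly from the holomorphy of $\boldsymbol{\mu}$ in $\mathbf{q}$.
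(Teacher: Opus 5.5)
Your proposal is correct and follows essentially the same route as the paper: the Slepian--Bangs formula with the $\boldsymbol{\theta}$-independent block-diagonal $\boldsymbol{\Xi}$ from Lemma~1, a split over subcarriers, and an entrywise expansion of the per-target derivatives of $\mathbf{B}_l\mathbf{Q}\mathbf{A}_l$ into Hadamard-product blocks. Two of your steps go slightly beyond the paper: the holomorphy of $\boldsymbol{\mu}$ in $\mathbf{q}$ makes explicit the $\Re$/$-\Im$ block pattern that the paper only asserts, and your collapse to $\mathbf{R}_l=\mathbf{F}_l(\mathbf{D}_l\mathbf{D}_l^\mathrm{H}+\mathbf{S}_l)\mathbf{F}_l^\mathrm{H}$ is equivalent to, and more compact than, the paper's $\tilde{s}_{\kappa g,l}$ double sum, since $\sum_{g,\kappa}\tilde{s}_{\kappa g,l}\mathbf{s}_{g,l}\mathbf{s}_{\kappa,l}^\mathrm{H}=\mathbf{S}_l(\mathbf{S}_l^\mathrm{H})^{-1}\mathbf{S}_l^\mathrm{H}=\mathbf{S}_l$.
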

	where $\mathbf{q}$ stands for $\mathbf{q}_\text{R}$ to simplify notation.
	The analytical expressions of the terms $\mathcal{I}_{\mathbf{u}\mathbf{v}}$, for $\mathbf{u}, \mathbf{v} \in \{\mathbf{x}, \mathbf{y}, \mathbf{z}\}$, are provided in \eqref{f_uv} at the top of the next page. In addition, the expressions of $\mathcal{I}_{\mathbf{u}\mathbf{q}}$ and $\mathcal{I}_{\mathbf{q}\mathbf{q}}$ are given by \eqref{fuq} and \eqref{fqq} in Appendix~B, respectively.
	Here, the derivative matrices involved in \eqref{f_uv} are specified as 
$\dot{\mathbf{B}_l}_\mathbf{u} = \left[ \frac{\partial \mathbf{b}_{1,l}}{\partial u_1}, \frac{\partial \mathbf{b}_{2,l}}{\partial u_2}, \ldots, \frac{\partial \mathbf{b}_{M,l}}{\partial u_M} \right]$ and
		$\dot{\mathbf{A}_l}_\mathbf{u} = \left[ \frac{\partial \mathbf{a}_{1,l}}{\partial u_1}, \frac{\partial \mathbf{a}_{2,l}}{\partial u_2}, \ldots, \frac{\partial \mathbf{a}_{M,l}}{\partial u_M} \right]^\mathrm{T}$.
	The definitions of $\dot{\mathbf{B}_l}_{\mathbf{v}}$ and $\dot{\mathbf{A}_l}_{\mathbf{v}}$ are analogous to those of $\dot{\mathbf{B}_l}_{\mathbf{u}}$ and $\dot{\mathbf{A}_l}_{\mathbf{u}}$.
		The vector $\mathbf{s}_{\kappa,l} \in \mathbb{C}^{NN_\mathrm{RF} \times 1}$ corresponds to the $\kappa$-th column of the signal covariance matrix $\mathbf{S}_l$. Additionally, $\tilde{s}_{\kappa g,l} \in \mathbb{C}$ denotes the $(\kappa, g)$-th entry of $(\mathbf{S}_l^*)^{-1}$.
	\begin{IEEEproof}
		The proof is given in Appendix B. 
	\end{IEEEproof}
	Accordingly, the CRB matrix of $\boldsymbol{\theta}$ is $\mathbf{C} = \mathcal{I}^{-1}(\boldsymbol{\theta},\boldsymbol{\theta})$. Then, the CRB for locating the $m$-th target is $\text{CRB}_m = \sum_{i=0}^{2} \mathbf{C}[m + iM,\, m + iM]$, where $\mathbf{C}[m, m]$, $\mathbf{C}[m+M, m+M]$ and $\mathbf{C}[m+2M, m+2M]$ correspond to the CRB for estimating its $x$, $y$, and $z$-coordinates, respectively  \cite{ELSAA_ISAC}.

	\begin{figure*}[!ht]
		\vspace{-8mm}
		\setlength{\jot}{0pt}
		\begin{align}\label{f_uv}
			&\mathcal{I}_{\mathbf{u}\mathbf{v}}= \frac{P}{\sigma_\mathrm{r}^2} \Big\{\sum_{l=1}^{L}
			\big({\dot{\mathbf{B}_l}_\mathbf{u}}^\mathrm{H} \dot{\mathbf{B}_l}_\mathbf{v} \big)  \odot \big(\mathbf{Q} \mathbf{A}_l \mathbf{F}_l\mathbf{D}_l \mathbf{D}_l^\mathrm{H} \mathbf{F}_l^\mathrm{H} \mathbf{A}_l^\mathrm{H}\mathbf{Q}^\mathrm{H}  \big)^{\mathrm{T}}
			+ 
			\big( {\dot{\mathbf{B}_l}_\mathbf{u}}^\mathrm{H} \mathbf{B}_l \big)    \odot \big( \mathbf{Q} \dot{\mathbf{A}_l}_\mathbf{v} \mathbf{F}_l\mathbf{D}_l \mathbf{D}_l^\mathrm{H} \mathbf{F}_l^\mathrm{H} \mathbf{A}_l^\mathrm{H} \mathbf{Q}^\mathrm{H}\big)^{\mathrm{T}}   \notag \\
			&\ \ \ \ \ \ \ + 
			\big( \mathbf{B}_l^\mathrm{H} \dot{\mathbf{B}_l}_\mathbf{v}  \big) \odot \big(   \mathbf{Q} \mathbf{A}_l \mathbf{F}_l\mathbf{D}_l \mathbf{D}_l^\mathrm{H} \mathbf{F}_l^\mathrm{H} {\dot{\mathbf{A}_l}_\mathbf{u}}^\mathrm{H} \mathbf{Q}^\mathrm{H}  \big)^{\mathrm{T}} 
			+ 
			\big( \mathbf{B}_l^\mathrm{H} \mathbf{B}_l  \big) \odot  \big( \mathbf{Q} \dot{\mathbf{A}_l}_\mathbf{v} \mathbf{F}_l\mathbf{D}_l \mathbf{D}_l^\mathrm{H}\mathbf{F}_l^\mathrm{H}{\dot{\mathbf{A}_l}_\mathbf{u}}^\mathrm{H} \mathbf{Q}^\mathrm{H}\big)^{\mathrm{T}}   \notag \\ 
			& \ \ \ +   \sum_{l=1}^{L} \sum_{g=1}^{NN_\mathrm{RF}} \sum_{\kappa=1}^{NN_\mathrm{RF}}\tilde{s}_{\kappa g,l}  \Big( \big( \mathbf{Q} \mathbf{A}_l \mathbf{F}_l\mathbf{s}_{g,l} 	\mathbf{s}_{\kappa,l}^\mathrm{H}\mathbf{F}_l^\mathrm{H} \mathbf{A}_l^\mathrm{H} \mathbf{Q}^\mathrm{H}\big)^\mathrm{T} \odot \big( {\dot{\mathbf{B}_l}_\mathbf{u}}^\mathrm{H} \dot{\mathbf{B}_l}_\mathbf{v} \big)
			+ 
			\big( \mathbf{Q}\dot{\mathbf{A}_l}_\mathbf{v} \mathbf{F}_l\mathbf{s}_{g,l} 	\mathbf{s}_{\kappa,l}^\mathrm{H} \mathbf{F}_l^\mathrm{H} \mathbf{A}_l^\mathrm{H} \mathbf{Q}^\mathrm{H} \big)^\mathrm{T}  \odot \big( {\dot{\mathbf{B}_l}_\mathbf{u}}^\mathrm{H} 	\mathbf{B}_l \big) \notag \\
			&\ \ \ \ \ \ \ + 
			\big(  \mathbf{Q} \mathbf{A}_l \mathbf{F}_l\mathbf{s}_{g,l} 	\mathbf{s}_{\kappa,l}^\mathrm{H} \mathbf{F}_l^\mathrm{H} {\dot{\mathbf{A}_l}_\mathbf{u}}^\mathrm{H}\mathbf{Q}^\mathrm{H} \big)^\mathrm{T}  \odot \big(  \mathbf{B}_l^\mathrm{H}\dot{\mathbf{B}_l}_\mathbf{v}   \big)
			+ 
			\big(  \mathbf{Q} \dot{\mathbf{A}_l}_\mathbf{v} \mathbf{F}_l \mathbf{s}_{g,l} 	\mathbf{s}_{\kappa,l}^\mathrm{H} \mathbf{F}_l^\mathrm{H}  {\dot{\mathbf{A}_l}_\mathbf{u}}^\mathrm{H}\mathbf{Q}^\mathrm{H} \big)^\mathrm{T}   \odot \big(  \mathbf{B}_l^\mathrm{H} 	\mathbf{B}_l\big)\Big)  \Big\}.  
		\end{align}
		\hrule
	\end{figure*}

	In the following, we jointly design $\widetilde{\mathbf{F}}_n$, $\mathbf{T}_{n,l}$, $\mathbf{D}_l$, and $\mathbf{S}_l$ to maximize the minimum secrecy rate across all subcarriers and users under the maximal \gls{CRB} constraint, i.e.,
	{
	\begin{align} 
		&\max_{\widetilde{\mathbf{F}}_n, \mathbf{T}_{n,l}, \mathbf{D}_l, \mathbf{S}_l} \min_{k,l} \left( r_{k,l}^\mathrm{u} - \max_{m} r_{mk,l}^\mathrm{t} \right)^+ \notag\\
		&\text{s.t.}  \quad \text{CRB}_{m} \leqslant \Omega, \quad \forall m \in \{1,\ldots, M \}, \notag \\ 
		&\sum_{l=1}^{L} \left(\operatorname{tr}\left( \mathbf{F}_{n,l} \mathbf{D}_{n,l}\mathbf{D}_{n,l}^\mathrm{H} \mathbf{F}_{n,l}^\mathrm{H}  \right) \!+\!  \operatorname{tr}\left(\mathbf{F}_{n,l}\mathbf{E}_n\mathbf{S}_{l}\mathbf{E}_n^\text{H}\mathbf{F}_{n,l}^\mathrm{H} \right)\right) \!\leqslant\! P_{n},  \notag \\ 
		&\left|[\widetilde{\mathbf{F}}_n]_{i,j}\right| = 1,\quad \forall [\widetilde{\mathbf{F}}_n]_{i,j}\neq 0,\ \forall n \in \{1,\ldots,N\}, \notag\\
		&0\leq \tau_{g,n_{\mathrm{RF}}} \leq \tau_\text{max},  \forall g \in \{1,\ldots,G\},  \forall n_{\mathrm{RF}} \in \{1,\ldots,N_{\mathrm{RF}}\},\label{constraint problem}
	\end{align}
	}
	where $\mathbf{E}_n\triangleq\left[\overset{n-1}{\overbrace{\mathbf{0}_{N_{\mathrm{RF}}},\ldots\mathbf{0}_{N_{\mathrm{RF}}}}},\mathbf{I}_{N_{\mathrm{RF}}},\overset{N-n}{\overbrace{\mathbf{0}_{N_{\mathrm{RF}}},\ldots\mathbf{0}_{N_{\mathrm{RF}}}}}\right]$, $\mathbf{F}_{n,l}\triangleq\widetilde{\mathbf{F}}_n\mathbf{T}_{n,l}$, $\Omega$ is the maximum tolerable CRB, $P_n$ denotes the transmit power budget of the $n$-th BS, and $\tau_{\max}$ denotes the maximum time delay supported by each TTD module. {The objective in (\ref{constraint problem}) is to improve the minimum secrecy performance among all user--subcarrier pairs, so as to prevent certain links from suffering excessively weak secrecy performance and becoming bottlenecks of the overall system.}

	It is challenging to solve problem (\ref{constraint problem}) for three main reasons. First, the nonconvex objective function and CRB constraint preclude the use of efficient convex optimization algorithm. Second, multiplicative relationships between variables introduce nonconvexity and multiple local optima, complicating gradient-based optimization. Third, the high dimensionality of the optimization variables results in prohibitive computational complexity.
	
	\section{GNN-Enabled Joint Optimization Approach}
	{ As illustrated in Fig.~\ref{section_Fig}, to accommodate the structured interactions among multiple entities and the varying system configurations in the considered cooperative \gls{THz}-\gls{ISAC} system}, we propose a Mamba-empowered \gls{GNN} to perform the joint design, which consists of four functional blocks: (\uppercase\expandafter{\romannumeral1}) {Message passing mechanism}: A neighborhood aggregation mechanism facilitates the propagation of channel-related information among nodes. At each iteration, new messages are generated by aggregating both incoming and outgoing edge information, with iterative updates across the graph; (\uppercase\expandafter{\romannumeral2}) {Message generation module}: An MLP-based block denoted as $\mathcal{F}(\cdot)$ is employed to extract features from the aggregated messages and generate new messages for subsequent processing; (\uppercase\expandafter{\romannumeral3}) {Mamba module}: {Leveraging the selection mechanism of state space models (SSMs), 	the Mamba blocks $\mathcal{G}(\cdot)$ further enhance the modeling of longer-range task-related dependencies beyond local neighborhood interactions in a scalable manner, facilitating cooperative ISAC system design}.

	
	\begin{figure*} [ht]
		\vspace{-4mm}
		\centering
		\includegraphics[width=1\linewidth]{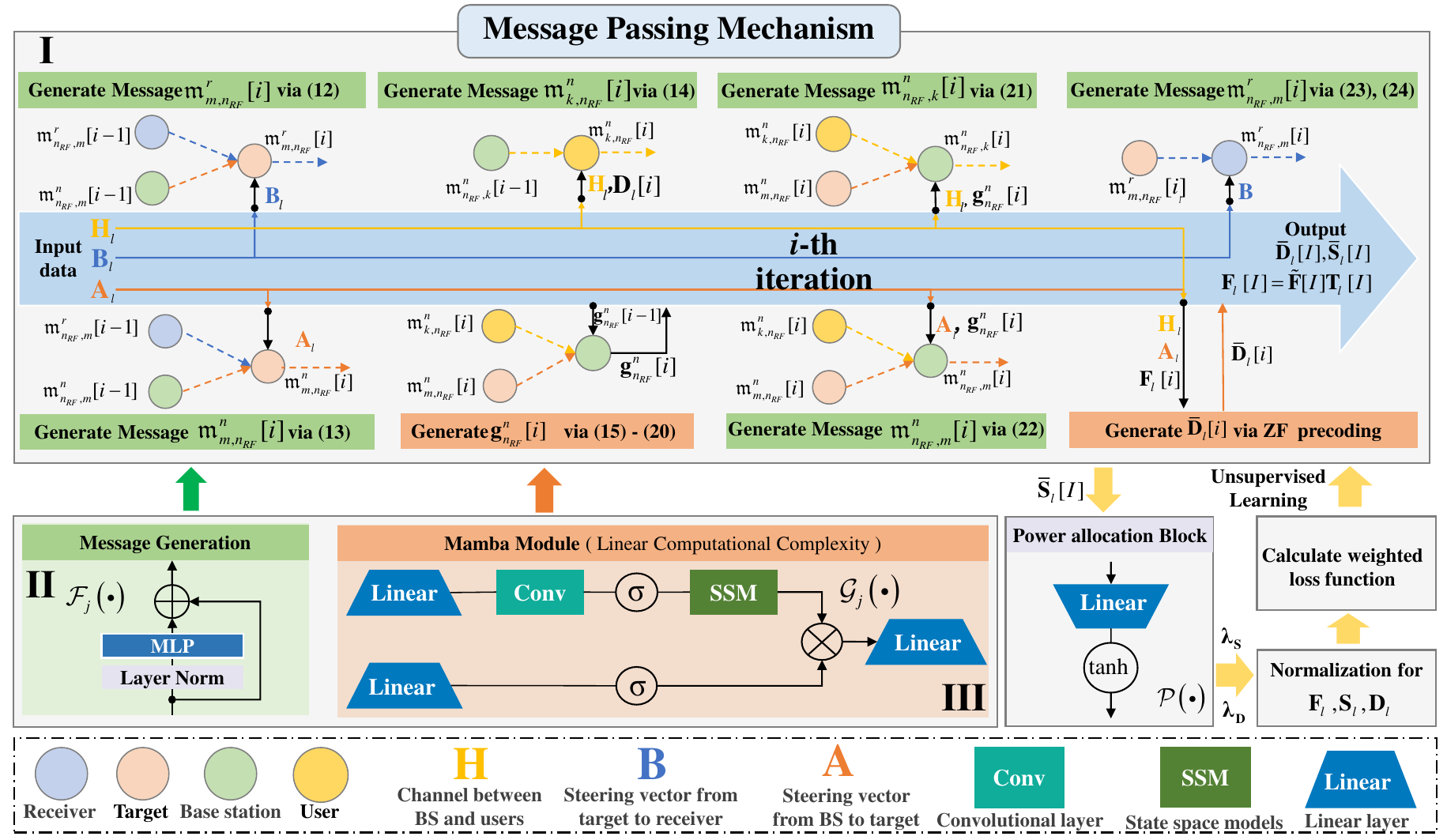}

		\caption{Proposed Mamba-empowered GNN for hybrid beamforming and radar signal design. In this figure, we take the $l$-th subcarrier as an example. The aggregated TTD matrix is defined as \( \mathbf{T}_l = \mathrm{blkdiag}(\mathbf{T}_{1,l}, \ldots, \mathbf{T}_{N,l}) \in \mathbb{C}^{ N G N_\mathrm{RF} \times N N_\mathrm{RF}} \). $\overline{\mathbf{S}}_l$ and ${\mathbf{S}}_l$ denote the radar signal covariance matrices before and after power normalization. $\overline{\mathbf{D}}_l$ and ${\mathbf{D}}_l$ denote the digital precoding matrices before and after power normalization. Since the phase shifter matrix \( \widetilde{\mathbf{F}} \) is identical across all subcarriers, the multiplication \( \mathbf{F}_l = \widetilde{\mathbf{F}} \mathbf{T}_l \) is used for notational simplicity.}
		\vspace{-1mm}\label{section_Fig}
	\end{figure*}

	\subsection{Graph Representation}\label{sec Graph Representation}

	\subsubsection{Vertices and their feature vectors}
	The graph consists of four types of vertices: user vertices $\{v^u_k \mid \forall k \in \mathcal{K}\}$, target vertices $\{v^t_m \mid \forall m \in \mathcal{M}\}$, BS RF-chain vertices $\{v^n_{n_{\text{RF}}} \mid \forall n \in \mathcal{N}, n_{\text{RF}} \in \mathcal{N}_{\text{RF}}\}$, and receiver RF-chain vertices $\{v^r_{n_{\text{RF}}} \mid \forall n_{\text{RF}} \in \mathcal{N}_{\text{RF}}\}$, where $\mathcal{K}\triangleq\left\{1,\ldots,K\right\}$, $\mathcal{M}\triangleq\left\{1,\ldots,M\right\}$,  $\mathcal{N}\triangleq\left\{1,\ldots,{N}\right\}$, and $\mathcal{N}_{\text{RF}}\triangleq\left\{1,\ldots,{N}_{\text{RF}}\right\}$.

	Since the design of the PS matrix $\widetilde{\mathbf{F}}\triangleq\mathrm{blkdiag}\left(\widetilde{\mathbf{F}}_{1},\ldots,\widetilde{\mathbf{F}}_{N}\right)$, TDD matrix $\mathbf{T}_l\triangleq\mathrm{blkdiag}\left(\widetilde{\mathbf{T}}_{1,l},\ldots,\widetilde{\mathbf{T}}_{N,l}\right)$, and radar covariance matrix $\mathbf{S}_l$ occurs at the BS, we construct feature vectors only for the BS RF-chain vertices. Specifically, each BS RF-chain node $v^n_{n_{\text{RF}}}$ has a feature vector $\mathbf{g}^n_{n_{\text{RF}}} \in \mathbb{R}^{(2J_t + G + 2N N_{\text{RF}}) \times 1}$ capturing the sufficient information for designing \gls{ISAC} precoder at the \gls{BS}, where $J_t = N_t / N_{\text{RF}}$ is the number of antennas per RF chain. This vector comprises: (i) The aggregated phase shifter vector $\boldsymbol{\phi}_{n_{\text{RF}}}^n \in \mathbb{C}^{J_t \times 1}$ is formed by stacking the phase vectors from the $G$ TTD branches associated with a single RF chain; (ii) the delay vector $\boldsymbol{\tau}_{n_{\text{RF}}}^n \in \mathbb{R}^{G \times 1}$, used to compute the frequency-dependent TTD vector $\mathbf{t}_{l,n_{\text{RF}}}^n$; and (iii) the radar waveform covariance vector $\mathbf{s}^n_{n_{\text{RF}}} \in \mathbb{C}^{N N_{\text{RF}} \times 1}$, used to construct $\mathbf{S}_l$. The complete feature vector is given by $\mathbf{g}^n_{n_{\text{RF}}} = [\Re(\boldsymbol{\phi}_{n_{\text{RF}}}^n)^\mathrm{T}, \Im(\boldsymbol{\phi}_{n_{\text{RF}}}^n)^\mathrm{T}, (\boldsymbol{\tau}_{n_{\text{RF}}}^n)^\mathrm{T}, \Re(\mathbf{s}^n_{n_{\text{RF}}})^\mathrm{T}, \Im(\mathbf{s}^n_{n_{\text{RF}}})^\mathrm{T}]^\mathrm{T}$. Combining these features across all BS RF-chains enables the construction of $\widetilde{\mathbf{F}}$, $\mathbf{T}_l$, and $\mathbf{S}_l$.
	
	\subsubsection{Edges and their feature vectors}
	
	{
	The graph is constructed according to the interaction relationship of the considered system, where only physically meaningful links are retained.
	}
	Specifically, each user vertex $v^u_k$ connects to all BS RF-chain vertices $v^n_{n_{\text{RF}}}$ via an edge $e^{n}_{k,n_{\text{RF}}}$, representing the channel between the $k$-th user and the $n_{\text{RF}}$-th RF chain of the $n$-th BS. Similarly, edges $e^{n}_{m,n_{\text{RF}}}$ and $e^{r}_{m,n_{\text{RF}}}$ model the channels from the $m$-th target to the $n_{\text{RF}}$-th RF chain of the $n$-th transmitting BS and  receiving \gls{BS}, respectively. 
	Their real-valued feature vectors are:
	$
		\mathbf{x}^{n,e}_{k,n_{\text{RF}}} = \left[\Re\left(\mathbf{h}^{n}_{k,n_{\text{RF}}}\right)^{\mathrm{T}}, \Im\left(\mathbf{h}^{n}_{k,n_{\text{RF}}}\right)^{\mathrm{T}}\right]^{\mathrm{T}} \in \mathbb{R}^{2J_t \times 1}, $
	$	\mathbf{x}^{n,e}_{m,n_{\text{RF}}} = \left[\Re\left(\mathbf{a}^{n}_{m,n_{\text{RF}}}\right)^{\mathrm{T}}, \Im\left(\mathbf{a}^{n}_{m,n_{\text{RF}}}\right)^{\mathrm{T}}\right]^{\mathrm{T}} \in \mathbb{R}^{2J_t \times 1},$
	and $	\mathbf{x}^{r,e}_{m,n_{\text{RF}}} = \left[\Re\left( \mathbf{b}^{r}_{m,n_{\text{RF}}}\right)^{\mathrm{T}}, \Im\left( \mathbf{b}^{r}_{m,n_{\text{RF}}}\right)^{\mathrm{T}}\right]^{\mathrm{T}} \in \mathbb{R}^{2J_r \times 1}$. 
	{As a result, the computational cost of  subsequent message passing is governed by the number of retained edges, instead of growing quadratically with the number of nodes as in a fully connected graph.}

	\subsection{Proposed GNN Structure} \label{GNN Structure}
	
In the graph, message exchanges occur between:
\begin{itemize}
	\item User-Transmitting \gls{BS} RF chain pairs: $\mathfrak{m}^n_{k,n_{\text{RF}}}(v^u_k \rightarrow v^n_{n_{\text{RF}}})$ and $\mathfrak{m}^n_{n_{\text{RF}},k}(v^n_{n_{\text{RF}}} \rightarrow v^u_k)$
	\item Target-Transmitting \gls{BS} RF chain pairs: $\mathfrak{m}^n_{m,n_{\text{RF}}}(v^t_m \rightarrow v^n_{n_{\text{RF}}})$ and $\mathfrak{m}^n_{n_{\text{RF}},m}(v^n_{n_{\text{RF}}} \rightarrow v^t_m)$
	\item Target-Receiving \gls{BS}  RF chain pairs: $\mathfrak{m}^r_{m,n_{\text{RF}}}(v^t_m \rightarrow v^r_{n_{\text{RF}}})$ and $\mathfrak{m}^r_{n_{\text{RF}},m}(v^r_{n_{\text{RF}}} \rightarrow v^t_m)$
\end{itemize}

In each GNN iteration, message passing constructs the RF chain vertex features $\mathbf{g}^n_{n_{\text{RF}}}$ to optimize:  
(1) the PS matrix $\widetilde{\mathbf{F}}$,  
(2) the TTD matrix $\mathbf{T}_l$, and  
(3) the radar covariance matrix (prior to power normalization) $\overline{\mathbf{S}}_l$.  
Subsequently, to eliminate multiuser interference, we adopt the ZF precoding scheme from~\cite{ZFHP} to design the baseband precoder $\overline{\mathbf{D}}_l$ (prior to power normalization), based on $\widetilde{\mathbf{F}}$ and $\mathbf{T}_l$. This approach improves training efficiency by allowing the focus to remain on optimizing $\widetilde{\mathbf{F}}$, $\mathbf{S}_l$, and $\mathbf{T}_l$.
Finally, we normalize ${\mathbf{D}}_l$ and ${\mathbf{A}}_l$, and then use a Power Allocation Block (PAB) to distribute power  between communication and sensing.
As illustrated in Fig.~\ref{section_Fig}, the GNN performs message passing via neighborhood aggregation. 
The messages $\{\mathfrak{m}^{r}_{m, n_{\text{RF}}}[i], \mathfrak{m}^{n}_{m, n_{\text{RF}}}[i], \mathfrak{m}^{n}_{k, n_{\text{RF}}}[i], \mathfrak{m}^{r}_{n_{\text{RF}}, k}[i], \mathfrak{m}^{n}_{n_{\text{RF}}, m}[i], \mathfrak{m}^{r}_{n_{ \text{RF}}, m}[i]\}$ are generated sequentially with MLP block $\mathcal{F}_1(\cdot) \sim \mathcal{F}_6(\cdot)$. Using neighborhood messages $\mathfrak{m}^{n}_{k, n_{\text{RF}}}, \mathfrak{m}^{n}_{m, n_{\text{RF}}}$ and prior states $\widetilde{\mathbf{F}}[i-1], \mathbf{T}_{l}[i-1], \overline{\mathbf{S}}_{l}[i-1]$, the Mamba-based blocks $\mathcal{G}_1(\cdot) \sim \mathcal{G}_3(\cdot)$ generate $\widetilde{\mathbf{F}}[i], \mathbf{T}_{l}[i], \overline{\mathbf{S}}_{l}[i]$. 

Details are given as follows:
	\subsubsection {Generating messages $\mathfrak{m}^{r}_{m,n_{\text{RF}}}[i]$ and $\mathfrak{m}^{n}_{m,n_{\text{RF}}}[i]$}\label{F_1 F_2}

The target vertex $v^t_m$ aggregates messages from its neighboring vertices representing both transmitting and receiving \gls{BS} chains:
Transmitting BS messages $\left(\{\mathfrak{m}^{n}_{n_{\text{RF}},m}[i-1], \forall n \in \mathcal{N}, \forall n_{\text{RF}} \in \mathcal{N}_{\text{RF}}\}\right)$ and receiving \gls{BS} messages $\left(\{\mathfrak{m}^{r}_{n_{\text{RF}},m}[i-1], \forall n_{\text{RF}} \in \mathcal{N}_{\text{RF}}\}\right)$.
The aggregation is defined by
	$\mathfrak{m}_m[i-1] \triangleq \mathcal{S}\left( \{\mathfrak{m}^{n}_{n_{\text{RF}},m}[i-1]\} \right) \in \mathbb{R}^{D\times 1}$ and
	$\mathfrak{m}^r_m[i-1] \triangleq \mathcal{S}\left( \{\mathfrak{m}^{r}_{n_{\text{RF}},m}[i-1]\} \right) \in \mathbb{R}^{D\times 1}$,
where $\mathcal{S}(\cdot)$ is an order-invariant sum operator (insensitive to $N$ and $N_{RF}$ ordering) and $D$ is the hidden dimension.

These aggregated messages are then processed using $\mathcal{F}_1(\cdot)$ and $\mathcal{F}_2(\cdot)$ with corresponding edge features:
	\begin{align}
			\label{F_1}
			\mathfrak{m}^{r}_{m,n_{\text{RF}}}[i]
			=&\mathcal{F}_1\left(\mathfrak{m}_m[i-1],\mathfrak{m}^r_m[i-1],\mathbf{x}^{r,e}_{m,n_{\text{RF}}} \right),\\
			\label{F_2}
			\mathfrak{m}^{n}_{m,n_{\text{RF}}}[i]
			=&\mathcal{F}_2\left(\mathfrak{m}_m[i-1],\mathfrak{m}^r_m[i-1],\mathbf{x}^{n,e}_{m,n_{\text{RF}}} \right).
		\end{align}

\subsubsection {Generating message $\mathfrak{m}^n_{k,n_{\text{RF}}}[i]$}

The user vertex $v^u_k$ aggregates messages 
$\{\mathfrak{m}^{n}_{n_{\text{RF}},k}[i-1], \forall n \in \mathcal{N}, \forall n_{\text{RF}} \in \mathcal{N}_{\text{RF}}\}$ 
	from all RF chain vertices of $N$ BSs, i.e.,
$\mathfrak{m}_k[i-1] = \mathcal{S}\left( \{\mathfrak{m}^{n}_{n_{\text{RF}},k}[i-1], \forall n \in \mathcal{N}, \forall n_{\text{RF}} \in \mathcal{N}_{\text{RF}}\} \right)$,
where $\mathfrak{m}_k[i-1] \in \mathbb{R}^{D \times 1}$. 
	The message $\mathfrak{m}^n_{k,n_{\text{RF}}}[i]$ is generated as:
\begin{align}
	\label{F_3}
	\mathfrak{m}^{n}_{k,n_{\text{RF}}}[i] &= \mathcal{F}_3 \Big( \mathfrak{m}_k[i-1], \, \mathbf{x}^{n,e}_{k,n_{\text{RF}}}, \, \mathbf{d}^n_{n_{\text{RF}}}[i-1] \Big),
\end{align}
where $\mathbf{d}^n_{n_{\text{RF}}}[i-1] = \big[ \Re(d^n_{n_{\text{RF}}}[i-1]), \, \Im(d^n_{n_{\text{RF}}}[i-1]) \big]^\mathrm{T} \in \mathbb{R}^{2}$ 
denotes the digital beamforming vector from the previous iteration, and 
$\big\{ d^{n}_{n_{\text{RF}}} \in \mathbb{C} \mid \forall n \in \mathcal{N}, \, \forall n_{\text{RF}} \in \mathcal{N}_{\text{RF}} \big\}$ 
represents the $(n-1) N_{\text{RF}} + n_{\text{RF}}$-th element of $\mathbf{d}_{k,l}$.

\subsubsection{Generating RF Chain Features $\mathbf{g}_{n,n_{\text{RF}}}[i]$} \label{Gen_fs}
The vertex $v^n_{n_{\text{RF}}}$ aggregates messages from neighboring target and user vertices:
\begin{align}
	\mathfrak{m}^n_{n_{\text{RF}}}[i] &\triangleq \mathcal{S}\left( \{\mathfrak{m}^{n}_{m,n_{\text{RF}}}[i], \forall m \in \mathcal{M} \} \right), \label{M_mtoBS} \\  
	\hat{\mathfrak{m}}^n_{n_{\text{RF}}}[i] &\triangleq \mathcal{S}\left( \{\mathfrak{m}^n_{k,n_{\text{RF}}}[i], \forall k \in \mathcal{K} \} \right), \label{M_ktoBS}
\end{align}
where $\mathfrak{m}^n_{n_{\text{RF}}}[i], \hat{\mathfrak{m}}^n_{n_{\text{RF}}}[i] \in \mathbb{R}^{D \times 1}$ are dimension-invariant to $M$ and $K$. 

To decouple optimization variables $\widetilde{\mathbf{F}}$, $\mathbf{T}_{l}$, and $\overline{\mathbf{S}}_{l}$, vertex $v^n_{n_{\text{RF}}}$ uses three Mamba blocks $\mathcal{G}_1(\cdot)$, $\mathcal{G}_2(\cdot)$, and $\mathcal{G}_3(\cdot)$ to generate $\mathbf{g}^n_{n_{\text{RF}}}[i]$, incorporating the previous state $\mathbf{g}^n_{n_{\text{RF}}}[i-1]$ for iterative refinement. The updates are:
\begin{align}
	\hat{\mathbf{g}}^n_{n_{\text{RF}}}[i] &= \mathcal{G}_1\left(\mathfrak{m}^n_{n_{\text{RF}}}[i], \hat{\mathfrak{m}}^n_{n_{\text{RF}}}[i], \hat{\mathbf{g}}^n_{n_{\text{RF}}}[i-1] \right), \label{g_hat} \\
	\bar{\mathbf{g}}^n_{n_{\text{RF}}}[i] &= \mathcal{G}_2\left(\mathfrak{m}^n_{n_{\text{RF}}}[i], \hat{\mathfrak{m}}^n_{n_{\text{RF}}}[i], \bar{\mathbf{g}}^n_{n_{\text{RF}}}[i-1] \right), \label{g_bar} \\
	\tilde{\mathbf{g}}^n_{n_{\text{RF}}}[i] &= \mathcal{G}_3\left(\mathfrak{m}^n_{n_{\text{RF}}}[i], \hat{\mathfrak{m}}^n_{n_{\text{RF}}}[i], \hat{\mathbf{g}}^n_{n_{\text{RF}}}[i], \bar{\mathbf{g}}^n_{n_{\text{RF}}}[i], \tilde{\mathbf{g}}^n_{n_{\text{RF}}}[i-1] \right), \label{g_tilde}
\end{align}
where $\hat{\mathbf{g}}^n_{n_{\text{RF}}}[i] \triangleq [\Re(\boldsymbol{\phi}_{gn_{\mathrm{RF}}})^{\mathrm{T}}[i], \Im(\boldsymbol{\phi}_{gn_{\mathrm{RF}}})^{\mathrm{T}}[i]]^{\mathrm{T}} \in \mathbb{R}^{2J_t \times 1}$ (used for constructing $\tilde{\mathbf{F}}$), $\bar{\mathbf{g}}^n_{n_{\text{RF}}}[i] \triangleq \boldsymbol{\tau}_{n_{\mathrm{RF}}}^n \in \mathbb{R}^{G \times 1}$ (used for constructing $\mathbf{T}_l$), and $\tilde{\mathbf{g}}^n_{n_{\text{RF}}}[i] \triangleq [\Re(\mathbf{s}^n_{n_{\text{RF}}})^{\mathrm{T}}[i], \Im(\mathbf{s}^n_{n_{\text{RF}}})^{\mathrm{T}}[i]]^\text{T} \in \mathbb{R}^{2NN_{\text{RF}} \times 1}$ (used for constructing $\overline{\mathbf{S}}_l$).

The final feature vector is concatenated as:
\begin{align}
	\mathbf{g}_{n_{\text{RF}}}^n[i] = \left[ (\hat{\mathbf{g}}^n_{n_{\text{RF}}})^\mathrm{T}[i], (\bar{\mathbf{g}}^n_{n_{\text{RF}}})^\mathrm{T}[i], (\tilde{\mathbf{g}}^n_{n_{\text{RF}}})^\mathrm{T}[i] \right]^\mathrm{T}.\label{Final_Feature}
\end{align}
	\subsubsection {Generating messages $\mathfrak{m}^n_{n_{\text{RF}},k}[i]$ and $\mathfrak{m}^{n}_{n_{\text{RF}},m}[i]$}
The messages $\mathfrak{m}^n_{n_{\text{RF}},k}[i]$ and $\mathfrak{m}^n_{n_{\text{RF}},m}[i]$ are generated by the operators $\mathcal{F}_4(\cdot)$ and $\mathcal{F}_5(\cdot)$, respectively, as follows:
	\begin{align}
		\label{F_4}
		\mathfrak{m}^n_{n_{\text{RF}},k}[i] &= \mathcal{F}_4\left(\mathfrak{m}^n_{n_{\text{RF}}}[i], \hat{\mathfrak{m}}^n_{n_{\text{RF}}}[i], \mathbf{g}^n_{n_{\text{RF}}}[i], \mathbf{x}^{n,e}_{k,n_{\text{RF}}}\right), \\
		\label{F_5}
		\mathfrak{m}^n_{n_{\text{RF}},m}[i] &= \mathcal{F}_5\left(\mathfrak{m}^n_{n_{\text{RF}}}[i], \hat{\mathfrak{m}}^n_{n_{\text{RF}}}[i], \mathbf{g}^n_{n_{\text{RF}}}[i], \mathbf{x}^{n,e}_{m,n_{\text{RF}}}\right),
	\end{align}

{\subsubsection {Generating message $\mathfrak{m}^{r}_{n_{\text{RF}},m}[i]$}
	The vertex $v^r_{n_{\text{RF}}}$ first receives messages  $\{\mathfrak{m}^{r}_{m,n_{\text{RF}}}[i],\forall m \in \mathcal{M} \}$ from the $M$ target vertices $\{v^t_m, \forall m \in \mathcal{M} \}$. The aggregated message is
	\begin{align}
		\mathfrak{m}^r_{n_{\text{RF}}}[i] \triangleq \mathcal{S}\left( {\mathfrak{m}^{r}_{m,n_{\text{RF}}}[i] \mid \forall m \in \mathcal{M} } \right). \label{M_mtoReceiver}
	\end{align}
	Subsequently, the message for each target vertex is generated using $\mathfrak{m}^r_{n_{\text{RF}}}[i]$ and the edge feature ${\mathbf{x}^{r,e}_{m,n_{\text{RF}}}}$:
	\begin{align}
		\label{F_6}
		\mathfrak{m}^{r}_{n_{\text{RF}},m}[i] = \mathcal{F}_6\left(\mathfrak{m}^r_{n_{\text{RF}}}[i], \mathbf{x}^{r,e}_{m,n_{\text{RF}}} \right).
	\end{align}

	}
	\begin{algorithm}[t]	
		\setstretch{0.8}
		\caption{Proposed Message Passing Algorithm of GNN.}
		\label{algorithm1}
		\begin{algorithmic}[1]
			\INPUT Channel matrices $\mathbf{H}_l$, $\mathbf{A}_l$, and $\mathbf{B}_l$; 
			
			Initializing analog precoder $\mathbf{F}_l$, digital precoder $\overline{\mathbf{D}}_l$, and radar signal covariance matrix $\overline{\mathbf{S}}_l$. 
			\FOR {$i = 1,2,\cdots,I$}        
			\STATE Generate messages form the targets to the BSs and receiver via Eqs. (\ref{F_1}) and (\ref{F_2}).   
			\STATE Generate messages form the users to the BSs via Eq. (\ref{F_3}).   
			\STATE Update $\tilde{\mathbf{F}}[i]$, $\mathbf{T}_l[i]$, and $\overline{\mathbf{S}}_l[i]$ via (\ref{g_hat}), (\ref{g_bar}) and (\ref{g_tilde}).   
			\STATE Generate messages form the BSs to users and targets via Eqs. (\ref{F_4}) and (\ref{F_5}).
			\STATE Generate messages form the receiving BS to the targets via Eq. (\ref{F_6}).
			\STATE Update $\overline{\mathbf{D}}_l[i]$ by ZF precoding.
			\ENDFOR
			\OUTPUT Normalize $\overline{\mathbf{D}}_l[i]$, and $\overline{\mathbf{S}}_l[i]$, and then perform the power allocation using $\mathcal{P}(\cdot)$..
			
		\end{algorithmic}
	\end{algorithm}
	\subsubsection{Designing digital precoder $\overline{\mathbf{D}}_l$}
To enhance communication reliability and security, we design the ZF precoding matrix  (prior to power normalization) $\overline{\mathbf{D}}_l$ to simultaneously eliminate inter-user interference and suppress signal leakage to potential eavesdroppers. 
Given the analog beamforming matrix $\mathbf{F}_l[i]$, we first construct the equivalent communication and sensing channels on the $l$-th subcarrier as:
$\mathbf{H}_{l}^{\text{eq}}[i] = \mathbf{H}_{l}^{*} \mathbf{F}_l[i]$ and $\mathbf{A}_{l}^{\text{eq}}[i] = \mathbf{A}_{l}^{*} \mathbf{F}_l[i]$.
To minimize information leakage, we perform singular value decomposition (SVD) on $\mathbf{A}_{l}^{\text{eq}}[i]$ to obtain its null-space matrix $\mathbf{V}_{l}[i] \in \mathbb{C}^{NN_{\text{RF}} \times (NN_{\text{RF}} - M)}$, which satisfies:
$
	\mathbf{A}_{l}^{\text{eq}}[i] \mathbf{V}_{l}[i] = \mathbf{0}.
$
The digital precoding matrix for the $l$-th subcarrier (prior to power normalization) can be designed as
$\overline{\mathbf{D}}_{l}[i] = \mathbf{V}_{l}[i] \left( \mathbf{H}_{l}^{\text{eq}}[i] \mathbf{V}_{l}[i] \right)^{\dagger}.$

After obtaining  $\mathbf{F}_l[I]$, $\overline{\mathbf{D}}_l[I]$, and $\overline{\mathbf{S}}_l[I]$, we perform the power allocation between communication signals and radar signals for satisfying the power budget of problem~\eqref{constraint problem}.  
	
		
	\subsubsection{Power Allocation Between Communication and Radar Signals}
	As shown in the right part of Fig.~\ref{section_Fig}, we first employ PAB, $\mathcal{P}(\cdot)$, to generate the weights $\lambda_{\mathbf{S}}$ and $\lambda_{\mathbf{D}}$ for ${\mathbf{S}}_l$ and ${\mathbf{D}}_{n,l}$, respectively. We first construct the radar covariance matrix. To ensure the Hermitian positive semidefinite property, the intermediate matrix is formed as $\overline{\mathbf{S}}_l \leftarrow \overline{\mathbf{S}}_l \overline{\mathbf{S}}_l^\mathrm{H}$. The normalized radar covariance matrix is then given by
	$
	{\mathbf{S}}_l = \frac{\lambda_{\mathbf{S}} \overline{\mathbf{S}}_l}{\sum_{j=1}^{L} \operatorname{Tr}(\overline{\mathbf{S}}_j)}.
	$
	Based on the resulting radar covariance matrices, the digital precoder for the $n$-th BS on the $l$-th subcarrier is normalized as
	$
	{\mathbf{D}}_{n,l} = \frac{\lambda_{\mathbf{D}} \overline{\mathbf{D}}_{n,l}}{J_t \sqrt{\sum_{j=1}^{L} \|\overline{\mathbf{D}}_{n,j}\|_\text{F}^2}},
	$
	where $J_t$ denotes the number of antennas per RF chain, and
	$
	\lambda_{\mathbf{D}}= \sqrt{P_n-\sum_{j=1}^{L} \operatorname{tr}\left(\mathbf{F}_{n,j}\mathbf{E}_n\mathbf{S}_{j}\mathbf{E}_n^\text{H}\mathbf{F}_{n,j}^\mathrm{H} \right)}.
	$

It is worth pointing out that the above power allocation results in the following problems:
		
			1) \textbf{Inconsistent Scaling:} Variations in the pre-normalization power levels $\|\overline{\mathbf{D}}_{n,l}\|_\text{F}^2$ across different BSs lead to non-uniform scaling factors.
			
			2) \textbf{Performance Impact:} Degrades interference suppression capability and reduces information leakage mitigation effectiveness.

	\textbf{Algorithm~\ref{algorithm1}} summarizes the \gls{GNN}-enabled joint optimization approach.

	\subsection{Proposed Graph Mamba Block}
	\begin{figure}
		\vspace{-4mm}
		\centering
		\includegraphics[width=1\linewidth]{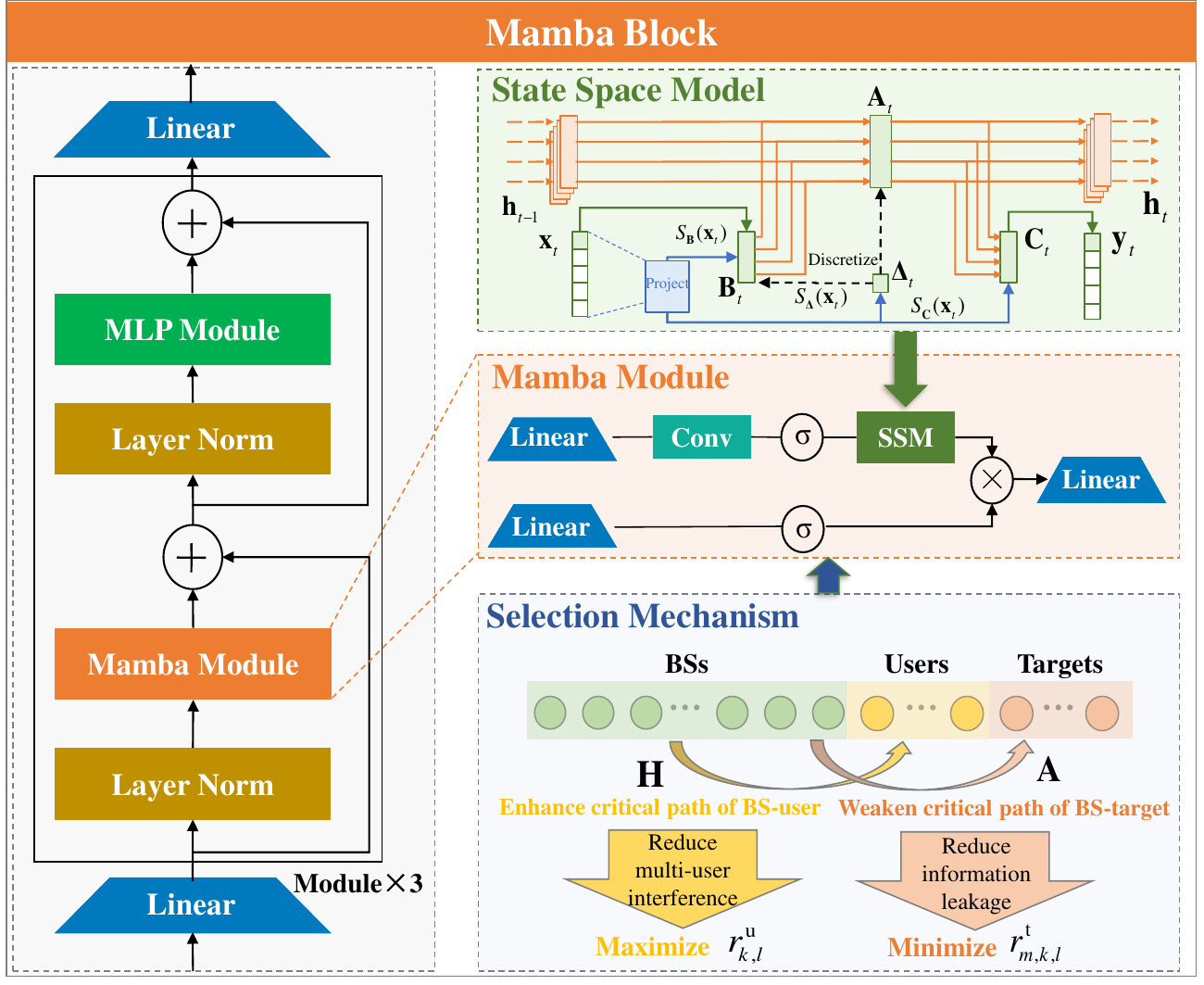}
		\caption{Detailed neural network structure of Mamba block.}
\label{Mamba Block}
	\end{figure}
	
In this subsection, we detail the architectures of the message generation blocks $\mathcal{F}(\cdot)$ and the Mamba blocks $\mathcal{G}(\cdot)$, which generate the PS ($\widetilde{\mathbf{F}}$), the TTD ($\mathbf{T}_l$), and the radar signal covariance matrix ($\overline{\mathbf{S}}_l$). The integration of the Mamba block \cite{mamba} enables adaptive feature extraction through its selection mechanism and learnable state-space model (SSM). 
	
	As illustrated in Fig.~\ref{Mamba Block}, each $\mathcal{G}(\cdot)$ block comprises three identical modules, each containing a Mamba module followed by an MLP module. The Mamba module adopts a dual-path neural network architecture. One branch consists of a sequentially connected linear layer followed by a Swish activation function \cite{swish}. The other branch is composed of a linear layer, a convolutional layer, a Swish activation, and a SSM \cite{mamba}. The outputs of the two branches are then combined via element-wise multiplication and passed through an additional linear layer to produce the final output. The MLP module, on the other hand, follows a standard sequential structure composed of a linear layer, a ReLU activation \cite{relu}, and another linear layer.
	On the other hand, the message generation block $\mathcal{F}(\cdot)$ adopts a simplified MLP-based structure to reduce computational complexity. It consists of a layer normalization followed by three sequential linear layers, with ReLU activations applied after the first two layers. Additionally, we adopt the $\tanh$ activation function within $\mathcal{P}(\cdot)$ because it helps mitigate vanishing gradients during back-propagation.
The input, hidden, and output dimensions of $\mathcal{F}_1(\cdot) \sim \mathcal{F}_6(\cdot)$, $\mathcal{G}_1(\cdot) \sim \mathcal{G}_3(\cdot)$, and $\mathcal{P}(\cdot)$ are detailed in Table~\ref{tab:block_dims}.

	{Compared with traditional model-based optimization and deep learning-based methods, the proposed Mamba-empowered GNN framework offers the following key advantages:
	\begin{itemize}
		\item  \emph{Feature Adaptivity:} The Mamba module contributes critical dynamic feature selection capability through its learnable SSM. Specifically, given the sparse structure of THz channel, the SSM adaptively tunes its parameters to select the most relevant features, concentrating the power on the main beam direction. This approach provides three key advantages:
	    1) Enhanced capture of dominant path components to improve communication performance for specific users;
	    2) Mitigation of multiuser interference and information leakage risks by dynamically filtering irrelevant features;
		3) Natural alignment with THz channels' low-rank nature through the Mamba architecture.		
		Furthermore, this selective mechanism enables fine-grained representation of sparse channels, making it particularly suitable for THz systems.
		\item \emph{Computational Efficiency:} By synergistically integrating the neighborhood aggregation mechanism of GNNs with Mamba's selective SSM, our framework enables efficient processing in high-dimensional  \gls{ELAA} \gls{ISAC} system design. Compared to traditional alternating optimization methods that suffer from high computational complexity, our approach offers three distinct advantages: 1) Maintains linear computational complexity, significantly reducing processing overhead; 2) Enhances real-time performance, achieving inference within $10^{-2}$ second; 
		\item \emph{Generalization and Scalability:} Compared with deep learning architectures that are typically associated with fixed input structures, the proposed Mamba-empowered \gls{GNN} framework is better suited to topology-varying environments. Owing to the permutation-invariant aggregation and structural inductive bias of \glspl{GNN}, the learned model does not rely on a topology-specific global mapping, but instead learns beamforming rules from compositions of local physical interactions, thereby enabling generalization across different network conditions. Furthermore, unlike conventional methods that require solving a new optimization problem for each network realization, the proposed approach generates the transmit design through direct forward inference, which allows it to scale more favorably to large-scale systems.
	\end{itemize}
	}

	\begin{table}[t]
		\centering
		\caption{Input-Output Dimensions of Mamba and MLP Blocks}
		\label{tab:block_dims}
		\begin{tabular}{ccc}
			\toprule
			\textbf{} & \textbf{Block} & \tabincell{c}{\textbf{Input, Hidden, Output} \\ \textbf{Dimensions}} \\
			\midrule
			\multirow{4}{*}{\shortstack{Mamba \\ blocks \\ $\times$ 3 }}
			& $ \mathcal{G}_1(\cdot)$ & $2(D+J_t)$, $2D$, $2J_t$ \\
			& $ \mathcal{G}_2(\cdot)$  & $2D + G$, $2D$, $G$ \\
			& $ \mathcal{G}_3(\cdot)$ & \tabincell{c}{$2D + 2J_t + G + 2NN_{\text{RF}}$, \\ $2D$, $2NN_{\text{RF}}$} \\ 
			\cmidrule(lr){1-3}
			
			\multirow{6}{*}{\shortstack{MLP \\ blocks \\ $\times$ 3}} 
			& $\mathcal{F}_1(\cdot)$ & $2(D+J_r)$, $2D$, $D$\\
			& $\mathcal{F}_2(\cdot)$ & $2(D+J_t)$, $2D$, $D$ \\
			& $\mathcal{F}_3(\cdot)$ & $D+2J_t+2$, $2D$, $D$ \\
			& $\mathcal{F}_4(\cdot)$ & $2(D+2J_t+NN_{\text{RF}})$, $2D$, $D$  \\
			& $\mathcal{F}_5(\cdot)$ & $2(D+2J_t+NN_{\text{RF}})$, $2D$, $D$  \\
			& $\mathcal{F}_6(\cdot)$ & $D+2$, $2D$, $D$  \\ 

			\cmidrule(lr){1-3}
			
			\multirow{1}{*}{\shortstack{PAB} } 
			& $ \mathcal{P}(\cdot)$ & $ 2NN_{\text{RF}}NN_{\text{RF}}$, $2D$, 1   \\
			
			\bottomrule
		\end{tabular}
	\end{table}
	\subsection{Neural Network Training}
	{To optimize the joint design of the hybrid beamforming matrix and radar signal covariance matrix, we adopt an unsupervised training framework for the proposed Mamba-empowered \gls{GNN}, which does not rely on precomputed optimal solutions as supervision. To jointly account for the secrecy-rate objective and the \gls{CRB} constraint\footnote{{Here, the \gls{CRB} is used only as an offline sensing-performance criterion during training. During deployment, the trained network determines the transmit configuration directly from the available \gls{CSI}, without requiring the exact target locations.}}, we construct a weighted loss function that balances secure communication and sensing accuracy, given by
	\begin{align}
		\mathcal{L}(\boldsymbol{\Theta}) 
		= \omega_1 \min_{k,l} \left( r_{k,l}^{\mathrm{u}} - \max_m r_{mk,l}^{\mathrm{t}} \right)
		+ \omega_2 \sum_m (\mathrm{CRB}_{m} - \Omega)^{+},\nonumber
	\end{align}
	where $\omega_1 < 0$ and $\omega_2 > 0$ are weighting parameters that control the trade-off between communication security and sensing accuracy. This trade-off arises from the intrinsic coupling between sensing and secure communication in the considered \gls{ISAC} system. In particular, improving sensing performance requires stronger signal interaction with the target, whereas secure communication seeks to limit the information exposed to the target. The network parameters are optimized by minimizing $\mathcal{L}(\boldsymbol{\Theta})$ via mini-batch stochastic gradient descent. 		During the end-to-end training process, the GNN module can gradually learn to aggregate sensing-related observations from distributed BSs through graph-based message passing, allowing target-location-dependent geometric cues embedded in the composite channel to be jointly propagated and fused. 
	Meanwhile, the Mamba module further models the sequential dependencies of these fused representations and selectively emphasizes the features most relevant to the CRB-based sensing constraints. 
	In this way, the proposed Mamba-GNN enhance geometry-related sensing features that are mainly governed by LoS propagation, while reducing the influence of irregular fluctuations induced by NLoS scattering on 3D target sensing.}
	
	\subsection{Complexity Analysis}
		
	In each iteration, the main computational components include message aggregation and generation, the Mamba block, and the zero-forcing algorithm. Since message aggregation involves only additive operations, its computational cost is negligible compared to multiplicative operations. The key operations are as follows:
	\begin{enumerate}[1)]
		\item Message generation is primarily dominated by the operations of a MLP, with a computational complexity of $\mathcal{O}(NL(K+M+N_{\text{RF}})D^{2})$.
		\item The Mamba block incurs a computational complexity of $\mathcal{O}(NLN_{\text{RF}}D(J_{\text{t}} + G + NN_{\text{RF}}))$.
		\item The computational complexity of the ZF algorithm is dominated by the pseudo-inversion of the equivalent channel matrix, which has a complexity of $\mathcal{O}(MN^2)$.
	\end{enumerate}
	{Given the condition $K, M \ll NN_{\text{RF}} \ll NN_{\text{t}}$, the overall computational complexity of the proposed method is approximately $\mathcal{O}(NLN_{\text{RF}}(D^{2}+(J_{\text{t}}+NN_{\text{RF}})D + KNN_{\text{t}}))$, which is significantly lower than that of the conventional alternating optimization approach \cite{Alt-Min}, whose complexity scales as $\mathcal{O}(LKN^{2}N_{\text{t}}^{2} + L^{3}K^{3}N_{\text{t}}N^{2}N_{\text{RF}}^{2})$. In particular, the complexity grows linearly with the user number $K$ and does not grow quadratically with the total number of graph nodes, which allows the proposed framework to scale more favorably in larger cooperative systems.}
	
	\begin{figure*}[t]
	\vspace{-4mm}
	\centering
	\subfloat[\footnotesize{GNN vs. Alt-Min with varying $L$, where $N_{\text{RF}}{=}16$.}]
	{\includegraphics[width=0.5\linewidth]{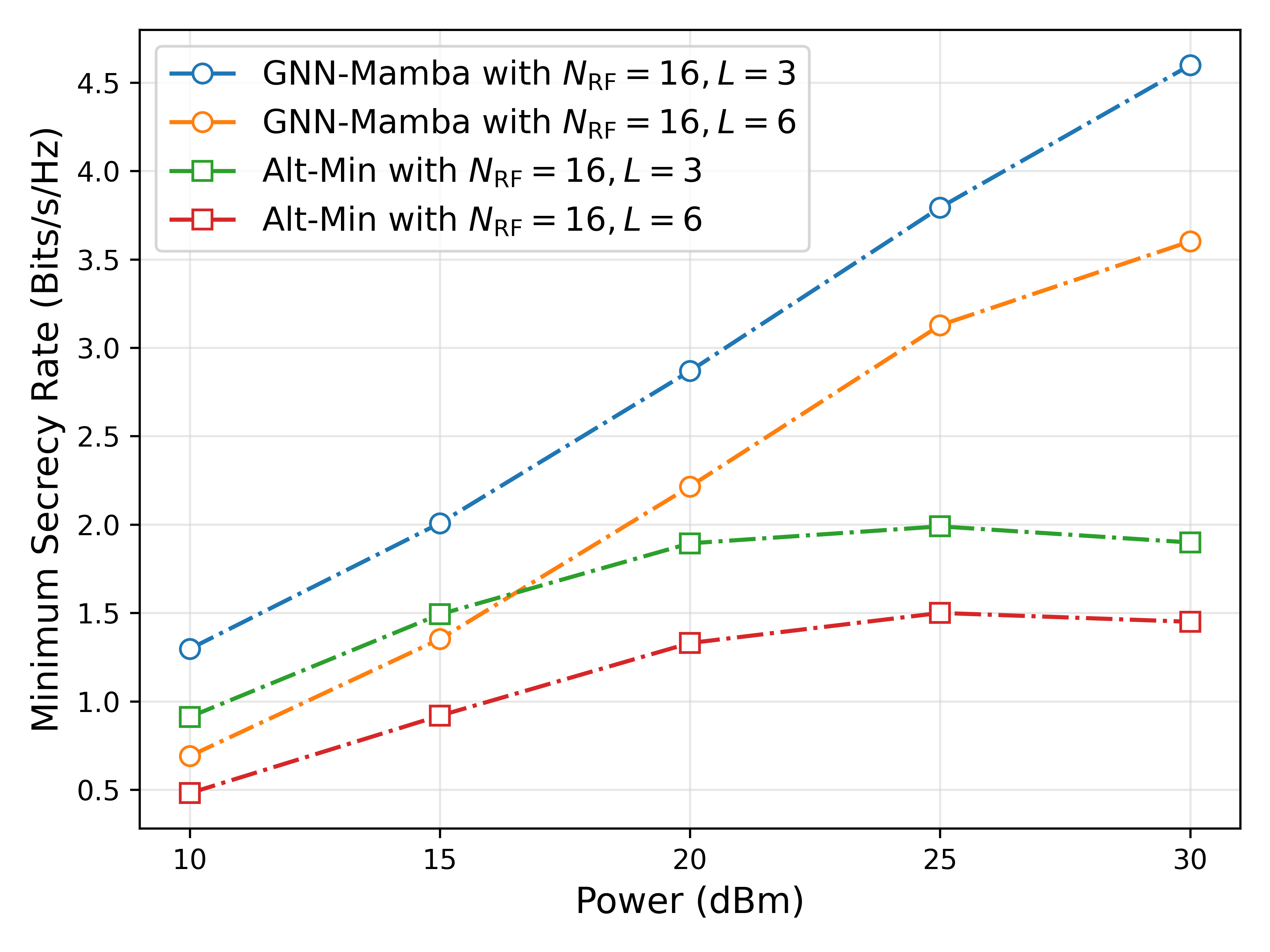}\label{Simul_1}}
	\hfill
	\subfloat[\footnotesize{GNN vs. Alt-Min with varying $N_{\text{RF}}$, where $L{=}3$.}]
	{\includegraphics[width=0.5\linewidth]{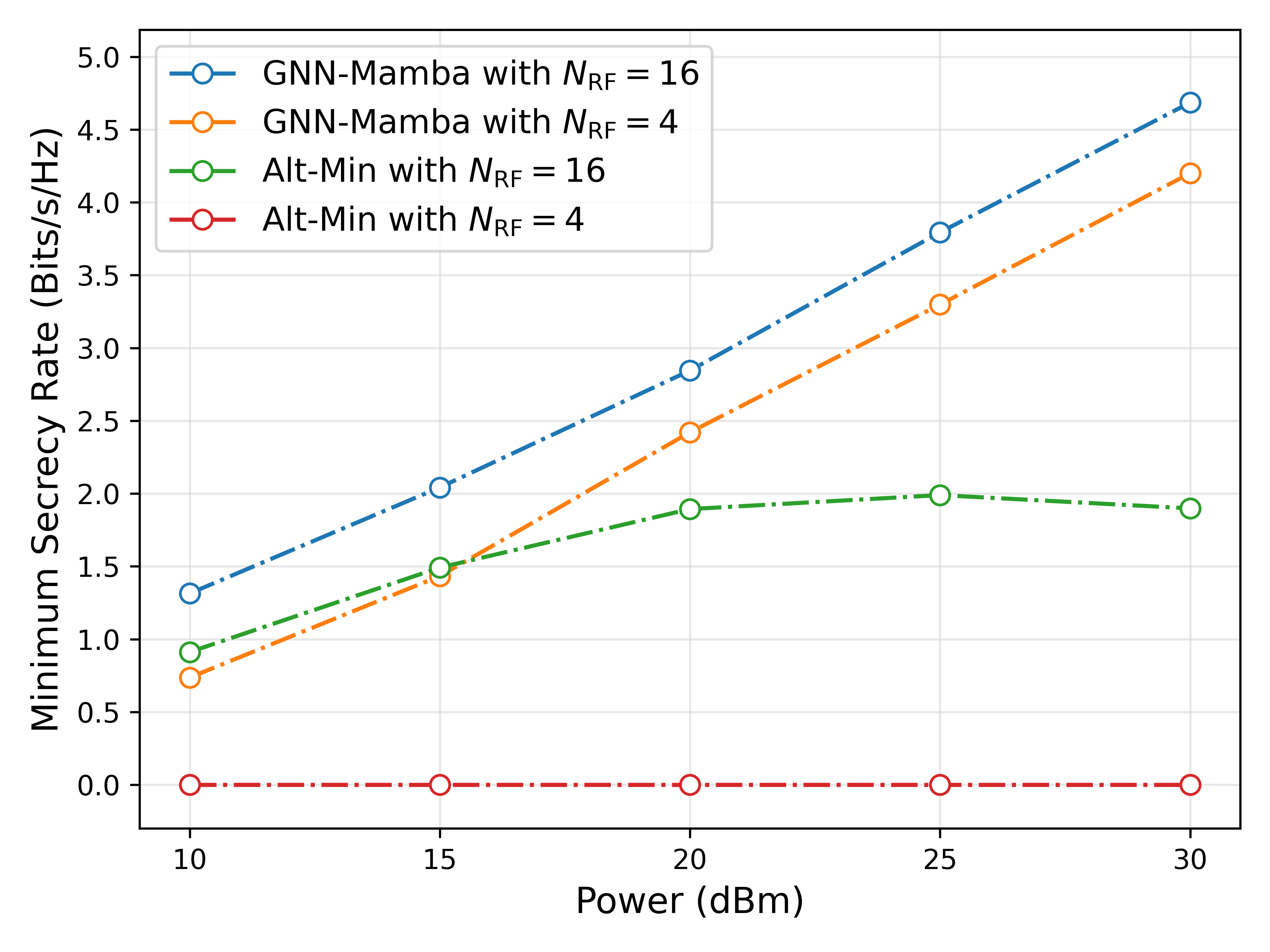}\label{Simul_2}}
	
	\vspace{-3mm}
	
	\subfloat[\footnotesize{GNN performance with $M{=}3$ and varying $K$, where $N_{\text{RF}}{=}4$ and $L{=}3$.}]
	{\includegraphics[width=0.5\linewidth]{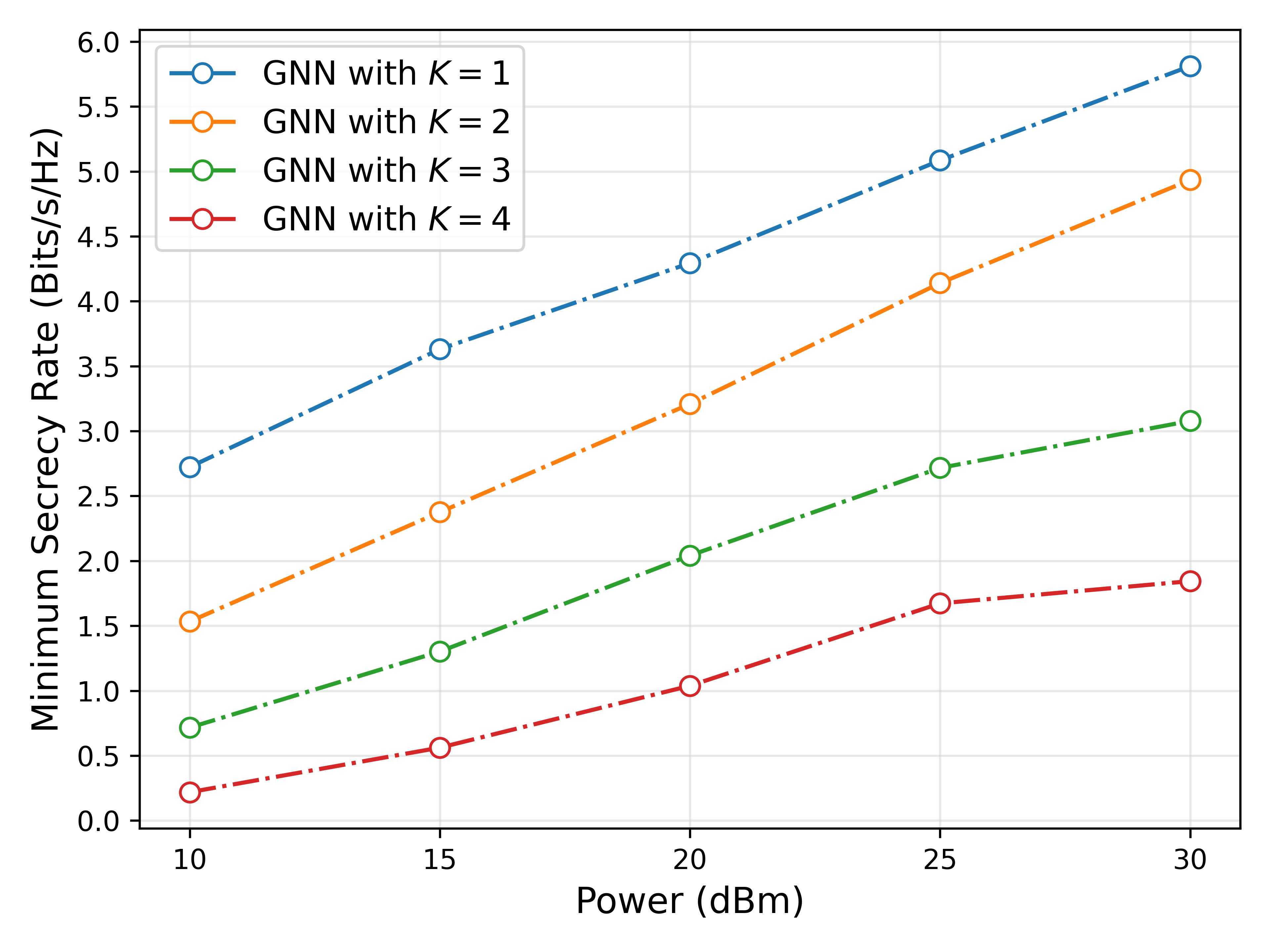}\label{Simul_3}}
	\hfill
	\subfloat[\footnotesize{GNN performance with $K{=}3$ and varying $M$, where $N_{\text{RF}}{=}4$ and $L{=}3$.}]
	{\includegraphics[width=0.5\linewidth]{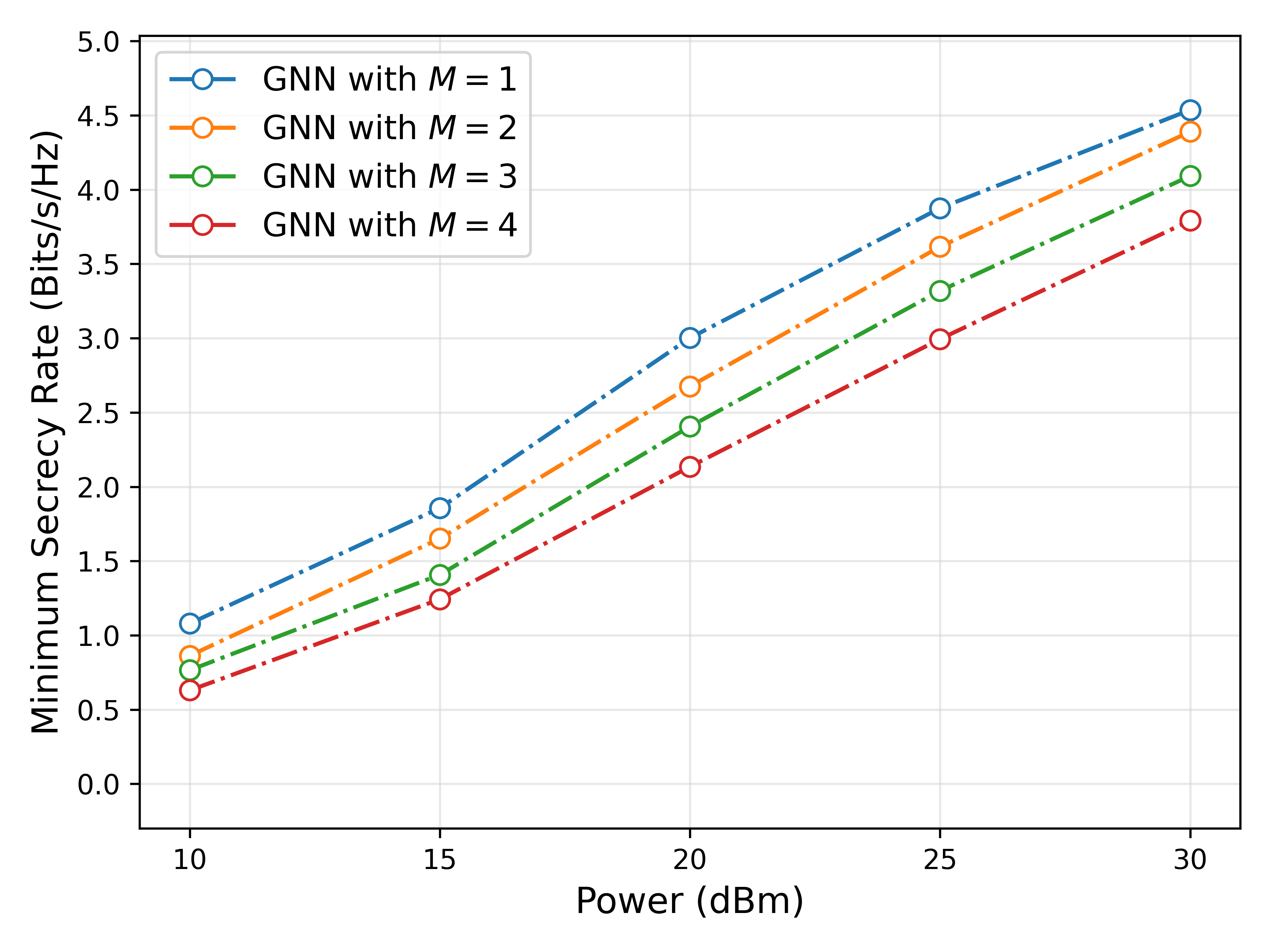}\label{Simul_4}}
	
	\vspace{-1mm}
	\caption{Performance comparison of GNN and Alt-Min over varying transmit power.}
	\label{fig:scalability_gnn}
	\vspace{-3mm}
\end{figure*}

	\begin{figure}[t]
	\centering
	\includegraphics[width=1\linewidth]{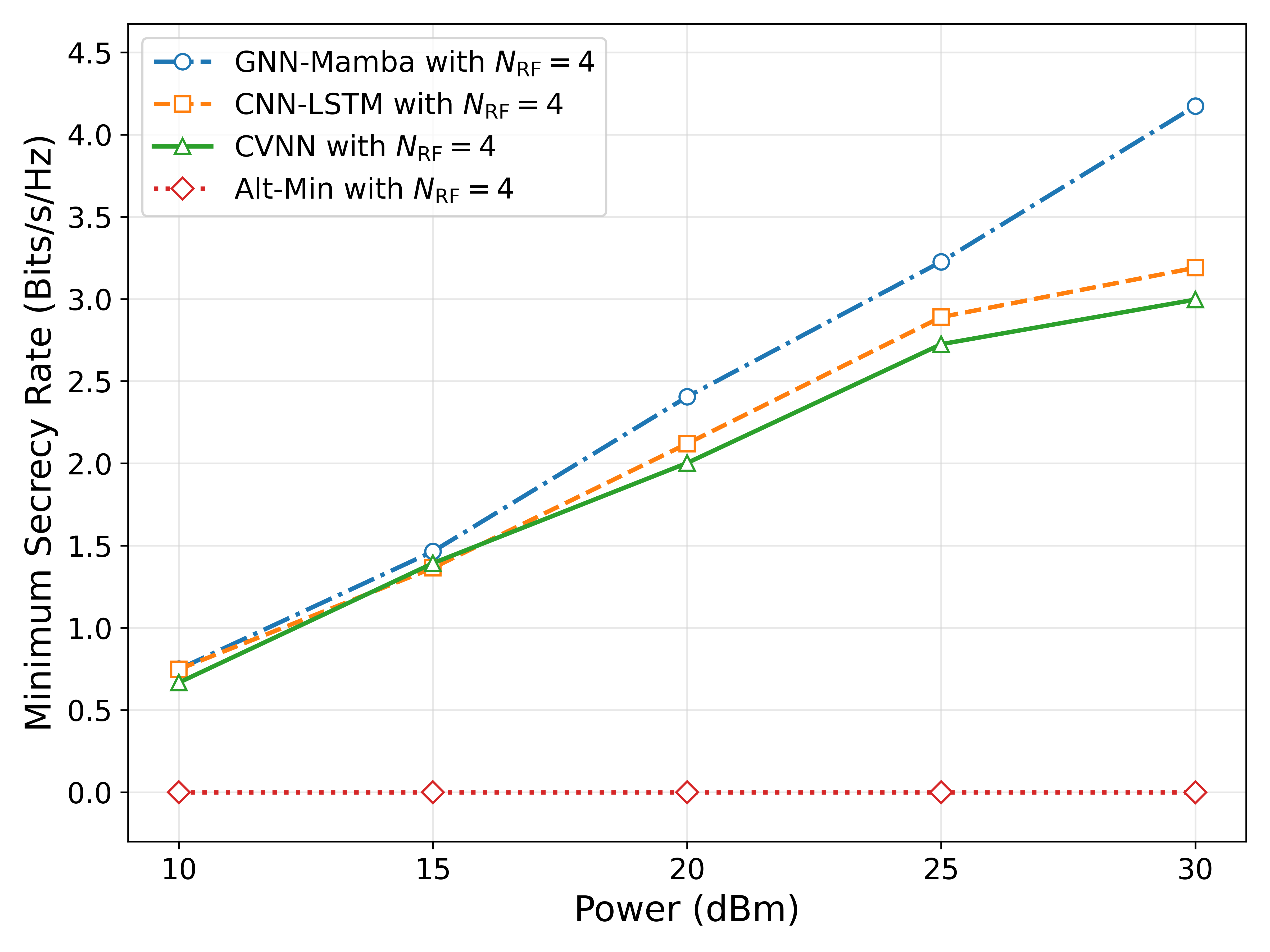} 
	\caption{ Minimum secrecy rate over the test samples versus transmit power for the proposed  scheme and different benchmarks with $L=3$, $K=3$, $M=3$, and $N_{\mathrm{RF}}=4$.}
	\label{fig:benchmark_learning}
\end{figure}
	\begin{figure}
	\vspace{-4mm}
	\centering
	\subfloat[\footnotesize{Secrecy rate versus $K$. }]
	{\includegraphics[width=1\linewidth]{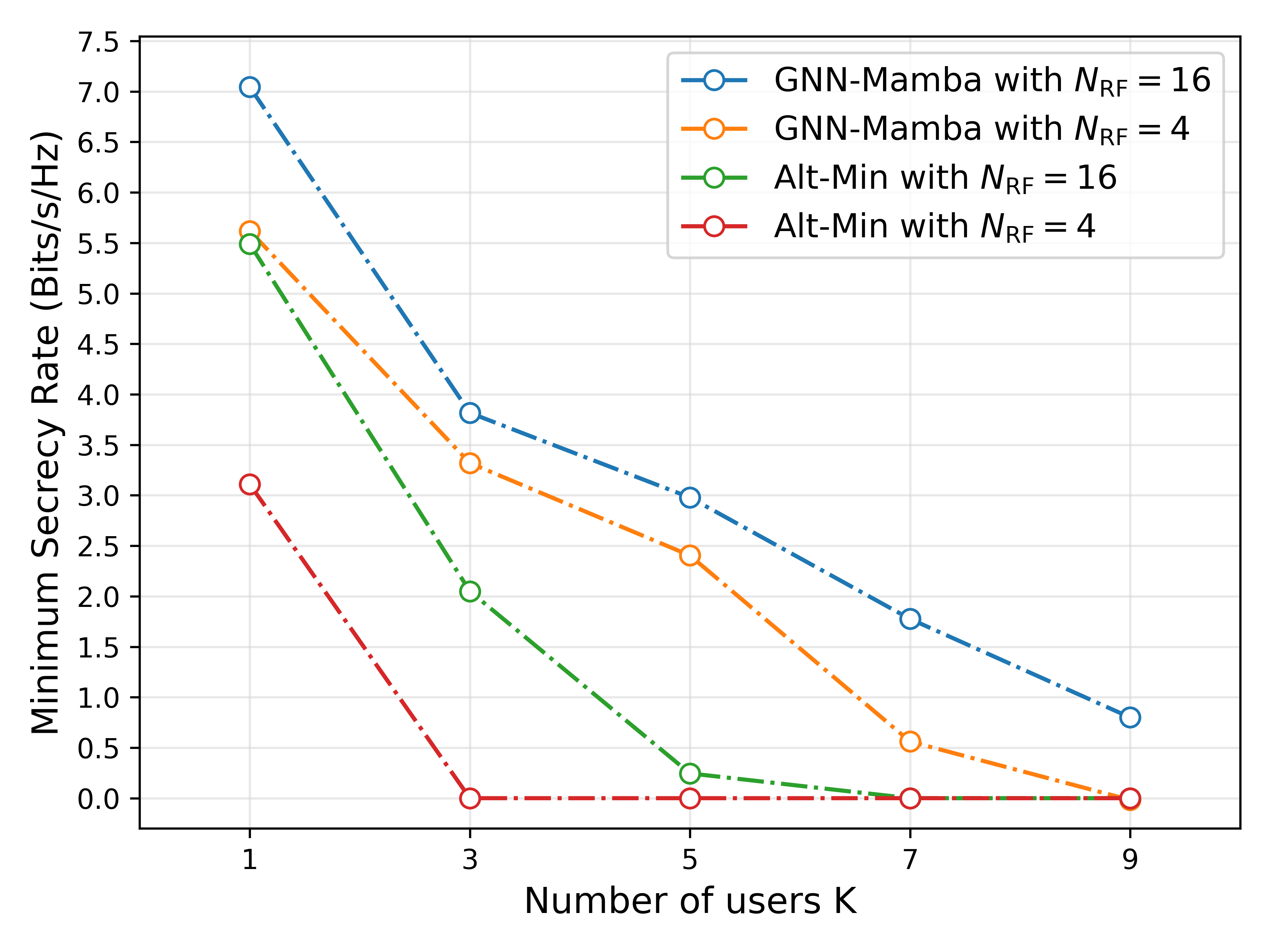}\label{Simul_5}}\\
	\subfloat[\footnotesize{Secrecy rate versus $M$.}]
	{\includegraphics[width=1\linewidth]{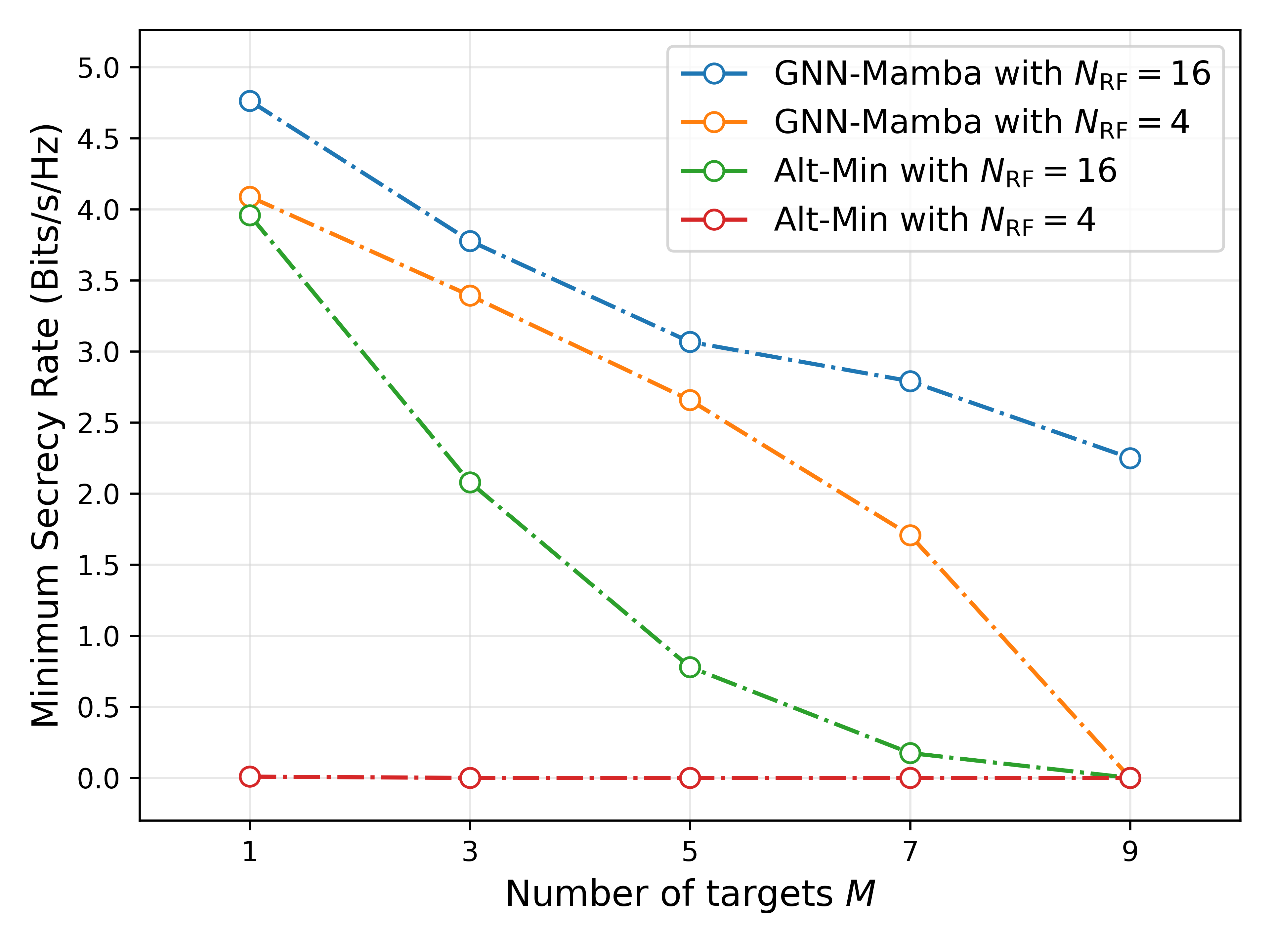}\label{Simul_6}}
	\caption{GNN vs. Alt-Min with varying $N_{\text{RF}}$, where $L{=}3$ and $P{=}25$~dBm.}
	\label{N_RF}
	\vspace{-4mm}
\end{figure}

\section{Simulation results}	
{In this section, numerical results are presented to validate the effectiveness of the proposed Mamba-empowered GNN framework for hybrid beamforming design in a cooperative THz-ISAC system. The simulation considers $N = 3$ BSs located at $(-53, -53, 0)$ m, $(53, -53, 0)$ m, and $(-53, 53, 0)$ m, respectively, with a radar receiver placed at the origin $(0, 0, -5)$ m. Users and targets are randomly distributed within a 10 m radius centered at $(0, 0, -10)$ m. The molecular absorption loss $k_{\text{abs}}(f_l)$ in the 100--450 GHz frequency band is modeled exploiting the HITRAN database under standard atmospheric conditions ($25^\circ$C, 1.5 g/m\textsuperscript{3} water vapor density)~\cite{b10}. 	
Additionally, assuming that the power budgets of three BSs are set the same as $P_n=P$, the CRB threshold $\Omega$ is set to $[0.025, 0.0084, 0.0038, 0.0012, 0.00038]$, corresponding to the power budget $P$ of $[10, 15, 20, 25, 30]$ dBm, respectively. The weighting parameters are set as $\omega_1 = -0.3$ and $\omega_2 = 0.1 $. The neural network is trained using the Adam optimizer with a learning rate of $1\times10^{-4}$ for 20 epochs, with each epoch consisting of 300 iterations.
Unless otherwise specified, the remaining simulation parameters are listed in Table~\ref{Setup}. { The conventional Alt-Min method~\cite{Alt-Min} is adopted as a benchmark. Specifically, Alt-Min first performs SVD on the sensing channel matrix $\mathbf{A}_l$ to obtain its null-space basis, based on which a fully digital ZF-based precoder $\mathbf{F}_{\mathrm{opt}}$ is constructed to suppress inter-user interference and enhance security. Then, under the considered TTD-assisted hybrid beamforming architecture, the analog beamformer, including the phase-shifter-based matrix $\mathbf{F}_{\mathrm{RF}}$ and the TTD matrix, together with the digital beamformer $\mathbf{F}_{\mathrm{BB}}$, is alternately optimized to approximate $\mathbf{F}_{\mathrm{opt}}$ in an iterative manner. For further comparison with representative learning-based designs, we include two deep learning baselines in the simulation study: the CVNN-based scheme from the conference version of this work~\cite{GLOBECOM25}, and a CNN-LSTM-based scheme adapted from~\cite{murshed2023cnn} and extended to the considered TTD-aided wideband hybrid architecture.}}

\begin{table}[!t]
	\caption{LIST OF SIMULATION PARAMETERS \label{Setup}}
	\centering
		\begin{tabular}{|p{5.5cm}|p{2.5cm}|}
			\hline
			\textbf{Parameters} & \textbf{Value}\\
			\hline
			Thz spectrum range, $f_\text{start} - f_\text{end}$ & $0.4 - 0.4385$ THz \\
			\hline
			Number of carriers, $L$ & 3 or 6  \\
			\hline
			Guard band bandwidth, $b_g$ & 1 GHz \cite{b1}   \\
			\hline
			Subcarrier Bandwidth, $b_s$ & 5 GHz \cite{b2}   \\
			\hline
			Transmit power of BS, $P$  & 10 $\sim$ 30 dBm   \\
			\hline	
			Noise power, $\sigma^2$ & $-92$ dBm \cite{b3}    \\
			\hline
			Number of antennas at the BS, $N_t$  & 128     \\
			\hline
			Number of RF chains at the BS, $N_{\text{RF}}$ & 4 or 16  \cite{b4}  \\
			\hline
			Number of antennas at the receiver, $N_r$ & 4     \\
			\hline
			Antenna element spacing &  375 $\mu m$     \\
			\hline
			Antenna gain $G_t$, $G_r$ & 20, 30 dBi  \cite{b5} \\
			\hline
			Number of OFDM symbols, $P$ & 256  \cite{b6} \\
			\hline
			Racian factor, $k_\text{H}$ & 20 dB \cite{b7}     \\
			\hline
			RCS of targets, $q$ & 1 \cite{Rician}     \\
			\hline
			Number of antennas per TTD, $J$ & 1 \cite{b8} \\
			\hline
			Maximum TTD delay, $\tau_{\max}$ & 20 ps \\
			\hline
		\end{tabular}
		\vspace{-5mm}
	\end{table}

	\begin{figure}
		\vspace{-1mm}
		\centering
		\includegraphics[width=1\linewidth]{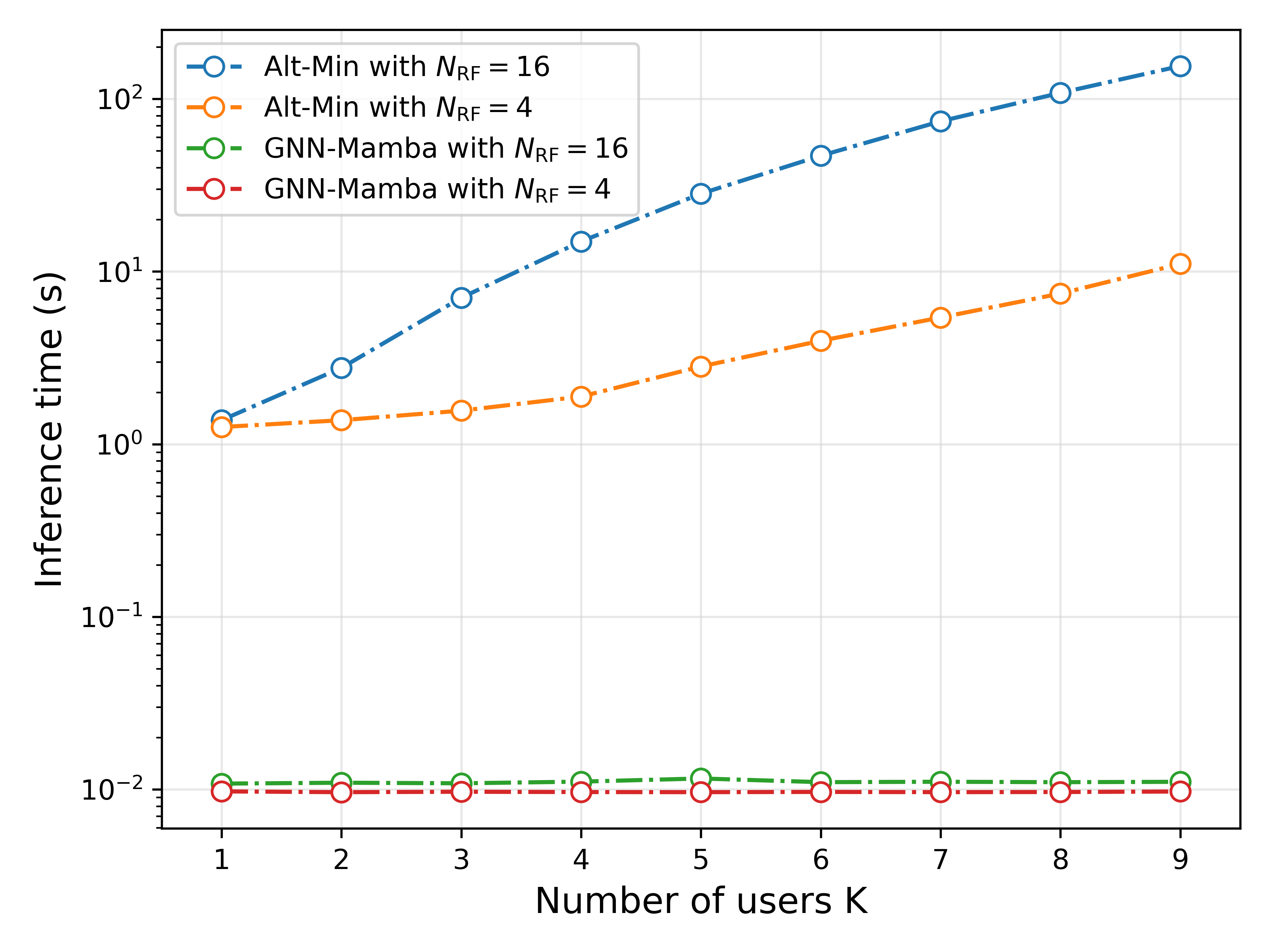}
		\caption{Beamforming computation time of GNN and Alt-Min with varying $K$, where $L=$3.}
		\vspace{-1mm}\label{Simul_7}
	\end{figure}
	
	\begin{figure*}
		\vspace{-2.7mm}
		\centering
		\includegraphics[width=1\linewidth]{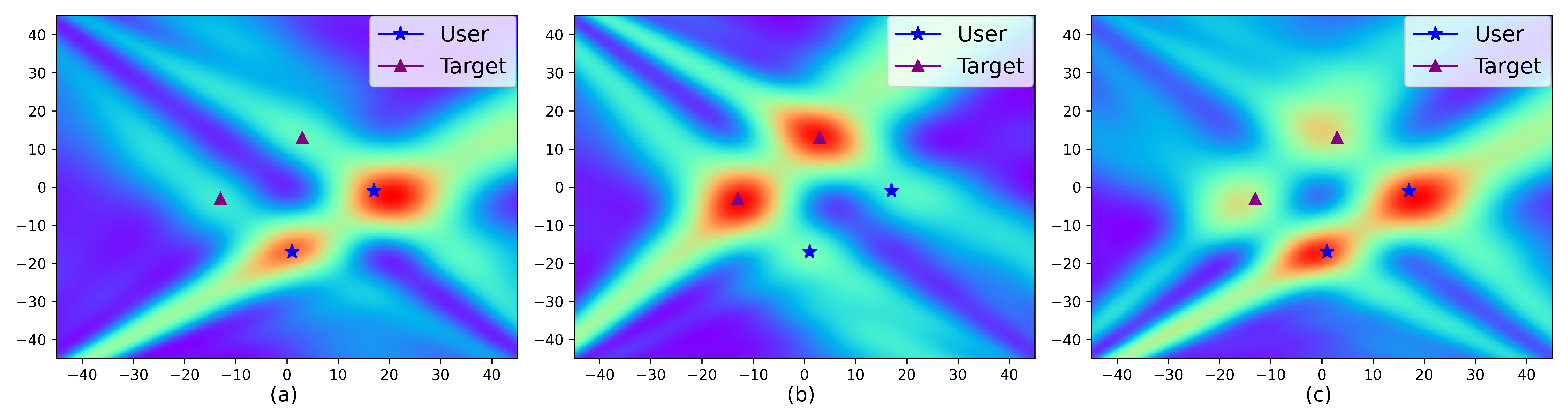}
		\caption{Beampatterns under the GNN-based analog beamforming scheme with $K{=}2$, $M{=}2$, $N_{\text{RF}}{=}4$.}
		\vspace{-5mm}\label{Simul_8}
	\end{figure*}
	
	{Figs.~\ref{fig:scalability_gnn}\subref{Simul_1} and \ref{fig:scalability_gnn}\subref{Simul_2} present the minimum secrecy rate with different transmit power budget, considering different subcarrier counts and RF chain configurations. From these results we draw several	key insights: 
	1) Impact of power $P$: The proposed method consistently outperforms the benchmark across all configurations and exhibits further improvement in the high-power regime, whereas the benchmark gradually saturates; 
	2) Impact of $L$: Increasing $L$ reduces the secrecy rate. With the fixed total power $P$, more subcarriers dilute each subcarrier signal power density. Poor channel conditions on any given subcarrier can then create a performance bottleneck for the	worst-case user, driving down the secrecy rate; 
	3) Impact of $N_{\text{RF}}$: When $N_{\text{RF}}$ is small, the Alt-Min method nearly collapses due to constrained beamforming degrees of freedom, tightly coupled optimization variables, and an inability to balance the sensing–communication power trade-off. In contrast, our Mamba-empowered GNN leverages graph-based message passing to decouple analog/digital precoders and the radar covariance matrix. Additionally, PAB is employed to budget the  communication and sensing power for balancing their performance.
	}
	

	%

	{Figs.~\ref{fig:scalability_gnn}\subref{Simul_3} and \ref{fig:scalability_gnn}\subref{Simul_4} show the minimum secrecy rate achieved by the model trained with $K=2$ and $M=3$ when tested under different numbers of sensing targets and communication users. The proposed \gls{GNN}-based architecture maintains favorable performance across these different configurations, indicating its generalization capability under topology variations. This behavior can be attributed to the permutation-invariant aggregation and structural inductive bias of \glspl{GNN}, under which the transmit design is generated through repeated aggregation of local interactions. As a result, the learned model captures transferable beamforming rules from reusable combinations of local interaction patterns, rather than from a fixed topology-specific mapping, which helps maintain its effectiveness under different network configurations.}

		{Fig.~\ref{fig:benchmark_learning} compares the minimum secrecy rate versus the transmit power budget for the proposed GNN-Mamba scheme and deep learning-based benchmarks. As shown in the figure, the proposed scheme consistently outperforms both benchmarks across the whole transmit-power range. Moreover, its performance advantage becomes increasingly evident as the transmit power budget increases, particularly in the medium- and high-power regimes. These results indicate that the proposed learning framework achieves more favorable secrecy-rate performance than the considered deep learning benchmarks, further reflecting the benefit of the Mamba-empowered GNN design together with the introduction of \gls{TTD}.}

	{Figs.~\ref{N_RF}\subref{Simul_5} and \ref{N_RF}\subref{Simul_6} show the minimum secrecy rate versus $M$ and $K$. The proposed method consistently outperforms the benchmark across all configurations. With four BS RF chains, our scheme maintains robust performance until $K + M$ approaches the number of RF chains. In contrast, the baseline degrades in small-scale configurations due to its inherent beamforming limitations with limited RF chains. These results demonstrate that our GNN-based approach effectively exploits \gls{ELAA} to design high-directional, spatially-isolated beams that simultaneously enhance sensing accuracy and communication security.
		}

	{Fig.~\ref{Simul_7} compares the per-sample beamforming computation times of the proposed method and the baseline. Our GNN-based approach achieves a consistently low inference time, scaling approximately linearly with $K$ due to its linear computational complexity. In contrast, the baseline suffers from high complexity and poor scalability, exhibiting a rapid increase in processing time with the number of users. These results confirm the fast inference capability of our approach.
	}

	{In Fig.~\ref{Simul_8}, we present the normalized beamforming gain on subcarrier~1 in the two-dimensional  space. In Fig.~\ref{Simul_8}{\color{red}(a)}, only communication security is considered, with secrecy-rate maximization as the design objective. In this case, the beamforming gain is mainly concentrated on the two legitimate users. In Fig.~\ref{Simul_8}{\color{red}(b)}, only sensing performance is considered, with the CRB as the optimization objective. Accordingly, the energy is primarily directed toward the two targets. In Fig.~\ref{Simul_8}{\color{red}(c)}, both communication security and sensing performance are jointly considered. In this case, part of the signal energy is allocated toward the targets to satisfy the sensing requirement, while the remaining energy is steered toward the users to maintain secure communication. These results highlight that different design objectives lead to distinct spatial beam patterns in the considered secure \gls{ISAC} system, while the joint design yields a trade-off between user service and target sensing.}
	\begin{figure}[t]
		\centering
		\includegraphics[width=0.9\linewidth]{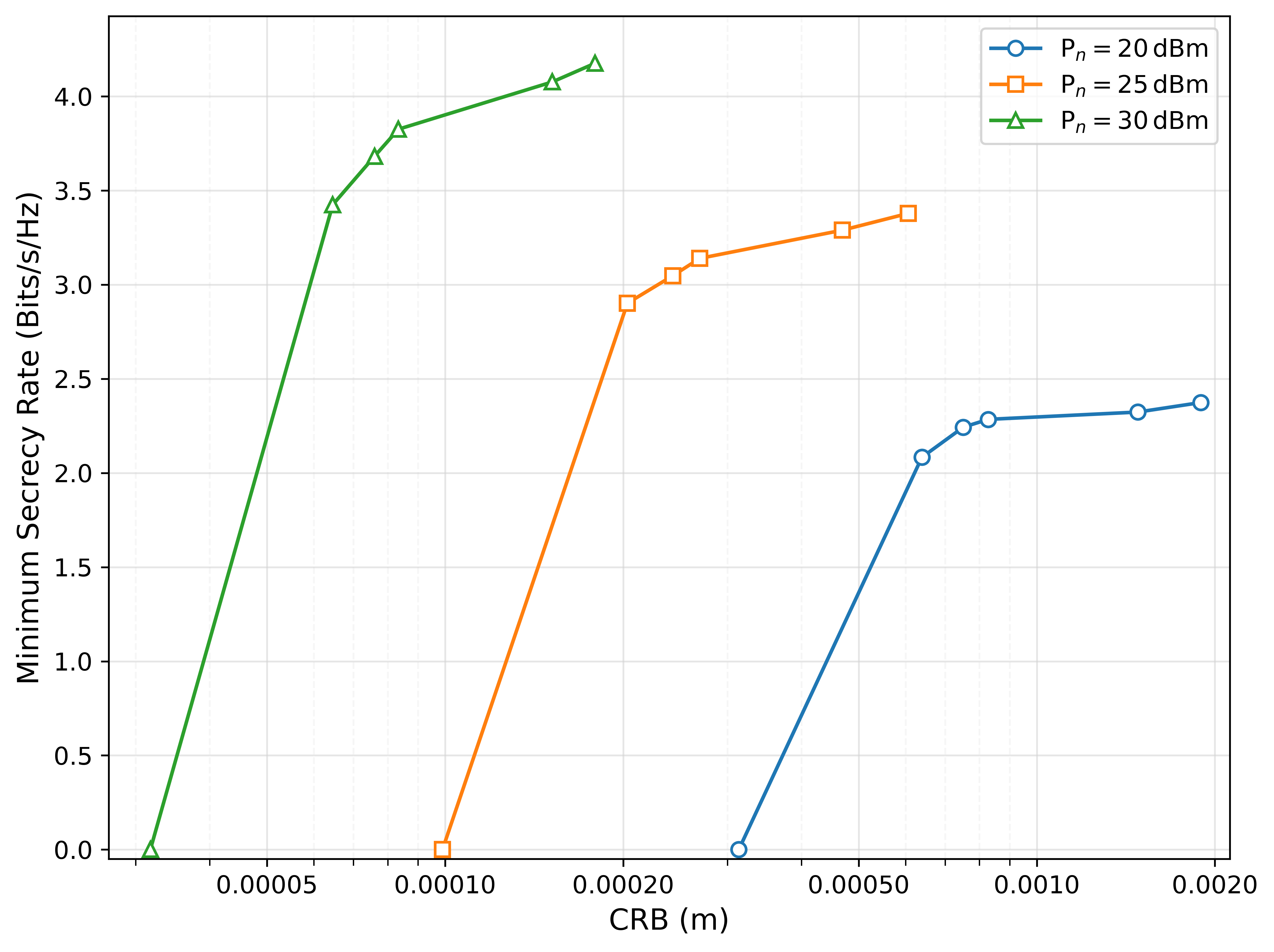}
		\caption{Minimum secrecy rate versus \gls{CRB} under different communication--sensing trade-off settings for $L=3$, $K=3$, $M=3$.}
		\label{fig:crb_secrecy_tradeoff}
	\end{figure}
	
	{
	Fig.~\ref{fig:crb_secrecy_tradeoff} illustrates the relationship between the  minimum secrecy rate and the  \gls{CRB} for different $P_n$. It can be observed that a higher secrecy rate generally corresponds to a larger CRB, whereas reducing the CRB enhances sensing accuracy at the cost of a degraded secrecy rate. This result clearly demonstrates the inherent tradeoff between communication security and sensing performance in the proposed design.
	}
	
	 {Fig.~\ref{fig:layers_robustness} shows the minimum secrecy rate versus the number of message-passing layers. It can be observed that the best performance is achieved with two layers, while the secrecy rate remains relatively stable as the depth is further increased within the tested range. This result suggests that the proposed framework is reasonably robust to the choice of network depth and does not exhibit severe performance degradation caused by oversmoothing under the considered setting. This behavior can be attributed to the fact that, at each layer, the Mamba block selectively propagates task-relevant features rather than uniformly mixing all neighboring information, while the residual connection preserves node-specific information from earlier layers. This helps preserve distinctions among node representations across layers. }

	\begin{figure}[t]
		\centering
		\includegraphics[width=3.4in]{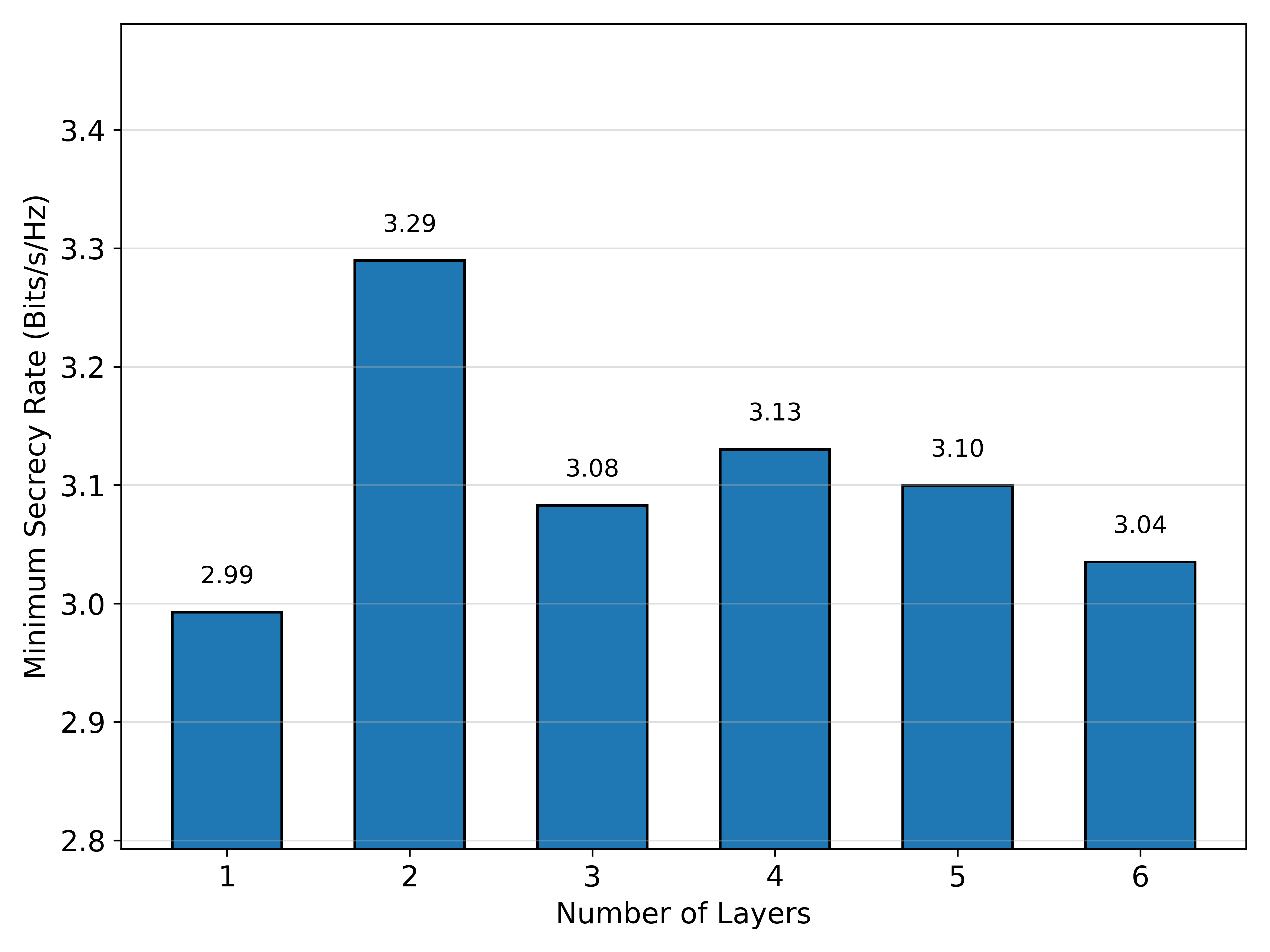}
		\caption{Minimum secrecy rate over the test samples versus the number of message-passing layers for $L=3$, $K=3$, $M=3$, and $P_n=25$ dBm.}
		\label{fig:layers_robustness}
	\end{figure}
	
	\section{Conclusions}
	This work investigated information security in cooperative bistatic ISAC networks, where multiple \gls{BS}s jointly serve downlink users while locating multiple targets. To secure confidential communications against wiretapping by malicious targets, we formulated a joint optimization problem for analog beamforming, \gls{TTD} networks, sensing signal, and digital precoders. Our objective was to maximize the minimum secrecy rate across users and subcarriers under constraints for target localization accuracy, defined by the \gls{CRB}. This challenging nonconvex problem, arising from coupled variables and the nonconvex \gls{CRB} constraint, was addressed through a novel Mamba-empowered \gls{GNN} framework. Our approach modeled the network interactions as a graph, where message passing generated optimized features for joint design, enhanced by a Mamba module for efficient feature exploitation. Simulation results demonstrated the superiority of the proposed method over both conventional optimization-based benchmarks and deep learning-based baselines. 
	{Several directions merit further investigation. Covert \gls{ISAC} represents a promising extension, where the presence or intent of dual-functional transmissions should be made less detectable by unauthorized wardens while preserving secure communication and sensing accuracy. This calls for joint beamforming and waveform designs that account for secrecy, sensing, and covertness requirements in a unified manner. Integrated communication--sensing--computation is also a relevant direction for cooperative \gls{ISAC} networks, where radio resources, sensing data processing, and edge computing resources should be jointly managed to support latency- and energy-constrained intelligent services.}
	
\appendix
{
	
\subsection{Proof of Lemma 1}
As the number of symbols $P$ is sufficiently large, we have
$
\underset{p\rightarrow+\infty}{\lim}\frac{1}{P}\sum_{p=1}^P\mathbf{c}_l(p)\mathbf{c}_l(p)^\mathrm{H}=	\begin{bmatrix}
	\mathbf{I}_{K} &  \mathbf{0} \\ 
	\mathbf{0}     &  \mathbf{S}_{l}  \\  
\end{bmatrix} 	.
$
Consequently, the covariance of $\mathrm{vec}(\mathbf{Y}_l^\mathrm{r})$ is determined by the noise component $\mathbf{n}_\mathrm{s}^l = \mathrm{vec}(\mathbf{N}_\mathrm{s}^l)$   \cite{b6}, i.e., 
$
\mathbf{n}_\mathrm{s}^l
= \left( 
\begin{bmatrix}
	\mathbf{c}_l^\mathrm{I}[p] \\
	\mathbf{c}_l^\mathrm{S}[p]
\end{bmatrix}^* \otimes \mathbf{I}_{N_\mathrm{r}} 
\right) \mathbf{z}_l^\mathrm{r}[p].
$
Since $\mathbf{z}_l^\mathrm{r}[p]\sim \mathcal{CN}(\mathbf{0}, \sigma_\mathrm{r}^2 \mathbf{I})$, we have	$
	\mathbb{E}(\mathbf{n}_\mathrm{s}^l) = \mathbf{0}. 
	$
	Besides,
	\begin{align}
		\boldsymbol{\Xi}_l
		=& \mathbb{E}\bigg\{ \frac{\sigma_{\mathrm{r}}^2}{P^2} \sum_{p=1}^{P} 
		\bigg( 
		\bigg[\begin{aligned}
			& (\mathbf{c}_l^\mathrm{I}[p])\\ 
			& (\mathbf{c}_l^\mathrm{S}[p])\\ 
		\end{aligned} \bigg]^*
		\otimes \mathbf{I}_{N_\mathrm{r}} \nonumber
		\bigg)
		\bigg( \bigg[\begin{aligned}
			& (\mathbf{c}_l^\mathrm{I}[p])\\ 
			& (\mathbf{c}_l^\mathrm{S}[p])\\ 
		\end{aligned} \bigg]^*
		\otimes \mathbf{I}_{N_\mathrm{r}} \bigg)^\mathrm{H}\bigg\}  \\
		=& \frac{\sigma_\mathrm{r}^2}{P} \, \text{blkdiag}(\mathbf{I}_{K}, \mathbf{S}^*_l) \otimes \mathbf{I}_{N_\mathrm{r}}.
	\end{align}

	\subsection{Proof of Theorem 1}
	Given the statistical characterization of the received signal presented in \textit{Lemma 1}, the \((i, j)\)-th entry of the FIM \( \mathcal{I}(\boldsymbol{\theta}, \boldsymbol{\theta}) \), for all \( i, j \in \{1, \ldots, 5M\} \), is typically given by
	\begin{align}
		\mathcal{I}(\boldsymbol{\theta}[i], \boldsymbol{\theta}[j]) 
		= 2 \Re \left\{ \left( \frac{\partial \boldsymbol{\mu}}{\partial \boldsymbol{\theta}[i]} \right)^\text{H} \boldsymbol{\Xi}^{-1} \frac{\partial \boldsymbol{\mu}}{\partial \boldsymbol{\theta}[j]} \right\}. \nonumber
	\end{align}
	By exploiting the block-diagonal structure of \( \boldsymbol{\Xi} \), we obtain
	\begin{align}
		\label{I(i,j)}
		&\mathcal{I}(\boldsymbol{\theta}[i], \boldsymbol{\theta}[j])\notag  \\ 
		&= \frac{2P}{\sigma_\mathrm{r}^2} \Re\Bigg\{\sum_{l=1}^{L} \text{tr}\Bigg( \frac{\partial(\mathbf{B}_l \mathbf{Q} \mathbf{A}_l\mathbf{F}_l \mathbf{D}_{l})}{\partial \boldsymbol{\theta}[i]}^\mathrm{H} \frac{\partial(\mathbf{B}_l \mathbf{Q} \mathbf{A}_l \mathbf{F}_l \mathbf{D}_{l})}{\partial \boldsymbol{\theta}[j]} \Bigg)\notag \\  &+ \sum_{g=1}^{NN_\mathrm{RF}} \sum_{\kappa=1}^{NN_\mathrm{RF}} \tilde{s}_{\kappa g,l} \frac{\partial(\mathbf{B}_l \mathbf{Q} \mathbf{A}_l \mathbf{F}_l \mathbf{s}_{\kappa,l})}{\partial \boldsymbol{\theta}[i]}^\mathrm{H} \frac{\partial(\mathbf{B}_l \mathbf{Q} \mathbf{A}_l \mathbf{F}_l \mathbf{s}_{g,l})}{\partial \boldsymbol{\theta}[j]} \Bigg\},
	\end{align}
	where $\tilde{s}_{\kappa g,l}$ represents the element in the $\kappa$-th row and $g$-th column of 
	$(\mathbf{S}^*_l)^{-1}$. Building on the general expression in \eqref{I(i,j)} and the block structure presented in \textit{Theorem~1}, each $M \times M$ subblock of the FIM can be further derived into a closed-form expression exploiting the explicit form of the corresponding partial derivatives \cite{b6}.
	To proceed, we examine a typical entry $\mathcal{I}_{\mathbf{u}[i] \mathbf{v}[j]}$ with $\mathbf{u}, \mathbf{v} \in \{ \mathbf{x}^\text{t}, \mathbf{y}^\text{t}, \mathbf{z}^\text{t} \}$ and $1 \leq i, j \leq M$.
	An explicit expression is obtained by expanding the term $ \text{tr}\Big( \frac{\partial(\mathbf{B}_l \mathbf{Q} \mathbf{A}_l\mathbf{F}_l \mathbf{D}_{l})}{\partial \mathbf{u}[i]}^\mathrm{H} \frac{\partial(\mathbf{B}_l \mathbf{Q} \mathbf{A}_l \mathbf{F}_l \mathbf{D}_{l})}{\partial  \mathbf{v}[j]} \Big)$  as follows:
	\begin{align}
		\label{first_term}
		&\text{tr}\Big\{\bigg(\frac{\partial(\mathbf{B}_l \mathbf{Q} \mathbf{A}_l \mathbf{F}_l\mathbf{D}_{l})}{\partial  \mathbf{u}[i]}\bigg)^\mathrm{H}  \frac{\partial 	(\mathbf{B}_l \mathbf{Q} \mathbf{A}_l \mathbf{F}_l\mathbf{D}_{l})}{\partial \mathbf{v}[j]} \Big\} \nonumber\\ 
		&=
		\bigg\{\mathbf{e}_i^\mathrm{T} (\dot{\mathbf{B}_l}_u)^\mathrm{H} \dot{\mathbf{B}_l}_v  \mathbf{e}_j \mathbf{e}_j^\mathrm{T} \mathbf{Q} \mathbf{A}_l \mathbf{F}_l\mathbf{D}_l \mathbf{D}_l^\mathrm{H} \mathbf{F}_l^\mathrm{H} (\mathbf{A}_l)^\mathrm{H}\mathbf{Q}^\mathrm{H} \mathbf{e}_i\nonumber\\
		&+ 
		\mathbf{e}_i^\mathrm{T} (\dot{\mathbf{B}_l}_u)^\mathrm{H} \mathbf{B}_l   \mathbf{e}_j \mathbf{e}_j^\mathrm{T} \mathbf{Q} (\dot{\mathbf{A}_l}_v) \mathbf{F}_l\mathbf{D}_l \mathbf{D}_l^\mathrm{H} \mathbf{F}_l^\mathrm{H} (\mathbf{A}_l)^\mathrm{H} \mathbf{Q}^\mathrm{H} \mathbf{e}_i  \nonumber\\
		&+ 
		\mathbf{e}_i^\mathrm{T}  (\mathbf{B}_l)^\mathrm{H} \dot{\mathbf{B}_l}_v \mathbf{e}_j \mathbf{e}_j^\mathrm{T} \mathbf{Q} \mathbf{A}_l \mathbf{F}_l\mathbf{D}_l \mathbf{D}_l^\mathrm{H} \mathbf{F}_l^\mathrm{H} (\dot{\mathbf{A}_l}_u)^\mathrm{H}\mathbf{Q}^\mathrm{H} \mathbf{e}_i\nonumber\\
		&+ 
		\mathbf{e}_i^\mathrm{T}  (\mathbf{B}_l)^\mathrm{H} \mathbf{B}_l  \mathbf{e}_j \mathbf{e}_j^\mathrm{T} \mathbf{Q} (\dot{\mathbf{A}_l}_v) \mathbf{F}_l\mathbf{D}_l \mathbf{D}_l^\mathrm{H}\mathbf{F}_l^\mathrm{H}(\dot{\mathbf{A}_l}_u)^\mathrm{H} \mathbf{Q}^\mathrm{H}\mathbf{e}_i \bigg\},
	\end{align}
	where \( \mathbf{e}_i \) and \( \mathbf{e}_j \) denote the \( i \)-th and \( j \)-th columns of the identity matrix, respectively.
	Expanding the term $\big(\frac{\partial 	(\mathbf{B}_l \mathbf{Q} \mathbf{A}_l \mathbf{F}_l \mathbf{s}_{\kappa,l})}{\partial \mathbf{u}[i]}\big)^\mathrm{H} 	\frac{\partial 	(\mathbf{B}_l \mathbf{Q} \mathbf{A}_l \mathbf{F}_l \mathbf{s}_{g,l})}{\partial \mathbf{v}[j]}$  yields:
	\begin{align}	
		\label{second_term}
		&\Bigg\{ \bigg(\frac{\partial 	(\mathbf{B}_l \mathbf{Q} \mathbf{A}_l \mathbf{F}_l \mathbf{s}_{\kappa,l})}{\partial \mathbf{u}[i]}\bigg)^\mathrm{H} 	\frac{\partial 	(\mathbf{B}_l \mathbf{Q} \mathbf{A}_l \mathbf{F}_l \mathbf{s}_{g,l})}{\partial \mathbf{v}[j]}\Bigg\} \notag \\ 
		&=
		\Big\{
		\mathbf{e}_j^\mathrm{T} \mathbf{Q} \mathbf{A}_l \mathbf{F}_l\mathbf{s}_{g,l} 	\mathbf{s}_{\kappa,l}^\mathrm{H}\mathbf{F}_l^\mathrm{H} (\mathbf{A}_l)^\mathrm{H} \mathbf{Q}^\mathrm{H} \mathbf{e}_i \mathbf{e}_i^\mathrm{T} (\dot{\mathbf{B}_l}_u)^\mathrm{H} \dot{\mathbf{B}_l}_v \mathbf{e}_j\notag \\ 
		&+ 
		\mathbf{e}_j^\mathrm{T} \mathbf{Q}(\dot{\mathbf{A}_l}_v) \mathbf{F}_l\mathbf{s}_{g,l} 	\mathbf{s}_{\kappa,l}^\mathrm{H} \mathbf{F}_l^\mathrm{H} (\mathbf{A}_l)^\mathrm{H} \mathbf{Q}^\mathrm{H} \mathbf{e}_i \mathbf{e}_i^\mathrm{T} (\dot{\mathbf{B}_l}_u)^\mathrm{H} 	\mathbf{B}_l  \mathbf{e}_j\notag \\ 
		&+ 
		\mathbf{e}_j^\mathrm{T} \mathbf{Q} \mathbf{A}_l \mathbf{F}_l\mathbf{s}_{g,l} 	\mathbf{s}_{\kappa,l}^\mathrm{H} \mathbf{F}_l^\mathrm{H} (\dot{\mathbf{A}_l}_u)^\mathrm{H} \mathbf{e}_i \mathbf{e}_i^\mathrm{T} \mathbf{Q}^\mathrm{H} (\mathbf{B}_l)^\mathrm{H}\dot{\mathbf{B}_l}_v   \mathbf{e}_j\notag \\ 
		&+ 
		\mathbf{e}_j^\mathrm{T} \mathbf{Q} (\dot{\mathbf{A}_l}_v) \mathbf{F}_l \mathbf{s}_{g,l} 	\mathbf{s}_{\kappa,l}^\mathrm{H} \mathbf{F}_l^\mathrm{H} (\dot{\mathbf{A}_l}_u)^\mathrm{H} \mathbf{e}_i \mathbf{e}_i^\mathrm{T} \mathbf{Q}^\mathrm{H} (\mathbf{B}_l)^\mathrm{H} 	\mathbf{B}_l\mathbf{e}_j
		\Big\}.
	\end{align}
	Accordingly, the expression for $\mathcal{I}_{\mathbf{u}[i] \mathbf{v}[j]}$ is given by:
	\begin{align}
		&\mathcal{I}_{\mathbf{u}[i] \mathbf{v}[j]} \nonumber\\
		=&\frac{P}{\sigma_\mathrm{r}^2}\Re\Bigg\{\sum_{l=1}^{L} \nonumber
		\mathbf{e}_i^\mathrm{T} (\dot{\mathbf{B}_l}_u)^\mathrm{H} \dot{\mathbf{B}_l}_v  \mathbf{e}_j \mathbf{e}_j^\mathrm{T} \mathbf{Q} \mathbf{A}_l \mathbf{F}_l\mathbf{D}_l \mathbf{D}_l^\mathrm{H} \mathbf{F}_l^\mathrm{H} (\mathbf{A}_l)^\mathrm{H}\mathbf{Q}^\mathrm{H} \mathbf{e}_i\\\nonumber
		+& 
		\mathbf{e}_i^\mathrm{T} (\dot{\mathbf{B}_l}_u)^\mathrm{H} \mathbf{B}_l  \mathbf{e}_j \mathbf{e}_j^\mathrm{T} \mathbf{Q} (\dot{\mathbf{A}_l}_v) \mathbf{F}_l\mathbf{D}_l \mathbf{D}_l^\mathrm{H} \mathbf{F}_l^\mathrm{H} (\mathbf{A}_l)^\mathrm{H} \mathbf{Q}^\mathrm{H} \mathbf{e}_i  \\\nonumber
		+& 
		\mathbf{e}_i^\mathrm{T}  (\mathbf{B}_l)^\mathrm{H} \dot{\mathbf{B}_l}_v \mathbf{e}_j \mathbf{e}_j^\mathrm{T} \mathbf{Q} \mathbf{A}_l \mathbf{F}_l\mathbf{D}_l \mathbf{D}_l^\mathrm{H} \mathbf{F}_l^\mathrm{H} (\dot{\mathbf{A}_l}_u)^\mathrm{H} \mathbf{Q}^\mathrm{H} \mathbf{e}_i\\\nonumber
		+ &
		\mathbf{e}_i^\mathrm{T}  (\mathbf{B}_l)^\mathrm{H} \mathbf{B}_l  \mathbf{e}_j \mathbf{e}_j^\mathrm{T} \mathbf{Q} (\dot{\mathbf{A}_l}_v) \mathbf{F}_l\mathbf{D}_l \mathbf{D}_l^\mathrm{H}\mathbf{F}_l^\mathrm{H}(\dot{\mathbf{A}_l}_u)^\mathrm{H} \mathbf{Q}^\mathrm{H} \mathbf{e}_i  \notag \\  \notag 
		+& \sum_{l=1}^{L} \sum_{g=1}^{NN_\mathrm{RF}}\nonumber \sum_{\kappa=1}^{NN_\mathrm{RF}}\\\nonumber
		&\tilde{s}_{\kappa g,l} \Big(\mathbf{e}_j^\mathrm{T} \mathbf{Q} \mathbf{A}_l \mathbf{F}_l\mathbf{s}_{g,l} 	\mathbf{s}_{\kappa,l}^\mathrm{H}\mathbf{F}_l^\mathrm{H} (\mathbf{A}_l)^\mathrm{H} \mathbf{Q}^\mathrm{H} \mathbf{e}_i \mathbf{e}_i^\mathrm{T} (\dot{\mathbf{B}_l}_u)^\mathrm{H} \dot{\mathbf{B}_l}_v \mathbf{e}_j\\\nonumber
		+ &
		\mathbf{e}_j^\mathrm{T} \mathbf{Q}(\dot{\mathbf{A}_l}_v) \mathbf{F}_l\mathbf{s}_{g,l} 	\mathbf{s}_{\kappa,l}^\mathrm{H} \mathbf{F}_l^\mathrm{H} (\mathbf{A}_l)^\mathrm{H} \mathbf{Q}^\mathrm{H} \mathbf{e}_i \mathbf{e}_i^\mathrm{T} (\dot{\mathbf{B}_l}_u)^\mathrm{H} 	\mathbf{B}_l  \mathbf{e}_j  \notag \\
		+&
		\mathbf{e}_j^\mathrm{T} \mathbf{Q} \mathbf{A}_l \mathbf{F}_l\mathbf{s}_{g,l} \nonumber
		\mathbf{s}_{\kappa,l}^\mathrm{H} \mathbf{F}_l^\mathrm{H} (\dot{\mathbf{A}_l}_u)^\mathrm{H}\mathbf{Q}^\mathrm{H} \mathbf{e}_i \mathbf{e}_i^\mathrm{T}  (\mathbf{B}_l)^\mathrm{H}\dot{\mathbf{B}_l}_v  \mathbf{e}_j\\
		+&
		\mathbf{e}_j^\mathrm{T} \mathbf{Q} (\dot{\mathbf{A}_l}_v) \mathbf{F}_l \mathbf{s}_{g,l} 	\mathbf{s}_{\kappa,l}^\mathrm{H} \mathbf{F}_l^\mathrm{H} (\dot{\mathbf{A}_l}_u)^\mathrm{H}\mathbf{Q}^\mathrm{H} \mathbf{e}_i \mathbf{e}_i^\mathrm{T}  (\mathbf{B}_l)^\mathrm{H} 	\mathbf{B}_l\mathbf{e}_j \Big) \Bigg\}. 
	\end{align}
	The resulting subblock $\mathcal{I}_{\mathbf{u} \mathbf{v}}$ is given by
	\begin{align}
		&\mathcal{I}_{\mathbf{u} \mathbf{v}} \notag \\
		=&\frac{P}{\sigma_\mathrm{r}^2}\Bigg\{\sum_{l=1}^{L}
		\bigg((\dot{\mathbf{B}_l}_\mathbf{u})^\mathrm{H} \dot{\mathbf{B}_l}_\mathbf{v} \bigg)  \odot \bigg(\mathbf{Q} \mathbf{A}_l \mathbf{F}_l\mathbf{D}_l \mathbf{D}_l^\mathrm{H} \mathbf{F}_l^\mathrm{H} (\mathbf{A}_l)^\mathrm{H}\mathbf{Q}^\mathrm{H}  \bigg)^{\mathrm{T}}\notag \\
		+ &
		\bigg( (\dot{\mathbf{B}_l}_\mathbf{u})^\mathrm{H} \mathbf{B}_l \bigg)    \odot \bigg( \mathbf{Q} (\dot{\mathbf{A}_l}_\mathbf{v}) \mathbf{F}_l\mathbf{D}_l \mathbf{D}_l^\mathrm{H} \mathbf{F}_l^\mathrm{H} (\mathbf{A}_l)^\mathrm{H} \mathbf{Q}^\mathrm{H}\bigg)^{\mathrm{T}}  \notag \\ 
		+ &\nonumber
		\bigg( (\mathbf{B}_l)^\mathrm{H} \dot{\mathbf{B}_l}_\mathbf{v}  \bigg) \odot \bigg(   \mathbf{Q} \mathbf{A}_l \mathbf{F}_l\mathbf{D}_l \mathbf{D}_l^\mathrm{H} \mathbf{F}_l^\mathrm{H} (\dot{\mathbf{A}_l}_\mathbf{u})^\mathrm{H} \mathbf{Q}^\mathrm{H}  \bigg)^{\mathrm{T}} \notag \\
		+ &
		\bigg( (\mathbf{B}_l)^\mathrm{H} \mathbf{B}_l  \bigg) \odot  \bigg( \mathbf{Q} (\dot{\mathbf{A}_l}_\mathbf{v}) \mathbf{F}_l\mathbf{D}_l \mathbf{D}_l^\mathrm{H}\mathbf{F}_l^\mathrm{H}(\dot{\mathbf{A}_l}_\mathbf{u})^\mathrm{H} \mathbf{Q}^\mathrm{H}\bigg)^{\mathrm{T}}   \notag \\ 
		+&  \sum_{l=1}^{L} \sum_{g=1}^{NN_\mathrm{RF}} \sum_{\kappa=1}^{NN_\mathrm{RF}}\notag \\
		&\tilde{s}_{\kappa g,l}  \bigg( \bigg( \mathbf{Q} \mathbf{A}_l \mathbf{F}_l\mathbf{s}_{g,l} 	\mathbf{s}_{\kappa,l}^\mathrm{H}\mathbf{F}_l^\mathrm{H} (\mathbf{A}_l)^\mathrm{H} \mathbf{Q}^\mathrm{H}\bigg)^\mathrm{T} \odot \bigg( (\dot{\mathbf{B}_l}_\mathbf{u})^\mathrm{H} \dot{\mathbf{B}_l}_\mathbf{v} \bigg)\notag \\ 
		+& 
		\bigg( \mathbf{Q}(\dot{\mathbf{A}_l}_\mathbf{v}) \mathbf{F}_l\mathbf{s}_{g,l} 	\mathbf{s}_{\kappa,l}^\mathrm{H} \mathbf{F}_l^\mathrm{H} (\mathbf{A}_l)^\mathrm{H} \mathbf{Q}^\mathrm{H} \bigg)^\mathrm{T}  \odot \bigg( (\dot{\mathbf{B}_l}_\mathbf{u})^\mathrm{H} 	\mathbf{B}_l \bigg) \notag \\
		+& \notag
		\bigg(  \mathbf{Q} \mathbf{A}_l \mathbf{F}_l\mathbf{s}_{g,l} 	\mathbf{s}_{\kappa,l}^\mathrm{H} \mathbf{F}_l^\mathrm{H} (\dot{\mathbf{A}_l}_\mathbf{u})^\mathrm{H}\mathbf{Q}^\mathrm{H} \bigg)^\mathrm{T}  \odot \bigg(  (\mathbf{B}_l)^\mathrm{H}\dot{\mathbf{B}_l}_\mathbf{v}   \bigg )\notag \\ 
		+&
		\bigg(  \mathbf{Q} (\dot{\mathbf{A}_l}_\mathbf{v}) \mathbf{F}_l \mathbf{s}_{g,l} 	\mathbf{s}_{\kappa,l}^\mathrm{H} \mathbf{F}_l^\mathrm{H}  (\dot{\mathbf{A}_l}_\mathbf{u})^\mathrm{H}\mathbf{Q}^\mathrm{H} \bigg)^\mathrm{T}   \odot \bigg(  (\mathbf{B}_l)^\mathrm{H} 	\mathbf{B}_l\bigg)  \bigg) \Bigg\}.  
	\end{align}
	Following a similar derivation, the terms \( \mathcal{I}_{\mathbf{u} \mathbf{q}} \) and  \( \mathcal{I}_{\mathbf{q} \mathbf{q}} \) can be obtained as follows:
	\begin{align}
		\label{fuq}
		&\mathcal{I}_{\mathbf{u}\mathbf{q}} \notag \\
		=&\frac{P}{\sigma_\mathrm{r}^2}\Bigg\{\sum_{l=1}^{L} 
		\bigg(
		(\dot{\mathbf{B}_l}_u)^\mathrm{H} \mathbf{B}_l\bigg) \odot\bigg(
		\mathbf{A}_l \mathbf{F}_l \mathbf{D}_{l}  \mathbf{D}_l^\mathrm{H} \mathbf{F}_l^\mathrm{H}(\mathbf{A}_l)^\mathrm{H} \mathbf{Q}^\mathrm{H} \bigg)^{\mathrm{T}}\notag \\
		+& 
		\bigg(	 (\mathbf{B}_l)^\mathrm{H} \mathbf{B}_l\bigg)\odot 	\bigg(\mathbf{A}_l \mathbf{F}_l\mathbf{D}_{l} \mathbf{D}_l^\mathrm{H}  \mathbf{F}_l^\mathrm{H}  (\dot{\mathbf{A}_l}_u)^\mathrm{H} \mathbf{Q}^\mathrm{H}\bigg)^{\mathrm{T}} \notag \\
		+&\sum_{l=1}^{L} \sum_{g=1}^{NN_\mathrm{RF}} \sum_{\kappa=1}^{NN_\mathrm{RF}} \notag \\
		&\tilde{s}_{\kappa g,l} 	\bigg(
		\bigg( \mathbf{A}_l\mathbf{F}_l\mathbf{s}_{g,l} 
		\mathbf{s}_{\kappa,l}^{\mathrm{H}} \mathbf{F}_l^{\mathrm{H}} \left(\mathbf{A}_l\right)^{\mathrm{H}} \mathbf{Q}^{\mathrm{H}} \bigg)^\mathrm{T}  
		\odot \bigg(  \left(\dot{\mathbf{B}_l}_u\right)^{\mathrm{H}}\mathbf{B}_l\bigg) \notag \\
		+ & \bigg( \mathbf{A}_l\mathbf{F}_l\mathbf{s}_{g,l}  \mathbf{s}_{\kappa,l}^{\mathrm{H}} \mathbf{F}_l^{\mathrm{H}} \left(\dot{\mathbf{A}_l}_u\right)^{\mathrm{H}}\mathbf{Q}^{\mathrm{H}} \bigg)^\mathrm{T}  \odot \bigg(   \left(\mathbf{B}_l\right)^{\mathrm{H}}\mathbf{B}_l\bigg) \bigg)\Bigg\},
	\end{align}
	
	\begin{align}
		\label{fqq}
		\mathcal{I}_{\mathbf{q}\mathbf{q}}=&\frac{P}{\sigma_\mathrm{r}^2} \Bigg\{\sum_{i=1}^{L} 
		\bigg( (\mathbf{B}_l)^\mathrm{H} \mathbf{B}_l \bigg) \odot  \bigg(\mathbf{A}_l \mathbf{F}_l\mathbf{D}_{l}\mathbf{D}_l^\mathrm{H} \mathbf{F}_l^\mathrm{H}(\mathbf{A}_l)^\mathrm{H}\bigg)^\mathrm{T} \notag \\
		+& \sum_{i=1}^{L} \sum_{g=1}^{NN_\mathrm{RF}} \sum_{\kappa=1}^{NN_\mathrm{RF}}\tilde{s}_{\kappa g,l} 	\bigg( \mathbf{A}_l \mathbf{F}_l\mathbf{s}_{g,l} \mathbf{s}_{\kappa,l}^\mathrm{H}  \mathbf{F}_l^\mathrm{H}  (\mathbf{A}_l)^\mathrm{H} \bigg)^\mathrm{T}  \notag \\
		\odot& \bigg(   (\mathbf{B}_l)^\mathrm{H} \mathbf{B}_l  \bigg) \Bigg\}.
	\end{align}
	}

	
	\ifCLASSOPTIONcaptionsoff
	\newpage
	\fi
	\bibliographystyle{IEEEtran}
	\bibliography{reference}
\end{document}